\def\supplementaryMaterial{0}	%
\def\tightOnSpace{0}			%
\newif\ifsubmission
\newif\ifeprint
\def\tightOnSpace{0}
\def\supplementaryMaterial{0}
\newenvironment{reusefigure}[2][htbp]
  {\addtocounter{figure}{-1}%
   \renewcommand{\addcontentsline}[3]{}%
   \begin{figure}[#1]}
  {\end{figure}}
\definecolor{linkcolor}{rgb}{0.65,0,0}
\definecolor{citecolor}{rgb}{0,0.65,0}
\definecolor{urlcolor}{rgb}{0,0,0.65}
		\crefname{appendix}{Suppl. Mat.}{Supplementary Material} 
		\crefname{appendix}{Supplementary Material}{Supplementary Material} 
\newcommand{\eps}{\varepsilon}
\newcommand{\A}{\ensuremath{\Ad{A}}\xspace}
\newcommand{\B}{\ensuremath{\Ad{B}}\xspace}
\newcommand{\C}{\ensuremath{\Ad{C}}\xspace}
\newcommand{\M}{\ensuremath{\Ad{M}}\xspace}
\newcommand{\succfun}{\ensuremath{\mathrm{Succ}}\xspace}
\newcommand{\succf}[3]{\ensuremath{{\mathrm{Succ}}^{{#1}}_{#2}\left( {#3} 
\right) }}
\newcommand{\nmmetcr}{\ensuremath{\mathsf{nM\text{-}eTCR}}\xspace}
\newcommand{\mmetcr}{\ensuremath{\mathsf{M\text{-}eTCR}}\xspace}
\newcommand{\succnmmetcr}[2]{\ensuremath{\succf{\text{\nmmetcr}}{#1}{#2}}}
\newcommand{\succmmetcr}[2]{\ensuremath{\succf{\text{\mmetcr}}{#1}{#2}}}
\newcommand{\bbox}{\ensuremath{\mathsf{Box}}\xspace}
\newcommand{\pr}{\ensuremath{\mathsf{Pr}}} 
\newcommand{\id}{\ensuremath{\textrm{id}}\xspace}
\newcommand{\exec}{\ensuremath{\leftarrow}}
\newcommand{\qsign}{\ensuremath{q_s}\xspace}
\newcommand{\qhash}{\ensuremath{q_{\RO{H}}}\xspace}
\newcommand{\todo}[1]{
{\color{red} \bf TODO: #1}
}
\def\subheading#1{\medskip\noindent{\boldmath\textbf{#1}}~\ignorespaces}
\definecolor{dgreen}{rgb}{.1,.5,.1}
\newcommand{\proj}[1]{\ket{#1}\!\!\bra{#1}}
\newcommand{\Tr}{\mathrm{Tr}}
\newcommand{\kb}[1]{\ket{#1}\bra{#1}}
\DeclareMathOperator*{\E}{\mathbb E}
\newcommand{\poly}{\mathsf{poly}}
\newcommand{\norm}[1]{\left|\left| #1 \right|\right|}
\newmdenv[backgroundcolor=orange]{frameForAuthnotes}
\newcommand\bits{\{0,1\}}								%
\newcommand{\uni}{\leftarrow_\$}						%
\DeclareMathOperator\supp{supp}							%
\DeclareMathOperator*{\Exp}{\E}
\newcommand{\Adv}{\mathrm{Adv}}						%
\newcommand{\Ad}[1]{\ensuremath{\mathsf{#1}}\xspace}	%
\newcommand{\Time}{\textnormal{Time}}					%
\newcommand{\List}[1]{\mathfrak{L}_{#1}}				%
\newcommand{\param}{\textnormal{par}}					%
\newcommand{\state}{\textnormal{st}}
\newcommand{\inputVar}{\textnormal{inp}}
\newcommand{\mathsc}[1]{\text{\textsc{#1}}}
\newcommand{\heading}[1]{{\vspace{1ex}\noindent\sc{#1}}}
\mathchardef\ordinarycolon\mathcode`\:				%
\newcommand{\FIND}{\mathsc{FIND}}
\newcommand{\RO}[1]{\mathsf{{#1}}}					%
\newcommand{\qRO}[1]{{\ket{\RO{#1}}}}				%
\newcommand{\pcfor}{\textbf{for }}
\newcommand{\pcif}{\textbf{if }}
\newcommand{\pcelse}{\textbf{else }}
\newcommand{\pcand}{\textbf{and }}
\newcommand{\pcreturn}{\textbf{return }}
\newcommand{\gcom}[1]{\hfill $\sslash$#1}				%
\newcommand\boxedFull[1]{\tikz [baseline=(boxed word.base)] \node (boxed word) [draw, rectangle, line cap=round] {#1};}
\newcommand\dashboxed[1]{\tikz [baseline=(boxed word.base)] \node (boxed word) [draw, rectangle, dashed, line cap=round] {#1};}
\newcommand{\before}[1]{\the\numexpr\value{#1}-1\relax}	%
\newcommand{\gameDist}[2]{|\Pr[G_{\before{#1}}^{\Ad{#2}}=1] - \Pr[G_{\the\numexpr\value{#1}\relax}^{\Ad{#2}}=1]|}		%
\newcommand{\Measure}{\mathsc{Measure}}					%
\newcommand{\adaptiveReprogramming}{\mathsf{AR}}		%
\newcommand{\numberReprogrammingInstances}{R}
\newcommand{\indexReprogrammingInstances}{r}
\newcommand{\gameRepro}{\mathsc{Repro}}					%
\newcommand{\Reprogram}{\mathsc{Reprogram}\xspace}
\newcommand{\xFirst}{x}
\newcommand{\xSecond}{x'}
\newcommand{\XFirst}{X}
\newcommand{\XSecond}{X'}
\newcommand{\SignatureScheme}{\ensuremath{\mathsf{SIG}}\xspace}
\newcommand{\sigS}{\SignatureScheme}
\newcommand{\KG}{\ensuremath{\mathsf{KG}}\xspace}
\newcommand{\Sign}{\ensuremath{\mathsf{Sign}}\xspace}
\newcommand{\VerifySig}{\ensuremath{\mathsf{Vrfy}}\xspace}
\newcommand{\SKSpace}{{\mathcal{SK}}}					%
\newcommand{\MSpace}{{\mathcal{M}}}						%
\newcommand{\RSpace}{{\mathcal{R}}}						%
\newcommand{\RSpaceSign}{\RSpace_{\Sign}}				%
\newcommand{\FS}{\mathsf{FS}}							%
\newcommand{\ROChallenge}{\RO{H}}						%
\newcommand{\hts}{\ensuremath{\mathrm{HaS}}\xspace}%
\newcommand{\sigSp}{\ensuremath{\SignatureScheme'}\xspace}
\newcommand{\MSpaceAfterHTS}{{\mathcal{M}'}}			%
\newcommand{\messageAfterHTS}{m'}						%
\newcommand{\MSpaceBeforeHTS}{\MSpace}					%
\newcommand{\messageBeforeHTS}{m}						%
\newcommand{\signatureAfterHTS}{\sigma'}				%
\newcommand{\UF}{\ensuremath{\mathsf{UF}}}
\newcommand{\KOA}{\ensuremath{\mathsf{KOA}}\xspace}
\newcommand{\CMAZero}{\ensuremath{\CMA_0}\xspace}
\newcommand{\CMA}{\ensuremath{\mathsf{CMA}}\xspace}
\newcommand{\UFCMAZero}{\ensuremath{\UF\text{-}\CMAZero}\xspace}
\newcommand{\UFKOA}{\ensuremath{\UF\text{-}\KOA}\xspace}
\newcommand{\UFCMA}{\ensuremath{\UF\text{-}\CMA}\xspace}
\newcommand{\eufrma}{\ensuremath{\mathsf{UF\text{-}RMA}}\xspace}
\newcommand{\eufcma}{\ensuremath{\mathsf{UF\text{-}CMA}}\xspace}%
\newcommand{\RMA}{\ensuremath{\mathsf{RMA}}\xspace}
\newcommand{\oracleSIGN}{\textsc{SIGN}\xspace}
\newcommand{\getSignature}{\ensuremath{\mathsf{getSignature}}}
\newcommand{\ListOfMessages}{\List{\MSpace}}
\newcommand{\TrafoHedging}{{\mathsf{R2H}}}
\newcommand{\SigSchemeHedged}{{\SignatureScheme'}}
\newcommand{\SigningHedged}{\Sign'}
\newcommand{\ROforHedging}{\RO{G}}
\newcommand{\nonce}{n}	
\newcommand{\NonceSpace}{\mathcal{N}}
\newcommand{\simSignature}{\mathsf{simSignature}}
\newcommand{\SetFaultFunctions}{\mathcal{F}}					%
\newcommand{\faultCMA}%
	{{\mathsf{F}_\SetFaultFunctions\text{-}\CMA}}
\newcommand{\UFfaultCMA}{{\UF\text{-}\faultCMA}}
\newcommand{\oracleFaultSIGN}{\mathsf{FAULTSIGN}}
\newcommand{\faultCMADifferentSet}[1]%
	{{\mathsf{F}_{#1}\text{-}\CMA}}
\newcommand{\UFfaultCMADifferentSet}[1] {{\UF\text{-}\faultCMADifferentSet{#1}}}
\newcommand{\nonceFaultCMA}{{\mathsf{N}\text{-}\faultCMA}}
\newcommand{\nonceFaultCMADifferentSet}[1]{{\mathsf{N}\text{-}\faultCMADifferentSet{#1}}}
\newcommand{\UFnonceFaultCMA}{{\UF\text{-}\nonceFaultCMA}}
\newcommand{\UFnonceFaultCMADifferentSet}[1]{{\UF\text{-}{\nonceFaultCMADifferentSet{#1}}}}
\newcommand{\oracleNonceFaultSIGN}{\mathsf{N}\text{-}\mathsf{FAULTSIGN}} %
\newcommand{\IdScheme}{\mathsf{ID}}
\newcommand{\IG}{\mathsf{IG}}
\newcommand{\Commit}{\mathsf{Commit}}
\newcommand{\Respond}{\mathsf{Respond}}
\newcommand{\VerifyId}{\mathsf{V}}
\newcommand{\CommSpace}{\mathcal{W}}				%
\newcommand{\ChallengeSpace}{\mathcal{C}}			%
\newcommand{\ResponseSpace}{\mathcal{Z}}			%
\newcommand{\pk}{\ensuremath{\mathit{pk}}\xspace}
\newcommand{\sk}{\ensuremath{\mathit{sk}}\xspace}
\newcommand{\commitment}{w}
\newcommand{\challenge}{c}
\newcommand{\response}{z}
\newcommand{\transcript}{trans}
\newcommand{\maxEntropyCommit}{\gamma(\Commit)}			
\newcommand{\HVZK}{\mathsf{HVZK}}
\newcommand{\specialHVZK}{\mathsf{sHVZK}}			%
\newcommand{\Sim}{\mathsf{Sim}}						%
\newcommand{\boundStatisticalHVZK}{\Delta_{\HVZK}}
\newcommand{\boundSpecialStatisticalHVZK}{\Delta_{\specialHVZK}}
\newcommand{\getTrans}{\mathsf{getTrans}}			%
\newcommand{\oraclegetTrans}{\mathsf{getTrans}}		%
\newcommand{\getTransChallenge}{\mathsf{getTransChall}}	%
\newcommand{\AdversaryHVZK}{\Ad{C}}					%
\newenvironment{nicodemus}[1][\thenicolinenr]{%
	\begin{enumerate}[
		topsep=0ex,
		label=\nicolinenrformat\PaddingUp*,
		ref=\nicorefprefix\PaddingUp*,
		align=right,
		leftmargin=0em,
		itemindent=!,
		labelindent=0em,
		labelwidth=\nicolinenrwidth,
		labelsep=\nicolinenrsep,
		listparindent=\parindent,
		noitemsep,
		]%
		\setcounter{enumi}{#1}%
		\addtocounter{enumi}{-1}%
	}{%
	\end{enumerate}%
	\addtocounter{enumi}{1}%
	\setcounter{nicolinenr}{\theenumi}%
}
\newcommand{\INDCPA}{\mathsf{IND}\text{-}\mathsf{CPA}}
\titlerunning{Tight adaptive reprogramming in the QROM}
\title{Tight adaptive reprogramming in the QROM}
\author{\vspace{-0.5in}}
\institute{}
\author{
  Alex B. Grilo\inst{1}
\and
  Kathrin Hövelmanns\inst{2}
\and
  Andreas Hülsing\inst{3}
\and
  Christian Majenz\inst{4}
}
\institute{
  Sorbonne Universit\'{e}, CNRS, LIP6, France\\
\and
Ruhr-Universit\"at Bochum, Germany \\ %
\and
Eindhoven University of Technology,
The Netherlands\\
\and
Centrum Wiskunde \& Informatica and QuSoft, Amsterdam, The Netherlands \\
\email{authors-qrom-reprog@huelsing.net}
}
\authorrunning{A. B. Grilo, K. Hövelmanns, A. Hülsing, C. Majenz}
\begin{document}

\maketitle
	
\begin{abstract}
The random oracle model (ROM) enjoys widespread popularity, mostly because it tends to allow for \emph{tight} and  \emph{conceptually simple} proofs where provable security in the standard model is elusive or costly. While being the adequate replacement of the ROM in the post-quantum security setting, the quantum-accessible random oracle model (QROM) has thus far failed to provide these advantages in many settings. In this work, we focus on \emph{adaptive reprogrammability}, a feature of the ROM enabling tight and simple proofs in many settings. We show that the straightforward quantum-accessible generalization of adaptive reprogramming is feasible by proving a bound on the adversarial advantage in distinguishing whether a random oracle has been reprogrammed or not. We show that our bound is tight by providing a matching attack. We go on to demonstrate that our technique recovers the mentioned advantages of the ROM in three QROM applications:  1) We give a tighter proof of security of the message compression routine as used by XMSS.
2) We show that the standard ROM proof of chosen-message security for Fiat-Shamir signatures can be lifted to the QROM, straightforwardly, achieving a tighter reduction than previously known.
3) We give the first QROM proof of security against fault injection and nonce attacks for the hedged Fiat-Shamir transform.
\\[6pt]
  \textbf{Keywords:} Post-quantum security, QROM, adaptive reprogramming, digital signature, Fiat-Shamir transform, hedged Fiat-Shamir, XMSS  
\end{abstract}

\ifsubmission
\else
\begingroup
\makeatletter
\def\@thefnmark{} \@footnotetext{\relax
Part of this work was done while A.G. was affiliated to CWI and QuSoft and part of it was done while A.G. was visiting the Simons Institute for the Theory of Computing.
K.H. was supported by the European Union PROMETHEUS project (Horizon 2020 Research and Innovation Program, grant 780701) and the Deutsche Forschungsgemeinschaft (DFG, German Research Foundation) under Germany’s Excellence Strategy (EXC 2092 CASA, 390781972).
C.M. was funded by a NWO VENI grant (Project No. VI.Veni.192.159).
Date: \today}
\endgroup
\fi

\section{Introduction}
Since its introduction, the Random oracle model (ROM) has allowed cryptographers to prove efficient practical cryptosystems secure 
for which proofs in the standard model have been elusive.
In general, the ROM allows for proofs that are conceptually simpler and often tighter than standard model security proofs. 

With the advent of post-quantum cryptography, and the introduction of quantum adversaries,
the ROM had to be generalized:
In this scenario, a quantum adversary interacts with a non-quantum network, meaning that "online" primitives (like signing) stay classical,
while the adversary can compute all "offline" primitives (like hash functions) on its own, and hence, in superposition.
To account for these stronger capabilities, the quantum-accessible ROM (QROM) was introduced \cite{AC:BDFLSZ11}.
While successfully fixing the definitional gap, 
the QROM does not generally come with the advantages of its classical counterpart:
\begin{itemize}
	\item[-] \emph{Lack of conceptual simplicity.}
		QROM proofs are extremely complex for various reasons.
		One reason is that they require some understanding of quantum information theory.
		More important, however, is the fact that many of the useful properties of the ROM (like preimage awareness and adaptive programmability)
		are not known to translate directly to the QROM.
	\item[-] \emph{Tightness.} Many primitives that come with tight security proofs in the ROM are not known to be supported by tight proofs in the QROM.
		For example, there has been an ongoing effort \cite{EC:SaiXagYam18,C:JZCWM18,PKC:JiaZhaMa19,TCC:BHHHP19,EC:KSSSS20,PKC:HKSU20} to give tighter QROM proofs for the well-known Fujisaki-Okamoto transformation \cite{C:FujOka99,JC:FujOka13},
		which is proven tightly secure in the ROM as long as the underlying scheme fulfills $\INDCPA$ security \cite{TCC:HofHovKil17}.
\end{itemize}

In many cases, we expect certain generic attacks to only differ from the ROM counterparts by a square-root factor in the required number of queries if the attack involves a search problem,
or no significant factor in the case of guessing. %
Hence, it was conjectured that it might be sufficient to prove security in the ROM, and then add a square-root factor for search problems.
However, recent results \cite{SeparationQROM} demonstrate a separation of ROM and QROM, showing that this conjecture does not hold true in general,
as there exist schemes which are provably secure in the ROM and insecure in the QROM. As a consequence, a QROM proof is crucial to establish confidence in a post-quantum cryptosystem.\footnote{Unless, of course, a standard model proof is available.}

\heading{Adaptive programmability.}
A desirable property of the (classical) ROM is that any oracle value $\RO{O}(x)$ can be chosen
when $\RO{O}$ is queried on $x$ for the first time (lazy-sampling).
This fact is often exploited by a reduction simulating a security game without knowledge of some secret information.
Here, an adversary $\Ad{A}$ will not recognize the reprogramming of $\RO{O}(x)$ as long as the new value is  uniformly distributed and consistent with the rest of $\Ad{A}$'s view.
This property is called \textit{adaptive programmability}.

The ability to query an oracle in superposition renders this formerly simple approach more involved, similar to the difficulties arising from the question how to extract classical preimages from a quantum query (preimage awareness) \cite{EC:Unruh14,C:AmbHamUnr19,TCC:BHHHP19,EC:KSSSS20,C:Zhandry19,C:DFMS19,C:LiuZha19,BL20,CMP20}.
Intuitively, a query in superposition can be viewed as a query that might contain all input values at once.
Already the first answer of $\RO{O}$ might hence contain information about every value $\RO{O}(x)$
that might need to be reprogrammed as the game proceeds.
It hence was not clear whether it is possible to adaptively reprogram a quantum random oracle without causing a change in the adversary's view.

Until recently, both properties only had extremely non-tight variants in the QROM.
For preimage awareness, it was essentially necessary to randomly guess the right query and measure it (with an unavoidable loss of at least $1/q$ for $q$ queries, and the additional disadvantage of  potentially rendering the adversary's output unusable due to measurement disturbance).
In a recent breakthrough result, Zhandry developed the compressed oracle technique that provides preimage awareness~\cite{C:Zhandry19} in many settings. 
For adaptive reprogramming, variants of Unruh's one-way-to-hiding lemma allowed to prove bounds but only with a square-root loss in the entropy of the reprogramming position   \cite{C:Unruh14,eaton2015making,PKC:HulRijSon16}.

In some cases \cite{AC:BDFLSZ11,EC:KilLyuSch18,EC:SaiXagYam18,PKC:HKSU20},
reprogramming could even be avoided by giving a proof that rendered the oracle ``a-priori consistent'',
which is also called a ``history-free'' proof:
In this approach, the oracle is completely redefined in a way such that it is enforced to be \emph{a priori} consistent with the rest of an adversary's view,
meaning that it is redefined before execution of the adversary, and on \emph{all} possible input values.
Unfortunately, it is not always clear whether it is possible to lift a classical proof to the QROM with this strategy.
Even if it is, the ``a-priori'' approach usually leads to conceptually more complicated proofs.
More importantly, it can even lead to reductions that are non-tight with respect to runtime, and may necessitate stronger or additional requirements like, e.g., the statistical counterpart of a property that was only used in its computational variant in the ROM. %
An example is the CMA-security proof for Fiat-Shamir signatures that was given in \cite{EC:KilLyuSch18}.

Hence, in this work we are interested in the question:
\begin{quote}
	{\bf 
		Can we \emph{tightly} prove that adaptive reprogramming can also be done in the quantum random oracle model?}
\end{quote}

\subheading{Our contribution.}
For common use cases in the context of post-quantum cryptography, this work answers the question above in the affirmative.
In more detail, we present a tool for adaptive reprogramming that comes with a tight bound,
supposing that the reprogramming positions hold sufficiently large entropy,
and reprogramming is triggered by classical queries to an oracle that is provided by the security game (e.g., a signing oracle).
These preconditions are usually met in (Q)ROM reductions: The reprogramming is usually triggered by adversarial signature or decryption queries,
which remain classical in the post-quantum setting, as the oracles represent honest users.

While we prove a very general lemma, using the simplest variant of the superposition oracle technique \cite{C:Zhandry19}, we present two corollaries, 
tailored to cases like a) hash-and-sign with randomized hashing and b) Fiat-Shamir signatures.
In both cases, reprogramming occurs at a position of which one part is an adversarially choosen string.
For a), the other part is a random string $z$, sampled by the reduction (simulating the signer).
For b), the other part is a commitment $\commitment$ chosen from a distribution with sufficient min-entropy, together with additional side-information.
In both cases, we manage to bound the distinguishing advantage of any adversary that makes $\qsign$ signing and $\qhash$ random oracle queries by 
$$1.5 \cdot \qsign \sqrt{\qhash \cdot 2^{-r}} \enspace ,$$
where $r$ is the length of $z$ for a), and the min-entropy of $\commitment$ for b). 

We then demonstrate the applicability of our tool, by giving 
\begin{itemize}
 \item a tighter proof for hash-and-sign applications leading a tighter proof for the message-compression as used by the hash-based signature scheme XMSS in RFC 8391~\cite{RFC8391} as a special case,
 \item a runtime-tight reduction of unforgeability under adaptive chosen message attacks ($\UFCMA$) to plain unforgeability ($\UFCMAZero$, sometimes denoted $\UFKOA$ or $\mathsf{UF}\text{-}\mathsf{NMA}$) for Fiat Shamir signatures.
 \item the first proof of fault resistance for the hedged Fiat-Shamir transform, recently proposed in~\cite{EC:AOTZ20}, in the post-quantum setting.
\end{itemize}

\heading{Hash-and-sign.}
As a first motivating and mostly self-contained application we analyze the hash-and-sign construction that takes a fixed-message-length signature scheme \sigS and turns it into a variable-message-length signature scheme \sigSp by first compressing the message using a hash function. We show that if \sigS is secure under random message attacks (\eufrma), \sigSp is secure under adaptively chosen message attacks (\eufcma). Then we show that along the same lines, we can tighten a recent security proof~\cite{XMSSembed} for message-compression as described for XMSS~\cite{PQCRYPTO:BucDahHul11} in RFC 8391. Our new bound shows that one can use random strings of half the length to randomize the message compression in a provably secure way.

\heading{The Fiat-Shamir transform.}
In \cref{sec:FiatShamirInQROM}, we show that if an identification scheme $\IdScheme$ is \underline{H}onest-\underline{V}erifier \underline{Z}ero-\underline{K}nowledge ($\HVZK$),
and if the resulting Fiat-Shamir signature scheme $\SignatureScheme := \FS[\IdScheme, \ROChallenge]$ furthermore possesses $\UFCMAZero$ security,
then $\SignatureScheme$ is also $\UFCMA$ secure, in the quantum random oracle model.
Here, $\UFCMAZero$ denotes the security notion in which the adversary only obtains the public key and has to forge a valid signature without access to a signing oracle.
While this statement was already proven in \cite{EC:KilLyuSch18},
we want to point out several advantages of our proof strategy and the resulting bounds. %
	\vspace{.1cm}

	\noindent{\bf Conceptual simplicity.}  
		A well-known proof strategy for $\HVZK, \UFCMAZero \Rightarrow \UFCMA$
		in the random oracle model (implicitly contained in \cite{EC:AFLT12})
		is to replace honest transcripts with simulated ones,
		and to render $\ROChallenge$ \emph{a-posteriori} consistent with the signing oracle
		during the proceedings of the game.
		I.e., $\ROChallenge(\commitment, m)$ is patched \textit{after} oracle $\oracleSIGN$ was queried on $m$.
		Applying our lemma, we observe that this approach actually works in the quantum setting as well.
		We obtain a very simple QROM proof that is congruent with its ROM counterpart.
		
		In \cite{EC:KilLyuSch18}, the issue of reprogramming quantum random oracle $\ROChallenge$ was circumvented by giving a history-free proof:
		In the proof, messages are tied to potential transcripts by generating the latter with message-dependent randomness, \textit{a priori},
		and $\ROChallenge$ is patched accordingly, right from the beginning of the game.
		During each computation of $\ROChallenge(\commitment, m)$, 
		the reduction therefore has to keep $\ROChallenge$ a-priori consistent 
		by going over all transcript candidates $(\commitment_i, \challenge_i, \response_i)$ belonging to $m$, and returning $\challenge_i$ if $\commitment = \commitment_i$.
		\vspace{.1cm}

	\noindent{\bf Tightness with regards to running time.}
		Our reduction $\Ad{B}$ has about the running time of the adversary $\Ad{A}$,
		as it can simply sample simulated transcripts and reprogram $\ROChallenge$, accordingly.
		The reduction in \cite{EC:KilLyuSch18} suffers from a quadratic blow-up in its running time:
		They have running time
		$\Time(\Ad{B}) \approx \Time(\Ad{A}) + q_{\RO{H}}q_S$,
		as the reduction has to execute $q_S$ computations upon each query to $\ROChallenge$ in order to keep it a-priori consistent.
		As they observe, this quadratic blow-up renders the reduction non-tight in all practical aspects.
		On the other hand, our upper bound of the advantage comes with a bigger disruption in terms of commitment entropy (the min-entropy of the first message (the \emph{commitment}) in the identification scheme).
		While the source of non-tightness in \cite{EC:KilLyuSch18} can not be balanced out,
		however, we offer a trade-off:
		If needed, the commitment entropy can be increased by appending a random string to the commitment.\footnote{While this increases the signature size, the increase is mild in typical post-quantum Fiat-Shamir based digital signature schemes. As an example, suppose Dilithium-1024x768, which has a signature size of  2044 bytes, had zero commitment entropy (it actually has quite some, see remarks in \cite{EC:KilLyuSch18}). To ensure that about $2^{128}$ hash queries are necessary to make the term in our security bound that depends on the commitment entropy equal 1, about 32 bytes would need to be added, an increase of  about 1.6\% (assuming $2^{64}$ signing queries).}
		\vspace{.1cm}

	\noindent{\bf Generality.}
		To achieve a-priori consistency, \cite{EC:KilLyuSch18} crucially relies on \textit{statistical} $\HVZK$.
		Furthermore, they require that the $\HVZK$ simulator outputs transcripts such that the challenge $\challenge$ is uniformly distributed.
		We are able to drop the requirement on $\challenge$ altogether,
		and to only require \textit{computational} $\HVZK$.
		(As a practical example, alternate NIST candidate Picnic \cite{NIST:PICNIC} satisfies only \emph{computational} $\HVZK$.)

\heading{Robustness of the hedged Fiat-Shamir transform against fault attacks.}
When it comes to real-world implementations, the assessment of a signature scheme will not solely take into consideration
whether an adversary could forge a fresh signature as formalized by the $\UFCMA$ game,
as the $\UFCMA$ definition does not capture all avenues of real-world attacks.
For instance, an adversary interacting with hardware that realizes a cryptosystem can try to induce a hardware malfunction, also called fault injection,
in order to derail the key generation or signing process.
Although it might not always be straightforward to predict where exactly a triggered malfunction will affect the execution,
it is well understood that even a low-precision malfunction can seriously injure a schemes' security.
In the context of the ongoing effort to standardize post-quantum secure primitives \cite{NIST:Competition}, 
it hence made sense to affirm \cite{NIST:StatusReport2020} that desirable additional security features include, amongst others,
resistance against fault attacks and randomness generation that has some bias.

Very recently \cite{EC:AOTZ20}, the hedged Fiat-Shamir construction
was proven secure against biased nonces and several types of fault injections, in the ROM.
This result can for example be used to argue that alternate NIST candidate Picnic \cite{NIST:PICNIC} is robust against many types of fault injections.
We revisit the hedged Fiat-Shamir construction in \cref{sec:HedgedFiatShamir}
and lift the result of \cite{EC:AOTZ20} to the QROM.
In particular, we thereby obtain that Picnic is resistant against many fault types, even when attacked by an adversary with quantum capabilities.

We considered to generalize the result further by replacing the standard Fiat-Shamir transform with the Fiat–Shamir with aborts transform that was introduced by Lyubashevsky \cite{AC:Lyubashevsky09,EC:KilLyuSch18}.
Recall that Fiat–Shamir with aborts was established due to the fact that for some underlying lattice-based ID schemes (e.g., NIST finalist Dilithium \cite{NIST:DILITHIUM}), the prover sometimes cannot create a correct response to the challenge,
and the protocol therefore allows for up to $\kappa$ many retries during the signing process.
While our security statements can be extended in a straightforward manner,
we decided not to further complicate our proof with the required modifications.
For Dilithium, the implications are limited anyway, as several types of faults are only proven ineffective if the underlying scheme is subset-revealing, which Dilithium is not.%
\footnote{
	Intuitively, an identification scheme is called subset-revealing if its responses do not depend on the secret key. 
	Dilithium computes its responses as $\response := y + \challenge \cdot s_1$, where $s_1$ is part of the secret key.
}

\heading{Optimality of our bound.}
We also show that our lower bound is tight for the given setting, presenting a 
quantum attack that matches our bound, up to a constant factor. Let us restrict our attention to the simple case where $H : \{0,1\}^n \to \{0,1\}^k$ is a random function, which is potentially reprogrammed at a random position $x^*$ resulting in a new oracle $H'$.  Consider an attacker that is allowed $2q$ queries to the random oracle.

A classical attack that matches the classical bound for the success probability, $O(q\cdot 2^{-n})$, is the following: pick values $x_1,...,x_q$\ and compute the XOR of the outputs  $H(x_i)$%
. After the oracle is potentially reprogrammed, the attacker outputs $0$ iff %
the checksum computed before is unchanged.  %

In order to match the quantum lower bound, we use the same attack, but on a superposition of tuples of inputs: the attacker queries $H$ with the superposition of all possible inputs, and then applies a cyclic permutation $\sigma$ on the input register.  This process is repeated $q-1$ times (on the same state). After the potential reprogramming, we repeat the same process, but now applying the permutation $\sigma^{-1}$ and querying $H'$.  Using techniques from \cite{EC:AlaMajRus20}, we show how to distinguish the two cases with advantage $\Omega\left(\sqrt{\frac{q}{2^n}}\right)$ in time $\poly(q,n)$.
\section{Adaptive reprogramming: the toolbox}\label{sec:Presentation_of_reprogramming}
Before we describe our adaptive reprogramming theorem, let us quickly recall how we usually model adversaries with quantum access to a random oracle:
As established in \cite{AC:BDFLSZ11,FOCS:BBCMW98}, we model quantum access to a random oracle $\RO{O}: X \times Y$ via oracle access to a unitarian $U_{\RO{O}}$, which is defined as the linear completion of $\ket{x}_X \ket{y}_Y	\mapsto \ket{x}_X \ket{y \oplus \RO{O}(x)}_Y$,
and adversaries $\Ad{A}$ with quantum access to $\RO{O}$ as a sequence of unitarians, interleaved with applications of $U_{\RO{O}}$.
We write $\Ad{A}^{\qRO{O}}$	to indicate that $\RO{O}$ is quantum-accessible.

As a warm-up, we will first present our reprogramming lemma in the simplest setting.
Say we reprogram an oracle $\numberReprogrammingInstances$ many times,
where the position is partially controlled by the adversary, and partially picked at random.
More formally, let $\XFirst_1$ and $\XFirst_2$ be two finite sets,
where $\XFirst_1$ specifies the domain from which the random portions are picked,
and $\XFirst_2$ specifies the domain of the adversarially controlled portions.
We will now formalize what it means to distinguish a random oracle $\RO{O_0}: 
\XFirst_1 \times \XFirst_2 \rightarrow Y$ from its reprogrammed version 
$\RO{O_1}$.
Consider the two $\gameRepro$ games, given in \cref{fig:Def:Repro:Basic}:
In games $\gameRepro_b$, the distinguisher has quantum access to oracle $\RO{O_b}$ (see line~\ref{line:Def:Reprogram:OracleAcces:Basic})
that is either the original random oracle $\RO{O_0}$ (if $b=0$),
or the oracle $\RO{O_1}$ which gets reprogrammed adaptively ($b=1$).
To model the actual reprogramming, we endow the distinguisher with (classical) access to a reprogramming oracle $\Reprogram$.
Given a value $\xFirst_2 \in \XFirst_2$, oracle $\Reprogram$ samples random 
values $\xFirst_1$ and $y$,
and programs the random oracle to map $\xFirst_1 \| \xFirst_2$ to $y$ (see line~\ref{line:Def:Reprogram:Reprogramming:Basic}).
Note that apart from already knowing $\xFirst_2$, the adversary even learns the part $\xFirst_1$ of the position at which $\RO{O_1}$ was reprogrammed.

\begin{figure}[h] \begin{center} \fbox{
			
	\nicoresetlinenr	
	
	\begin{minipage}[t]{3.7cm}
		
		\underline{\textbf{GAME} $\gameRepro_{b}$}
		\begin{nicodemus}
			
			\item $\RO{O}_0 \uni Y^{\XFirst_1 \times \XFirst_2}$
			
			\item $\RO{O}_1 := \RO{O}_0$ 
			
			\item $b' \leftarrow \Ad{A}^{\qRO{O_b}, \Reprogram}$
			\label{line:Def:Reprogram:OracleAcces:Basic}
			
			\item \pcreturn $b'$
			
		\end{nicodemus}
		
	\end{minipage}
	
	\quad
	
	\begin{minipage}[t]{3.5cm}
		
		\underline{$\Reprogram(\xFirst_2)$}
		\begin{nicodemus}
			
			\item $(\xFirst_1, y) \uni \XFirst_1 \times Y$
			
			\item $\RO{O}_1 := \RO{O}_1^{(\xFirst_1 \| \xFirst_2) \mapsto y}$ \label{line:Def:Reprogram:Reprogramming:Basic}
			
			\item \pcreturn $\xFirst_1$
			
		\end{nicodemus}
		
	\end{minipage}
			
}\end{center}
	\ifnum\tightOnSpace=1 \vspace{-0.4cm} \fi
	\caption{Adaptive reprogramming games $\gameRepro_b$ for bit $b \in \bits$ in the most basic setting.}
	\label{fig:Def:Repro:Basic}
	\ifsubmission
	\vspace{-.2in}
	\fi
\end{figure}

\begin{proposition}\label{prop:basic}
	Let $\XFirst_1$, $\XFirst_2$ and $Y$ be finite sets,
	and let $\Ad{A}$ be any algorithm 
	issuing $\numberReprogrammingInstances$ many calls to $\Reprogram$
	and $q$ many (quantum) queries to $\RO{O_b}$ as defined in \cref{fig:Def:Repro:Basic}.
	Then the distinguishing advantage of $\Ad{A}$ is bounded by
	\begin{equation}	
	\label{eq:Advantage:ReprogrammingGameBasedBasic}
	| \Pr[\gameRepro_{1}^{\Ad{A}} \Rightarrow 1] -  \Pr[\gameRepro_{0}^{\Ad{A}} \Rightarrow 1] |
	\leq \frac {3R} 2  \sqrt{\frac{q}{|\XFirst_1|}}. 
	\end{equation}
\end{proposition}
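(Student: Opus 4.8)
The plan is to combine a hybrid argument over the $R$ reprogramming calls with Zhandry's superposition-oracle technique, since it is the latter that buys the \emph{tight} $\sqrt{q/|\XFirst_1|}$ dependence rather than the lossy $q/\sqrt{|\XFirst_1|}$ that a naive one-way-to-hiding telescoping over the $q$ queries would give. First I would introduce hybrids $G_0,\dots,G_R$ in which the first $i$ calls to \Reprogram genuinely reprogram the oracle while the remaining $R-i$ calls merely sample and return $\xFirst_1$ without touching the oracle. Then $G_0 = \gameRepro_0$ and $G_R = \gameRepro_1$, and the triangle inequality reduces the claim to bounding each single-step distance $|\Pr[G_i\Rightarrow 1]-\Pr[G_{i-1}\Rightarrow 1]|$ by $\tfrac32\sqrt{q/|\XFirst_1|}$, with the factor $R$ coming out of the sum.

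Second, I would represent the random oracle by the superposition oracle, in which each output register $\RO{O}(x)$ is initialized in the uniform phase state $\ket{\phi_0}=|Y|^{-1/2}\sum_y\ket{y}$ and a query is the controlled interaction with the corresponding register. The crucial observation is that reprogramming $\RO{O_1}(\xFirst_1\|\xFirst_2)$ to a fresh, hidden, uniform value is, in this representation, exactly the channel that discards register $p:=\xFirst_1\|\xFirst_2$ and reinstalls $\ket{\phi_0}$, i.e.\ the reset $\mathcal R_p(\rho)=\Tr_p[\rho]\otimes\kb{\phi_0}$. Consequently $G_i$ and $G_{i-1}$ differ only by whether $\mathcal R_p$ is applied once, at the moment of the $i$-th call; since everything afterwards is the same quantum channel applied to the two branches, the single-step advantage is at most the trace distance $T(\rho,\mathcal R_p(\rho))=\tfrac12\norm{\rho-\mathcal R_p(\rho)}_1$, where $\rho$ is the joint adversary--oracle state at that moment.

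Third, I would bound this disturbance by the weight the state places off $\ket{\phi_0}$ on register $p$. Writing $\epsilon_p=\Tr[(I-\kb{\phi_0})_p\,\rho]$ and $\Pi=\kb{\phi_0}_p\otimes I$, the gentle-measurement lemma gives $\norm{\rho-\Pi\rho\Pi}_1\le 2\sqrt{\epsilon_p}$, while $\norm{\Pi\rho\Pi-\mathcal R_p(\rho)}_1=\epsilon_p$ because the two differ only in the off-$\ket{\phi_0}$ component of the reduced state on the remaining registers; combined with $\epsilon_p\le\sqrt{\epsilon_p}$ this yields $T(\rho,\mathcal R_p(\rho))\le\sqrt{\epsilon_p}+\tfrac12\epsilon_p\le\tfrac32\sqrt{\epsilon_p}$, which is precisely where the constant $\tfrac32$ is born. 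Finally I would average over the random $\xFirst_1$: since at most $q$ queries are made, the expected number of superposition-oracle registers differing from $\ket{\phi_0}$ is at most $q$, so $\sum_{x}\Tr[(I-\kb{\phi_0})_x\rho]\le q$, and in particular $\mathbb E_{\xFirst_1}[\epsilon_{\xFirst_1\|\xFirst_2}]\le q/|\XFirst_1|$; Jensen's inequality then turns the per-step bound $\tfrac32\sqrt{\epsilon_p}$ into $\tfrac32\sqrt{q/|\XFirst_1|}$.

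The main obstacle is getting the second and fourth steps to work in concert. One has to argue carefully that classical reprogramming to a hidden uniform value is faithfully emulated by the reset channel $\mathcal R_p$ on the superposition oracle --- in particular that discarding the old register together with its correlations to the adversary is legitimate, which is justified because the oracle is traced out when the adversary's output is read --- and one has to ensure that the database-size bound $\sum_x\Tr[(I-\kb{\phi_0})_x\rho]\le q$ survives inside the hybrid even though earlier calls have already performed resets. The latter holds because a reset only removes entries and so never increases the database size. Getting these two points right is exactly what separates the tight square-root bound from the lossy one-way-to-hiding estimate.
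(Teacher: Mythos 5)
Your proposal follows essentially the same route as the paper's proof: implement the random oracle as Zhandry's superposition oracle, model each reprogramming as the reset channel on register $F_{\xFirst_1\|\xFirst_2}$, bound the per-step disturbance in trace norm by $2\sqrt{\epsilon}+\epsilon$ via gentle measurement plus the exact computation of the projected-versus-reset difference, and finish with the database-size bound, Jensen's inequality, and a hybrid over the $R$ calls (your trick of using $\epsilon\le\sqrt{\epsilon}$ pointwise before averaging even yields the $\tfrac{3}{2}\sqrt{q/|\XFirst_1|}$ form unconditionally, where the paper first derives $\sqrt{qp_{\max}}+\tfrac12 qp_{\max}$ and then collapses it). The one step you assert rather than prove --- that $\sum_{x}\Tr[(\mathds 1-\kb{\phi_0})_{F_x}\rho]\le q$ after $q$ queries, and that interleaved resets do not increase this quantity --- is precisely what the paper supplies through the structural decomposition of the post-query state (\cref{lem:qqueries}), so your outline is complete once that lemma is invoked.
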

The above theorem constitutes a significant improvement over previous bounds. In \cite{C:Unruh14} and \cite{eaton2015making}, a bound proportional to $q|\XFirst_1|^{-1/2}$ for the distinguishing advantage in similar settings, but for $R=1$, was given. In \cite{PKC:HulRijSon16}, a bound proportional to $q^2|\XFirst_1|^{-1}$ is claimed, but that seems to have resulted from a ``translation mistake'' from \cite{eaton2015making} and should be similar to the bounds from \cite{C:Unruh14,eaton2015making}. What is more, we show in \cref{sec:Attack} that the above bound, and therefore also its generalizations, are tight, by presenting a distinguisher that achieves an advantage equal to the right hand side of \cref{eq:Advantage:ReprogrammingGameBasedBasic} for trivial $\XFirst_1$, up to a constant factor.

In fact, we prove something more general than \cref{prop:basic}: We prove that an adversary will not behave significantly different, even if
\begin{itemize}
	\item[-] the adversary does not only control a portion $\xFirst_2$, but 
	instead it even controls the distributions according to which the whole positions $x := (\xFirst_1, \xFirst_2)$ are sampled at which $\RO{O}_1$ is reprogrammed,
	\item[-] it can additionally pick different distributions, adaptively, and
	\item[-] the distributions produce some additional side information $\xSecond$ which the adversary also obtains, 
\end{itemize}
as long as the reprogramming positions $x$ hold enough entropy.

Overloading notation, we formalize this generalization by games $\gameRepro$, 
given in \cref{fig:Def:Repro}:
Reprogramming oracle $\Reprogram$ now takes as input the description of a distribution $p$ that generates a whole reprogramming position $\xFirst$,
together with side information $\xSecond$.
$\Reprogram$ samples $\xFirst$ and $\xSecond$ according to $p$,
programs the random oracle to map $\xFirst$ to a random value $y$,
and returns $(\xFirst, \xSecond)$.

\begin{figure}[h] \begin{center} \fbox{
			
	\nicoresetlinenr	
	
	\begin{minipage}[t]{3.7cm}
		
		\underline{\textbf{GAME} $\gameRepro_{b}$}
		\begin{nicodemus}
			
			\item $\RO{O}_0 \uni Y^X$
			
			\item $\RO{O}_1 := \RO{O}_0$ 
			
			\item $b' \leftarrow \Ad{D}^{\qRO{O_b}, \Reprogram}$
			\label{line:Def:Reprogram:OracleAcces}
			
			\item \pcreturn $b'$
			
		\end{nicodemus}
		
	\end{minipage}
	
	\quad
	
	\begin{minipage}[t]{2.8cm}
		
		\underline{$\Reprogram(p)$}
		\begin{nicodemus}
			
			\item $(\xFirst,\xSecond) \leftarrow p$
			
			\item $y \uni Y$
			
			\item $\RO{O}_1 := \RO{O}_1^{\xFirst \mapsto y}$
			\label{line:Def:Reprogram:Reprogramming}
			
			\item \pcreturn $(\xFirst, \xSecond)$
			
		\end{nicodemus}
		
	\end{minipage}
			
}\end{center}
	\ifnum\tightOnSpace=1 \vspace{-0.4cm} \fi
	\caption{Adaptive reprogramming games $\gameRepro_b$ for bit $b \in \bits$.}
	\label{fig:Def:Repro}
	\ifsubmission
	\vspace{-.2in}
	\fi
\end{figure}

We are now ready to present our main \cref{theorem:reprGameBased}.
On a high level, the only difference between the statement of \cref{prop:basic} and \cref{theorem:reprGameBased} is that we now have
to consider $\numberReprogrammingInstances$ many (possibly different) joint distributions on  $\XFirst \times \XSecond$,
and to replace $\frac{1}{|\XFirst_1|}$ (the probability of the uncontrolled reprogramming portion)
with the highest likelihood of any of those distributions generating a position $\xFirst$.
\begin{restatable}[``Adaptive reprogramming'' ($\adaptiveReprogramming$)]{theorem}{thmone}
\label{theorem:reprGameBased}
	Let $\XFirst$, $\XSecond$, $Y$ be some finite sets,
	and let $\Ad{D}$ be any distinguisher, issuing $\numberReprogrammingInstances$ many reprogramming instructions
	and $q$ many (quantum) queries to $\RO{O}$.
	Let $q_r$ denote the number of queries to $\RO{O}$ that are issued
	inbetween the $(r-1)$-th and the $r$-th query to $\Reprogram$.
	Furthermore, let $p^{(r)}$ denote the $r$th distribution that $\Reprogram$ is queried on.
	By $p^{(r)}_{\XFirst}$ we will denote the marginal distribution of $\XFirst$, according to $p^{(r)}$,
	and define
	\[
	p_{\max}^{(r)} := \mathbb E \max_{x} p^{(r)}_{\XFirst}(x),
	\]
	where the expectation is taken over $\Ad{D}$'s behaviour until its $r$th query to $\Reprogram$.
	\begin{equation}	\label{eq:Advantage:ReprogrammingGameBased}
	| \Pr[\gameRepro_{1}^{\Ad{D}} \Rightarrow 1] - \Pr[\gameRepro_{0}^{\Ad{D}} \Rightarrow 1] |
	\leq \sum_{r = 1}^{\numberReprogrammingInstances}  \left( \sqrt{ \hat q_{r} p_{\max}^{(r)} }  + \frac 1 2 \hat q_{r} p_{\max}^{(r)} 
	\right) \enspace ,
	\end{equation}
	where $\hat q_{r} := \sum_{i=0}^{r-1} q_{i}$.

\end{restatable}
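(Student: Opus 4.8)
The plan is to work in Zhandry's superposition-oracle model \cite{C:Zhandry19} and combine a hybrid argument over the $\numberReprogrammingInstances$ reprogramming calls with a one-step disturbance bound. First I would replace the random oracle in both games $\gameRepro_0,\gameRepro_1$ by the superposition oracle: a register $\mathsf F=(\mathsf F_x)_{x\in\XFirst}$ initialised in $\bigotimes_x|u\rangle$ with $|u\rangle=|Y|^{-1/2}\sum_y|y\rangle$, where a query XORs $\mathsf F_x$ into the output register. As shown in \cite{C:Zhandry19}, this perfectly simulates the random oracle as long as $\mathsf F$ is never measured. In this model, reprogramming $\RO{O}(x)$ to a fresh uniform value and discarding that value is, from the distinguisher's point of view, identical to applying the channel $\mathcal R_x(\rho)=\Tr_{\mathsf F_x}(\rho)\otimes|u\rangle\langle u|$ (trace out $\mathsf F_x$, reinitialise it to $|u\rangle$): both replace the content of $\mathsf F_x$ by a fresh uniform value, and the representations $I/|Y|$ and $|u\rangle\langle u|$ of that value are indistinguishable by the only operation that ever touches $\mathsf F_x$ afterwards, namely the query. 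Crucially, $\mathcal R_x$ is the identity whenever $\mathsf F_x$ is in $|u\rangle$ and unentangled from the rest.

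Next I would run a hybrid over $r=1,\dots,\numberReprogrammingInstances$, where $H_r$ applies $\mathcal R_x$ for the first $r$ $\Reprogram$ calls and leaves the oracle untouched for the remaining ones, so that $H_0=\gameRepro_0$ and $H_{\numberReprogrammingInstances}=\gameRepro_1$. Adjacent hybrids $H_{r-1},H_r$ agree exactly up to the point just before the $r$-th $\Reprogram$ call and again afterwards, so by the triangle inequality and data processing
\[
|\Pr[\gameRepro_1^{\Ad{D}}\Rightarrow 1]-\Pr[\gameRepro_0^{\Ad{D}}\Rightarrow 1]|\le\sum_{r=1}^{\numberReprogrammingInstances}\mathbb E\Big[\tfrac12\big\|\mathcal R_{x}(\rho_r)-\rho_r\big\|_1\Big],
\]
where $\rho_r$ is the joint state just before the $r$-th reprogramming, $x$ is the sampled $\XFirst$-position, and the expectation is over the (shared) adaptive choices. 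Since the side information $\xSecond$ and the returned pair $(\xFirst,\xSecond)$ are produced identically in both branches and never feed into $\mathsf F$, only the $\XFirst$-marginal $p^{(r)}_{\XFirst}$ enters. Note that $\rho_r$ is built from the $\hat q_r=\sum_{i=0}^{r-1}q_i$ queries issued before the $r$-th call, which is why $\hat q_r$ (queries \emph{before} reprogramming) appears rather than later queries.

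The core is the one-step bound. Writing $P_x:=I-|u\rangle\langle u|_{\mathsf F_x}$ for the projector onto ``$\mathsf F_x$ deviates from $|u\rangle$'', a short calculation on the pure state $|\psi_r\rangle$ (decompose along $|u\rangle_{\mathsf F_x}$ and its complement; the cross terms vanish after the partial trace) gives $\tfrac12\|\mathcal R_x(\rho_r)-\rho_r\|_1\le\|P_x|\psi_r\rangle\|+\tfrac12\|P_x|\psi_r\rangle\|^2$. Averaging over $x\sim p^{(r)}_{\XFirst}$, Cauchy--Schwarz together with $\sum_x p^{(r)}_{\XFirst}(x)=1$ gives $\mathbb E_x\|P_x|\psi_r\rangle\|\le\sqrt{\max_x p^{(r)}_{\XFirst}(x)\sum_x\|P_x|\psi_r\rangle\|^2}$, and the quadratic term averages to $\tfrac12\max_x p^{(r)}_{\XFirst}(x)\sum_x\|P_x|\psi_r\rangle\|^2$. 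It then remains to show $\sum_x\|P_x|\psi_r\rangle\|^2\le\hat q_r$: this quantity is exactly the expected number of oracle registers that deviate from $|u\rangle$, and by the defining property of the compressed oracle \cite{C:Zhandry19} each query creates at most one such register while each earlier reprogramming only removes them, so their number is at most $\hat q_r$ before the $r$-th call. Plugging in and taking the outer expectation over $\Ad{D}$'s behaviour, Jensen's inequality (concavity of $\sqrt{\cdot}$) turns $\max_x p^{(r)}_{\XFirst}(x)$ into $p_{\max}^{(r)}=\mathbb E\max_x p^{(r)}_{\XFirst}(x)$ in the first term, while the linear second term gives $\tfrac12\hat q_r p_{\max}^{(r)}$ directly, yielding the summand $\sqrt{\hat q_r p_{\max}^{(r)}}+\tfrac12\hat q_r p_{\max}^{(r)}$ and hence \cref{eq:Advantage:ReprogrammingGameBased}.

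The main obstacle I expect is the formalisation around the reprogramming channel: making precise that $\mathcal R_x$ faithfully models fresh-value reprogramming while being the identity on undisturbed registers, and rigorously establishing both the one-step trace-distance estimate in terms of $\|P_x|\psi_r\rangle\|$ (with the right constants $1$ and $\tfrac12$) and the ``database-size'' bound $\sum_x\|P_x|\psi_r\rangle\|^2\le\hat q_r$ in the presence of interleaved reprogrammings. Everything else---the hybrid, Cauchy--Schwarz, and Jensen---is routine.
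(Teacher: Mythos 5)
Your proposal is correct and follows essentially the same route as the paper: the superposition-oracle simulation, reprogramming modelled as resetting $F_x$ to the uniform superposition, the one-step trace-distance bound $2\sqrt{\varepsilon_x}+\varepsilon_x$ with $\varepsilon_x=\|P_x\ket{\psi_r}\|^2$, the ``at most $\hat q_r$ deviating registers'' bound (the paper's Lemma~\ref{lem:qqueries}), the hybrid over the $R$ instances, and sampling the side information $x'$ from the conditional distribution. The only organizational difference is the direction of your hybrid, which forces you to argue that the database-size bound survives the $r-1$ earlier reprogrammings (true, since resets only shrink the deviating set), whereas the paper orders its hybrids so that the single-instance lemma is always invoked on a state produced by plain queries only.
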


For $R=1$ and without additional side information output $x'$, the proof of \cref{theorem:reprGameBased} is given in \cref{sec:AdaptiveReprogramming}.  The extension to general $R$ is proven in \cref{sec:AdaptiveReprogramming:Appendix} via a standard hybrid argument. Finally, all our bounds are information-theoretical, i.e. they hold against arbitrary query bounded adversaries. The additional output $x'$ can therefore be sampled by the adversary (see details in \cref{sec:AdaptiveReprogramming:Appendix}).

We will now quickly discuss how to simplify the bound given in \cref{eq:Advantage:ReprogrammingGameBased} for our applications,
and in particular, how we can derive \cref{eq:Advantage:ReprogrammingGameBasedBasic} from \cref{theorem:reprGameBased}:
Throughout sections \ref{sec:Hashapps} and \ref{sec:SignaturesInQROM}, we will only have to consider reprogramming instructions
that occur on positions $\xFirst = (\xFirst_1, \xFirst_2)$ such that
\begin{itemize}
	\item[-] $\xFirst_1$ is drawn according to the same distribution $p$ for each reprogramming instruction, and
	\item[-] $\xFirst_2$ represents a message that is already fixed by the adversary.			
\end{itemize}

To be more precise, $\xFirst_1$ will represent a uniformly random string $z$ in \ref{sec:Hashapps}, and no side information $\xSecond$ has to be considered.
In \cref{sec:SignaturesInQROM}, $(\xFirst_1, \xSecond)$ will represent a tuple $(\commitment, \state)$ that is drawn according to $\Commit(\sk)$.

In the language of \cref{theorem:reprGameBased},
the marginal distribution $p^{(r)}_{\XFirst}$ will always be the same distribution $p$, apart from the already fixed part $\xFirst_2$.
We can hence upper bound $p_{\max}^{(r)}$ by $p_{\max} := \max_{\xFirst_1} p (\xFirst_1)$,
and $\hat q_{r}$ by $q$,
to obtain that $\hat q_{r} p_{\max}^{(r)} < q p_{\max}$
for all $1 \leq r \leq \numberReprogrammingInstances$.

In our applications, we will always require that $p$ holds sufficiently large entropy.
To be more precise, we will assume that $p_{\max} < \frac{1}{q}$.
In this case, we have that $ q p_{\max} < 1$, and that we can upper bound $ q p_{\max}$ by $\sqrt{q p_{\max}}$
to obtain

\begin{proposition}\label{prop:advanced}
	Let $\XFirst_1$, $\XFirst_2$, $\XSecond$ and $Y$ be some finite sets, and let $p$ be a distribution on $\XFirst_1 \times \XSecond$.
	Let $\Ad{D}$ be any distinguisher, issuing $q$ many (quantum) queries to $\RO{O}$
	and $\numberReprogrammingInstances$ many reprogramming instructions such that each instruction consists of a value $\xFirst_2$,
	together with the fixed distribution $p$.
	Then
	\begin{equation}
	| \Pr[\gameRepro_{1}^{\Ad{D}} \Rightarrow 1] - \Pr[\gameRepro_{0}^{\Ad{D}} \Rightarrow 1] |
	\leq \frac {3R} 2  \sqrt{ q p_{\max}} 
	\enspace , \nonumber
	\end{equation}
	where $p_{\max} := \max_{\xFirst_1} p (\xFirst_1)$.
\end{proposition}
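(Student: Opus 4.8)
The plan is to obtain Proposition~\ref{prop:advanced} as a direct specialization of \cref{theorem:reprGameBased}, since all of the quantum-information content already resides in that theorem; what remains is to simplify its per-round bound under the restricted family of reprogramming instructions described in the statement. First I would analyze the marginal distributions. Every reprogramming call feeds $\Reprogram$ the same distribution $p$ on $\XFirst_1\times\XSecond$ together with a fixed (adversarially chosen) value $\xFirst_2$, so the induced distribution $p^{(r)}$ on the position $\xFirst=(\xFirst_1,\xFirst_2)$ and side information $\xSecond$ is the pushforward that draws $(\xFirst_1,\xSecond)\leftarrow p$ and sets the position to $(\xFirst_1,\xFirst_2)$. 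Its position-marginal $p^{(r)}_{\XFirst}$ is therefore supported on the slice $\XFirst_1\times\{\xFirst_2\}$ and coincides there with the $\XFirst_1$-marginal of $p$, so that $\max_{x}p^{(r)}_{\XFirst}(x)=\max_{\xFirst_1}p(\xFirst_1)=p_{\max}$ regardless of $\Ad{D}$'s history. The side-information component $\XSecond$ plays no role in this marginal and simply drops out, and since $p_{\max}$ is a constant the expectation defining $p_{\max}^{(r)}$ is trivial, giving $p_{\max}^{(r)}=p_{\max}$ for every $1\le r\le R$.

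Next I would bound the cumulative query counts. By definition $\hat q_{r}=\sum_{i=0}^{r-1}q_{i}$ counts only the queries to $\RO{O}$ issued before the $r$th call to $\Reprogram$; these are disjoint subsets of the total $q$ queries, so $\hat q_{r}\le q$ for each $r$. Substituting $p_{\max}^{(r)}=p_{\max}$ and $\hat q_{r}\le q$ into \cref{eq:Advantage:ReprogrammingGameBased} yields
\[
| \Pr[\gameRepro_{1}^{\Ad{D}} \Rightarrow 1] - \Pr[\gameRepro_{0}^{\Ad{D}} \Rightarrow 1] | \le \sum_{r=1}^{R} \left( \sqrt{q\, p_{\max}} + \tfrac{1}{2}\, q\, p_{\max} \right) .
\]
The remaining step is to collapse the two terms inside each summand. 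Invoking the standing hypothesis $p_{\max}<1/q$ introduced just above (and guaranteed in every application of sections~\ref{sec:Hashapps} and~\ref{sec:SignaturesInQROM}), we have $q\,p_{\max}<1$, hence $q\,p_{\max}=\sqrt{q\,p_{\max}}\cdot\sqrt{q\,p_{\max}}<\sqrt{q\,p_{\max}}$ and thus $\tfrac{1}{2}q\,p_{\max}<\tfrac{1}{2}\sqrt{q\,p_{\max}}$. Each of the $R$ identical summands is therefore at most $\tfrac{3}{2}\sqrt{q\,p_{\max}}$, and summing gives the claimed bound $\tfrac{3R}{2}\sqrt{q\,p_{\max}}$.

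Because this is a pure specialization of \cref{theorem:reprGameBased}, I do not expect a genuine obstacle. The only points needing mild care are the two worst-case replacements in the first two steps — verifying that the per-round quantities $p_{\max}^{(r)}$ and $\hat q_{r}$ may legitimately be replaced by $p_{\max}$ and $q$ (the former by the marginal computation above, the latter by disjointness of the query blocks) — and making explicit the $p_{\max}<1/q$ assumption under which the correction term $\tfrac{1}{2}q\,p_{\max}$ is absorbed into the square-root term. As a sanity check, Proposition~\ref{prop:basic} then falls out as the special case in which $\xFirst_1$ is sampled uniformly from $\XFirst_1$: there $p_{\max}=1/|\XFirst_1|$, so $\sqrt{q\,p_{\max}}=\sqrt{q/|\XFirst_1|}$ and the bound of \cref{eq:Advantage:ReprogrammingGameBasedBasic} is recovered verbatim.
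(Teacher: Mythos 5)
Your proposal is correct and follows essentially the same route as the paper: specialize \cref{theorem:reprGameBased} by observing that the position-marginal is just $p$ on the slice fixed by $\xFirst_2$ (so $p_{\max}^{(r)}=p_{\max}$), bound $\hat q_r\le q$, and absorb the $\tfrac12 q p_{\max}$ term into $\tfrac12\sqrt{qp_{\max}}$ using the standing assumption $p_{\max}<1/q$. You even correct a small slip in the paper's closing remark, where recovering \cref{prop:basic} requires $p_{\max}=1/|\XFirst_1|$ rather than $|\XFirst_1|$.
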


From this we obtain \cref{prop:basic} setting $p_{max} = |\XFirst_1|$. 
\section{Basic applications}\label{sec:Hashapps}
In this section, we present two motivating examples that benefit from the most basic version of our bound as stated in Proposition~\ref{prop:basic}.
As a first example we chose the canonical hash-and-sign construction
when used to achieve security under adaptive chosen message attacks (\eufcma) from a scheme that is secure under random message attacks (\eufrma). It is mostly self-contained and similar to our second example.
The second example is a tighter bound for the security of hash-and-sign as used in RFC 8391, the recently published standard for the stateful hash-based signature scheme XMSS. For missing definitions and detailed transforms see \cref{sec:SignaturesInQROM:Defs:Appendix}.

\subsection{From \RMA to \CMA security via Hash-and-Sign} \label{subse:HTS:RMA}
In the following, we present a conceptually easy proof with a tighter bound for the canonical \eufrma to \eufcma transform using hash-and-sign $\sigSp=\hts[\sigS,\RO{H}]$, in the QROM (which additionally allows for arbitrary message space expansion).
Recall that $\Sign'(\sk, \messageAfterHTS)$ first samples a uniformly random 
bitstring $z\uni Z$, computes $\sigma \leftarrow \Sign(\sk,\RO{H}(z\| \messageAfterHTS))$ and returns the pair $(z,\sigma)$.
$\VerifySig'$ accordingly first computes $\messageBeforeHTS := \RO{H}(z\| \messageAfterHTS)$ and then calls $\VerifySig(\pk,\messageBeforeHTS,\sigma)$.

The reduction \M from \eufrma to \eufcma in this case works as follows:
First, we have to handle collision attacks.
We show that an adversary which finds a forgery for $\sigSp$ that contains no forgery for $\sigS$ breaks the multi-target version of extended target collision resistance (\mmetcr) of $\RO{H}$, and give a QROM bound for this property.
Having dealt with collision attacks leaves us with the case where $\A$ generates a forgery that contains a forgery for $\sigS$. The challenge in this case is how to simulate the signing oracle $\oracleSIGN$.
Our respective reduction $\M$ against $\eufrma$ proceeds as follows: Collect the $\qsign$ many message-signature pairs $\{(\messageBeforeHTS_i,\sigma_i)\}_{1 \leq i \leq \qsign}$, provided by the \eufrma game.
When $\A$ queries $\oracleSIGN(\messageAfterHTS_i)$ for the $i$th 
time, sample a random $z_i$, reprogram $\RO{H}(z_i\|\messageAfterHTS_i) := \messageBeforeHTS_i$, and return 
$(z_i,\sigma_i)$. See also \cref{fig:cma2rma} below.

In the QROM, this reduction has previously required $\qsign$ applications of the O2H Lemma in two steps, loosing an additive $\mathcal{O}(\qsign\cdot q/\sqrt{|Z|})$ term. In contrast, we only loose a $\mathcal{O}(\qsign\sqrt{q/|Z|})$ (both constants hidden by the $\mathcal{O}$ are small):

\begin{theorem}\label{thm:RMAtoCMA}
	For any (quantum) $\eufcma$ adversary $\A$ issuing at most $\qsign$ (classical) queries to the signing oracle $\oracleSIGN$
	and at most $\qhash$ quantum queries to $\RO{H}$,
	there exists an $\eufrma$ adversary $\M$
	such that
	\begin{align*} \label{eq:bound:FS}
		\succfun^{\eufcma}_{\sigSp}(\A)
		\leq
			\succfun^{\eufrma}_{\sigS}(\M)
			+ \frac{8 \qsign(\qsign+\qhash+2)^2}{|\MSpaceAfterHTS|} 
			+ 3\qsign  \sqrt{ \frac{\qhash  + \qsign + 1}{|Z|}}	\enspace,
	\end{align*}
	and the running time of $\M$
	is about that of $\A$. 
\end{theorem}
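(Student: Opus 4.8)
The plan is to reduce the \eufcma security of $\sigSp$ to the \eufrma security of $\sigS$ via a sequence of game hops, isolating the two ways an adversary can win (collision-type forgeries versus genuine $\sigS$-forgeries) and simulating the signing oracle by adaptive reprogramming. First I would set up the reduction $\M$ as described in the text: $\M$ receives the $\qsign$ random message-signature pairs $\{(\messageBeforeHTS_i,\sigma_i)\}$ from its own \eufrma challenger, simulates the random oracle $\RO{H}$ for $\A$, and answers the $i$th signing query $\oracleSIGN(\messageAfterHTS_i)$ by sampling fresh $z_i\uni Z$, reprogramming $\RO{H}(z_i\|\messageAfterHTS_i):=\messageBeforeHTS_i$, and returning $(z_i,\sigma_i)$. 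The correctness of this simulation hinges on the adversary not detecting the reprogramming, which is exactly what \cref{prop:basic} controls.

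The game sequence I would carry out is as follows. Game $G_0$ is the real \eufcma game for $\sigSp$. In game $G_1$, I replace each honest signing computation by the reprogramming-based simulation above, where the inner message $\messageBeforeHTS_i=\RO{H}(z_i\|\messageAfterHTS_i)$ is now set to be a \emph{uniformly random} value (matching the \eufrma challenge distribution) rather than the value that $\RO{H}$ happened to already hold. Here the position $z_i\|\messageAfterHTS_i$ has its random portion $z_i$ drawn uniformly from $Z$, so applying \cref{prop:basic} with $\XFirst_1=Z$, $\XFirst_2=\MSpaceAfterHTS$, and $R=\qsign$ reprogramming instructions against an at most $(\qhash+\qsign+1)$-query distinguisher bounds $|\Pr[G_0^\A\Rightarrow 1]-\Pr[G_1^\A\Rightarrow 1]|$ by $\tfrac{3\qsign}{2}\sqrt{(\qhash+\qsign+1)/|Z|}$, which after doubling the bracket gives the third term. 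The query count $\qhash+\qsign+1$ accounts for the $\qhash$ direct queries plus the implicit $\RO{H}$-evaluations made by the signing/verification machinery and the final forgery check.

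In the final game I split the winning event. When $\A$ outputs a forgery $(\messageAfterHTS^*,(z^*,\sigma^*))$ with $\messageBeforeHTS^*:=\RO{H}(z^*\|\messageAfterHTS^*)$, either (i) $\messageBeforeHTS^*$ differs from every queried inner value $\messageBeforeHTS_i$, in which case $(\messageBeforeHTS^*,\sigma^*)$ is a fresh valid $\sigS$-pair and $\M$ wins its \eufrma game; or (ii) $\messageBeforeHTS^*=\messageBeforeHTS_i$ for some previously-signed message $\messageAfterHTS_i\neq\messageAfterHTS^*$, which by definition of validity means $\A$ produced a multi-target extended-target-collision for $\RO{H}$. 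Case (i) yields the $\succfun^{\eufrma}_{\sigS}(\M)$ term directly, since $\M$ forwards $\A$'s forgery. Case (ii) is bounded by the \mmetcr advantage against $\RO{H}$, and I would invoke the QROM bound for this property (the generic $O(q^3/|\MSpaceAfterHTS|)$-type bound, with $q=\qhash+\qsign+2$ total queries and $\qsign$ targets) to obtain the $8\qsign(\qsign+\qhash+2)^2/|\MSpaceAfterHTS|$ term.

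\emph{The hard part} will be arguing the \mmetcr collision term cleanly and justifying that the reprogramming in $G_1$ is faithfully captured by \cref{prop:basic}: I must ensure that the adversary's view in $G_1$ is \emph{exactly} the \gameRepro game's view, i.e. that the side-information $z_i$ returned to $\A$ and the fixed message portion $\messageAfterHTS_i$ fit the $(\xFirst_1,\xFirst_2)$ template, and that no two reprogramming positions collide in a way that breaks consistency. The main subtlety is counting queries correctly across the composed games — the distinguisher built from $\A$ must simulate verification and the final forgery check internally, which is why the effective query bound is $\qhash+\qsign+1$ rather than $\qhash$ — and verifying that the \mmetcr bound and the reprogramming bound use consistent query budgets. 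Assembling these three contributions and collecting constants then yields the stated inequality, with $\M$'s running time dominated by relaying $\A$'s computation plus $O(\qsign)$ bookkeeping, hence about that of $\A$.
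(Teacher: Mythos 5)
Your overall architecture is right --- exclude collision-type forgeries via \mmetcr, simulate signing by adaptive reprogramming, and extract an \eufrma forgery --- and it matches the paper's game sequence up to the order of the two hops. The genuine gap is in the accounting of the third term. You apply \cref{prop:basic} once, for the hop that replaces honest signing by reprogramming, obtaining $\frac{3\qsign}{2}\sqrt{(\qhash+\qsign+1)/|Z|}$, and then assert that ``doubling the bracket'' yields the claimed $3\qsign\sqrt{\cdot}$; nothing in your argument produces that factor of $2$. In the paper the missing half comes from the \mmetcr step itself: the QROM bound for \mmetcr is not just the search term $8\qsign(\qsign+\qhash+2)^2/|\MSpaceAfterHTS|$ that you quote, but
\[
\succmmetcr{\RO{H}}{\B,\qsign} \leq \frac{8 \qsign(\qsign+\qhash+2)^2}{|\MSpaceAfterHTS|} + \frac {3\qsign}{2}\sqrt{ \frac{\qhash + \qsign +1}{|Z|}}\enspace,
\]
because the reduction that embeds the average-case search instance into $\RO{H}$ must itself reprogram the oracle at each of the $\qsign$ challenge positions $(z_i\|\messageAfterHTS_i)$, and the indistinguishability of that reprogramming costs another application of \cref{prop:basic}. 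This is exactly the improvement over \cite{PKC:HulRijSon16} that the theorem relies on; you cannot simply cite a known ``generic'' bound here, since the improved \mmetcr bound is itself proved in the paper using the same reprogramming lemma. The paper states explicitly that the third term ``is the result of $2\qsign$ reprogrammings. Half of them are used in the QROM bound for \mmetcr, the other half in the reduction $\M$.''

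A secondary issue is your ordering. You reprogram first and only then split the winning event, so your case (ii) collision is against the already-reprogrammed oracle, where $\messageBeforeHTS_i$ is a freshly sampled value rather than the original $\RO{H}(z_i\|\messageAfterHTS_i)$; the \mmetcr challenge oracle $\bbox$ does not reprogram, so this event does not directly reduce to \mmetcr without paying to undo the reprogramming --- which is precisely where the second $\frac{3\qsign}{2}\sqrt{\cdot}$ contribution would reappear. The paper avoids this by excluding collisions in game $G_1$, \emph{before} any reprogramming, so that the \mmetcr reduction $\B$ simulates the unmodified game with the real oracle and real signing key.
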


The second term accounts for the complexity to find a second preimage for one of the messages $\messageBeforeHTS_i$, which is an unavoidable generic attack. The third term is the result of $2\qsign$ reprogrammings. Half of them are used in the QROM bound for \mmetcr, the other half in the reduction $\M$. This term accounts for an attack that correctly guesses the random bitstring used by the signing oracle for one of the queries (such an attack still would have to find a collision for this part but this is inherently not reflected in the used proof technique).

\begin{proof}
We now relate the \eufcma security of \sigSp to the \eufrma security of \sigS via a sequence of games. 

\heading{Game $G_0$.} We begin with the original \eufcma game for \sigSp in game $G_0$. The success probability of \A in this game is $\Adv^{\eufcma}_{\sigSp}(\A)$ per definition.

\heading{Game $G_1$.} We obtain game $G_1$ from game $G_0$ by adding an additional condition. Namely, game $G_1$ returns $0$ if there exists an $0 < i \leq \qsign$ such that $\RO{H}(z^*\|\messageAfterHTS^*) = \RO{H}(z_i\|\messageAfterHTS_i)$, where $z^*$ is the random element in the forgery signature,
and $z_i$ is the random element in the signature returned by $\oracleSIGN(\messageAfterHTS_i)$ as the answer to the $i$th query.
We will now argue that 
\[
	| \Pr [G_{0}^{\A} \Rightarrow 1] - \Pr [G_1^{\A} \Rightarrow 1] | \leq \frac{8\qsign(\qsign+\qhash+2)^2}{|\MSpaceAfterHTS|} + \frac {3\qsign}{2}  \sqrt{ \frac{\qhash + \qsign + 1}{|Z|}}
\enspace .\]

Towards this end, we give a reduction \B in \cref{fig:mmetcr:reduction}, that breaks the \mmetcr security of $\RO{H}$ whenever the additional condition is triggered, making $\qsign+\qhash+1$ queries to its random oracle. \B simulates the \eufcma game for \sigSp, using $\RO{H}$ and an instance of \sigS.
Clearly, \B runs in about the same time as game $G_0^\A$, and succeeds whenever \A succeeds and the additional condition is triggered. To complete this step, it hence remains to show that the success probability of any such $(\qsign+\qhash+1)$-query adversary is
\begin{equation}\label{eq:Advantage:MeTCR}
	\succmmetcr{\RO{H}}{\B,\qsign} \leq \frac{8 \qsign(\qsign+\qhash+2)^2}{|\MSpaceAfterHTS|} + \frac {3\qsign}{2}  \sqrt{ \frac{\qhash + \qsign +1}{|Z|}} \enspace .
\end{equation}

We delay the proof of \cref{eq:Advantage:MeTCR} until the end. 
\ifeprint
\begin{figure}[t]
 \else 
\begin{figure}[h] \fi
	\begin{center} \fbox{
			
	\nicoresetlinenr	
	
	\begin{minipage}[t]{0.45\textwidth}
		
		\underline{$\B^{\bbox,\qRO{H}}( )$}
		\begin{nicodemus}
			
			\item $(\pk, \sk) \leftarrow \KG$
			
			\item $(\messageAfterHTS^*,\signatureAfterHTS^*) = \A^{\oracleSIGN,\qRO{H}}(\pk)$ 
			
			\item Parse $\signatureAfterHTS^*$ as $(z^*,\sigma^*)$
			
			\item \pcif $\exists j: \RO{H}(z^*\|\messageAfterHTS^*) = \RO{H}(z_j\|\messageAfterHTS_j)$
			
			\item \quad $i := j$
			
			\item \pcelse $i \uni [1, \qsign]$
			
			\item \pcreturn $(\messageAfterHTS^*, z^*,i)$
			
		\end{nicodemus}
		
	\end{minipage}
	
	\qquad
	
	\begin{minipage}[t]{0.35\textwidth}
		
		\underline{$\oracleSIGN(\messageAfterHTS_i)$}
		\begin{nicodemus}
			\item $z_i \exec \bbox(\messageAfterHTS_i)$
			
			\item $\sigma_i \exec \Sign(\sk,\RO{H}(z_i, \messageAfterHTS_i))$
			
			\item \pcreturn $(z_i,\sigma_i)$
			
		\end{nicodemus}
	\end{minipage}
}\end{center}
    \ifnum\tightOnSpace=1 \vspace{-0.4cm} \fi
	\caption{Reduction $\B$ breaking \mmetcr.  Here, $\bbox$ is the \mmetcr challenge oracle.}
	\label{fig:mmetcr:reduction}
	\vspace{-.2in}
\end{figure}

\ifeprint
\begin{figure}[b] 
\begin{center} \fbox{
			
	\nicoresetlinenr	
	
	\begin{minipage}[t]{0.53\textwidth}
		
		\underline{Game $G_2$}
		\begin{nicodemus}
			
			\item $i := 1$
			
			\item $(\pk, \sk) \leftarrow \KG()$
			
			\item $(\messageAfterHTS^*,\signatureAfterHTS^*) = \A^{\oracleSIGN,\qRO{H}}(\pk)$ 
			
			\item Parse $\signatureAfterHTS^*$ as $(z^*,\sigma^*)$
			
			\item \pcif $\exists 1 \leq i\leq \qsign: \RO{H}(z^*\|\messageAfterHTS^*) = \RO{H}(z_i\|\messageAfterHTS_i)$
			
                        \item \quad \pcreturn $0$
			
			\item \pcreturn $\VerifySig(\pk,\messageAfterHTS^*,\sigma^*) \wedge \messageAfterHTS^* \not\in\{\messageAfterHTS_i\}_{i=1}^{\qsign}$
			
		\end{nicodemus}
		
	\end{minipage}
	
	\qquad
	
	\begin{minipage}[t]{0.3\textwidth}
		
		\underline{$\oracleSIGN(\messageAfterHTS_i)$}
		\begin{nicodemus}
			\item $z_i \uni Z, \messageBeforeHTS_i \uni \MSpace$ \label{line:g2:sampling}
			
			\item $\RO{H} := \RO{H}^{(z_i \| \messageAfterHTS_i) \mapsto \messageBeforeHTS_i}$ \label{line:g2:reprog}
			
			\item $\sigma_i \exec \Sign(\sk,\messageBeforeHTS_i)$
			
			\item $i := i+1$
			
			\item \pcreturn $(z_i,\sigma_i)$
			
		\end{nicodemus}
	\end{minipage}
}\end{center}
	\ifnum\tightOnSpace=1 \vspace{-0.4cm} \fi
	\caption{Game $G_2$.}
	\label{fig:cma2rma:game}
\end{figure}
\fi

\heading{Game $G_2$.} The next game differs from $G_1$ in the way the signing oracle works. In game $G_2$ (see \cref{fig:cma2rma:game}), the $i$th query to $\oracleSIGN$ is answered by first sampling a random value $z_i$, as well as a random message $\messageBeforeHTS_i$,
and programming $\RO{H'} := \RO{H'}^{(z_i \| \messageAfterHTS_i) \mapsto \messageBeforeHTS_i}$. Then $\messageBeforeHTS_i$ is signed using the secret key.
We will now show that 

\[| \Pr [G_{1}^{\A} \Rightarrow 1]  - \Pr [G_2^{\A} \Rightarrow 1] | \leq  \frac {3\qsign}{2}  \sqrt{ \frac{\qhash + \qsign + 1}{|Z|}} \enspace .\]

Consider a reduction \C that simulates game $G_2$ for \A to distinguish the $\gameRepro_{b}$ game. Accordingly, \C forwards access to its own oracle $\RO{O}_b$ to \A instead of $\RO{H}$. Instead of sampling $z_i,\messageBeforeHTS_i$ itself in line~\ref{line:g2:sampling} and programming $\RO{H}$ in line~\ref{line:g2:reprog}, \C obtains $z_i\exec\Reprogram(\messageAfterHTS_i)$ from its own oracle and computes $\messageBeforeHTS_i := \RO{O}_b(z_i\|\messageAfterHTS_i)$ as the output of its random oracle. Now, if \C plays in $\gameRepro_{0}$ it perfectly simulates $G_1$ for \A, as the oracle remains unchanged. If \C plays in $\gameRepro_{1}$ it perfectly simulates $G_2$, as can be seen by inlining \Reprogram and removing doubled calls used to recompute $\messageBeforeHTS_i$. Consequently,
\ifeprint
\[
| \Pr[G_1^{\A} \Rightarrow 1] -  \Pr[G_2^{\A} \Rightarrow 1] |
 = 
| \Pr[\gameRepro_{0}^{\C^\A} \Rightarrow 1] -  \Pr[\gameRepro_{1}^{\C^\A} \Rightarrow 1] |
\leq \frac {3\qsign} 2  \sqrt{\frac{\qhash+\qsign+1}{|Z|}} \enspace .
\]
\else
\begin{align*}
| \Pr[G_1^{\A} \Rightarrow 1] -  \Pr[G_2^{\A} \Rightarrow 1] |&\\
&\hspace{-1in} = 
| \Pr[\gameRepro_{0}^{\C^\A} \Rightarrow 1] -  \Pr[\gameRepro_{1}^{\C^\A} \Rightarrow 1] |
\leq \frac {3\qsign} 2  \sqrt{\frac{\qhash+\qsign+1}{|Z|}} \enspace .
\end{align*}
\fi

\ifeprint
\else
\begin{figure}[h] 
\begin{center} \fbox{
			
	\nicoresetlinenr	
	
	\begin{minipage}[t]{0.53\textwidth}
		
		\underline{Game $G_2$}
		\begin{nicodemus}
			
			\item $i := 1$
			
			\item $(\pk, \sk) \leftarrow \KG()$
			
			\item $(\messageAfterHTS^*,\signatureAfterHTS^*) = \A^{\oracleSIGN,\qRO{H}}(\pk)$ 
			
			\item Parse $\signatureAfterHTS^*$ as $(z^*,\sigma^*)$
			
			\item \pcif $\exists 1 \leq i\leq \qsign: \RO{H}(z^*\|\messageAfterHTS^*) = \RO{H}(z_i\|\messageAfterHTS_i)$
			
                        \item \quad \pcreturn $0$
			
			\item \pcreturn $\VerifySig(\pk,\messageAfterHTS^*,\sigma^*) \wedge \messageAfterHTS^* \not\in\{\messageAfterHTS_i\}_{i=1}^{\qsign}$
			
		\end{nicodemus}
		
	\end{minipage}
	
	\qquad
	
	\begin{minipage}[t]{0.3\textwidth}
		
		\underline{$\oracleSIGN(\messageAfterHTS_i)$}
		\begin{nicodemus}
			\item $z_i \uni Z, \messageBeforeHTS_i \uni \MSpace$ \label{line:g2:sampling}
			
			\item $\RO{H} := \RO{H}^{(z_i \| \messageAfterHTS_i) \mapsto \messageBeforeHTS_i}$ \label{line:g2:reprog}
			
			\item $\sigma_i \exec \Sign(\sk,\messageBeforeHTS_i)$
			
			\item $i := i+1$
			
			\item \pcreturn $(z_i,\sigma_i)$
			
		\end{nicodemus}
	\end{minipage}
}\end{center}
	\ifnum\tightOnSpace=1 \vspace{-0.4cm} \fi
	\caption{Game $G_2$.}
	\label{fig:cma2rma:game}
\end{figure}
\fi

\ifeprint
\begin{figure}[t] 
\else
\begin{figure}[h] \fi
\begin{center} \fbox{
			
	\nicoresetlinenr	
	
	\begin{minipage}[t]{0.36\textwidth}
		
		\underline{$\M^{\A,\qRO{H}}(\pk, \{(\messageBeforeHTS_i,\sigma_i)\}_{1 \leq i \leq \qsign})$}
		\begin{nicodemus}
			
			\item $\RO{H'} := \RO{H}; i := 1$
			
			\item $(\messageAfterHTS^*,\signatureAfterHTS^*) = \A^{\oracleSIGN,\qRO{H'}}(\pk)$ 
			
			\item Parse $\signatureAfterHTS^*$ as $(z^*,\sigma^*)$
			
			\item \pcreturn $(\RO{H}(z^*\|\messageAfterHTS^*), \sigma)$
			
		\end{nicodemus}
		
	\end{minipage}
	
	\;
	
	\begin{minipage}[t]{0.4\textwidth}
		
		\underline{$\oracleSIGN(\messageAfterHTS_i)$}
		\begin{nicodemus}
			\item $z_i \uni Z$

			\item \pcif $\exists \hat{\messageBeforeHTS_i}$ s. th. $(z_i \| \messageAfterHTS_i, \hat{\messageBeforeHTS_i}) \in \List{\RO{H'}}$
			
				\item \quad $\List{\RO{H'}} := \List{\RO{H'}} \setminus \lbrace  (z_i \| \messageAfterHTS_i, \hat{\messageBeforeHTS_i}) \rbrace$ \label{line:cma2rma:OverwriteOracleValues}
			
			\item  $\List{\RO{H'}} := \List{\RO{H'}} \cup \lbrace (z_i \| \messageAfterHTS_i, \messageBeforeHTS_i)\rbrace$ \label{line:cma2rma:LazySampling}
			
			\item $i := i+1$
			
			\item \pcreturn $(z_i,\sigma_i)$
			
		\end{nicodemus}
		
		\

		\underline{$\RO{H'}(z \| \messageAfterHTS)$}
		\begin{nicodemus}
			
			\item \pcif $\exists \messageBeforeHTS$ s. th. $(z \| \messageAfterHTS, \messageBeforeHTS) \in \List{\RO{H'}}$
			
				\item \quad \pcreturn $\messageBeforeHTS$
			
			\item \pcelse \pcreturn $\RO{H}(z \| \messageAfterHTS)$
			
		\end{nicodemus}
	
	\end{minipage}

}\end{center}
    \ifnum\tightOnSpace=1 \vspace{-0.4cm} \fi
	\caption{Reduction $\M$ reducing \eufrma to \eufcma.}
	\label{fig:cma2rma}
\vspace{-.2in}
\end{figure}

To conclude our main argument, we will now argue that 

\[
\Pr[G_2^{\A} \Rightarrow 1] = \Adv^{\eufrma}_{\sigS}(\M) \enspace,
\]
where reduction $\M$ is given in \cref{fig:cma2rma}.
Since reprogramming is done a-posteriori in game $G_{2}$, $\M$ can simulate a reprogrammed oracle $\RO{H'}$ via access to its own oracle $\RO{H}$ and an initial table look-up:
$\M$ keeps track of the (classical) values on which $\RO{H'}$ has to be reprogrammed (see line~\ref{line:cma2rma:LazySampling})
and tweaks $\Ad{A}$'s oracle $\RO{H'}$, accordingly.
The latter means that, given the table $\List{\RO{H'}}$ of pairs $(z_i \| \messageAfterHTS_i, \messageBeforeHTS_i)$ that were already defined in previous signing queries,
controlled on the query input being equal to $z_i \| \messageAfterHTS_i$ output $\messageBeforeHTS_i$,
and controlled on the input not being equal to any $z_i \| \messageAfterHTS_i$, forward the query to $\M$'s own oracle $\RO{H}$.
If needed, $\M$ reprograms values (see line~\ref{line:cma2rma:OverwriteOracleValues}) by adding an entry to its look-up table. Given quantum access to $\RO{H}$, \M can implement this as a quantum circuit, allowing quantum access to $\RO{H'}$.
\label{desc:M}

Hence, \M perfectly simulates game $G_2$ towards \A. The only differences are that \M neither samples the $\messageBeforeHTS_i$ itself, nor computes the signatures for them. Both are given to \M by the \eufrma game. However, they follow the same distribution as in game $G_2$. Lastly, whenever \A would win in game $G_2$, \M succeeds in its \eufrma game as it can extract a valid forgery for \sigS on a new message. This is enforced with the condition we added in game $G_1$. 

The final bound of the theorem follows from collecting the bounds above, and it remains to prove the bound on \mmetcr claimed in \cref{eq:Advantage:MeTCR}.
We improve a bound from~\cite{PKC:HulRijSon16}, in which it was shown that for a small constant $c$,\footnote{\label{foo:corrected}This is a corrected bound from \cite{PKC:HulRijSon16}, see discussion in \cref{sec:Presentation_of_reprogramming}.}
\[
\succmmetcr{\RO{H}}{\B,\qsign} \leq \frac{8 \qsign(\qhash+1)^2}{|\MSpaceAfterHTS|} + c\frac{ \qsign \qhash}{\sqrt{|Z|}} \enspace .
\]

Their proof of this bound is explicitly given for the single target step.
It is then argued that the multi-target step can be easily obtained, which was recently confirmed in~\cite{XMSSembed}.
The proof proceeds in two steps. The authors construct a reduction that generates a random function from an instance of an average-case search problem which requires to find a 1 in a boolean function $f$. The function has the property that all preimages of a randomly picked point $\messageBeforeHTS$ in the image correspond to $1$s of $f$. When \A makes its query to \bbox, the reduction picks a random $z$ and programs $\RO{H}^{(z\|\messageAfterHTS) \mapsto \messageBeforeHTS}$. An extended target collision for $(z\| \messageAfterHTS)$ hence is a $1$ in $f$ by design. This gives the first term in the above bound, which is known to be optimal. 

The second term in the bound is the result of above reprogramming. I.e., it is a bound on the difference in success probability of \A when playing the real game or when run by the reduction. More precisely, the bound is the result of analyzing the distinguishing advantage between the following two games (which we rephrased to match our notation):
        
\heading{Game $G_a$.} \A gets access to $\RO{H}$. In phase 1, after making at
  most $q_1$ queries to $\RO{H}$, \A outputs a message $\messageAfterHTS \in \MSpaceAfterHTS$. Then
  a random $z \uni Z$ is sampled and $(z,\RO{H}(z\| \messageAfterHTS))$ is
  handed to $\A$. $\A$ continues to the second phase and makes at most
  $q_2$ queries. $\A$ outputs $b\in \{0,1\}$ at the end.

\heading{Game $G_b$.} \A gets access to $\RO{H}$.  After making at most $q_1$
  queries to $\RO{H}$, \A outputs a message $\messageAfterHTS \in \MSpaceAfterHTS$. Then a random
  $z \uni Z$ is sampled as well as a random range element
  $\messageBeforeHTS \uni \MSpaceBeforeHTS$. Program $\RO{H} := \RO{H}^{(z\|\messageAfterHTS)\mapsto \messageBeforeHTS}$. $\A$ receives $(z, \messageBeforeHTS=\RO{H}(z\| \messageAfterHTS))$ and proceeds to the second
  phase. After making at most $q_2$ queries, $\A$ outputs
  $b\in \{0,1\}$ at the end.

The authors of~\cite{PKC:HulRijSon16} showed that for a small constant $c$ (see \cref{foo:corrected}),%
\[
 | \Pr[G_b^{\A} \Rightarrow 1] -  \Pr[G_a^{\A} \Rightarrow 1] |\leq c\frac{\qhash}{\sqrt{|Z|}}\,.
\]
A straightforward application of \cref{prop:basic} shows that  
\[
 | \Pr[G_b^{\A} \Rightarrow 1] -  \Pr[G_a^{\A} \Rightarrow 1] |\leq \frac{3}{2}  \sqrt{\frac{\qhash+1}{|Z|}}\,.
\]
as the games above virtually describe the games $\gameRepro_{b}$ with the exception that in $\gameRepro_{b}$ the oracle \Reprogram only returns $z$ and not $\RO{H}(z\|\messageAfterHTS))$. Hence, a reduction needs one additional query per reprogramming.

When applying this to the \qsign-target case, a hybrid argument shows that the bound becomes $\nicefrac{3\qsign}{2} \sqrt{\nicefrac{\qhash+1}{|Z|}}$. Combining this with the reduction of~\cite{PKC:HulRijSon16} and taking into account that \B makes (\qsign+\qhash+1) queries confirms 
\ifsubmission
the bound claimed in \cref{eq:Advantage:MeTCR}.
\else
the claimed bound of 
\[
\succmmetcr{\RO{H}}{\B,\qsign} \leq \frac{8 \qsign(\qsign+\qhash+2)^2}{|\MSpaceAfterHTS|} + \frac {3\qsign}{2}  \sqrt{ \frac{\qhash + \qsign + 1}{|Z|}} \enspace .
\]
\fi
\end{proof}

\subsection{Tight security for message hashing of RFC 8391}
Another extremely similar application of our basic bound is for another 
case of the hash-and-sign construction, used to turn a fixed message length \eufcma-secure signature scheme \sigS into a variable input length one \sigSp. This case is essentially covered already by \cref{subse:HTS:RMA}: A proof can omit game $G_2$ and state a simple reduction that simulates game $G_1$ to extract a forgery. The bound changes accordingly, requiring one reprogramming bound less and becoming  
$\succfun^{\eufcma}_{\sigSp}(\A)\leq\succfun^{\eufcma}_{\sigS}(\M)
			+ \nicefrac{8 \qsign(\qsign+\qhash)^2}{|\MSpaceAfterHTS|} 
			+ 1.5\qsign  \sqrt{ \nicefrac{\qhash  + \qsign}{|Z|}}$. 

In~\cite{RFC8391}%
\ifeprint
, the authors
\else
 it was
\fi
 suggested that for stateful hash-based signature schemes%
\ifeprint
, like, e.g.,  XMSS~\cite{RFC8391}%
\else
like XMSS~\cite{RFC8391}%
\fi
,
the multi-target attacks which cause the first occurence of \qsign in the bound could be avoided. This was recently formally proven in~\cite{XMSSembed}. The idea is to exploit the property of hash-based signature schemes that every signature has an index which binds the signature to a one-time public key. Including this index into the hash forces an adversary to also include it in a collision to make it useful for a forgery. Even more, the index is different for every signature and therefore for every target hash. 

Summarizing, the authors of~\cite{XMSSembed} showed that 
there exists a tight standard model proof for the hash-and-sign construction,
as used by XMSS in RFC 8391, if the used hash function is \qsign-target extended target-collision 
resistant with nonce (\nmmetcr, see \cref{app:def:hash}), an extension of \mmetcr that considers the index.
To demonstrate the relevance of this result, the authors analyzed the 
\nmmetcr-security of hash functions under generic attacks, proving a bound for \nmmetcr-security in the QROM in the same way as outlined for \mmetcr above.
So far, this bound was suboptimal, as it included a bound on distinguishing variants of games $G_a$ and $G_b$ above in which $\RO{H}$ takes an additional, externally given index as input (for the modified games see \cref{app:def:hash}).
Hence, the bound was $\succnmmetcr{\RO{H}}{\A,p} \leq \nicefrac{8 (\qsign+\qhash)^2}{|\MSpaceAfterHTS|} + \nicefrac{32\qsign\qhash^2}{|Z|}$.
Due to the translation error, we believe that the second term needs to be updated to $32 \qsign \cdot \alpha$,
where $\alpha = \nicefrac{\qhash}{\sqrt{|Z|}}$, instead of $32  \qsign \cdot \alpha^2$.
In~\cite{XMSSembed}, it was conjectured that in $\alpha$, a factor of $\sqrt{\qhash}$ can be removed.
We can confirm this conjecture. 
As in the case above, \cref{prop:basic} can be directly applied to the distinguishing bound for games $G_a$ and $G_b$.
A reduction would simply treat the index as part of the message sent to $\Reprogram$. Plugging this into the proof in~\cite{XMSSembed} leads to the bound
\[ 
\succnmmetcr{\RO{H}}{\A,p} \leq \frac{8 (\qsign+\qhash)^2}{|\MSpaceAfterHTS|} + 1.5\qsign\sqrt{\frac{\qhash+\qsign}{|Z|}} \enspace .
\]
\section{Applications to the Fiat-Shamir transform}\label{sec:SignaturesInQROM}

For the sake of completeness, we include all used definitions
for identification and signature schemes in \cref{sec:SignaturesInQROM:Defs:Appendix}.
The only non-standard (albeit straightforward) definition is computational $\HVZK$ for multiple transcripts, which we give below.

\heading{(Special) $\HVZK$ simulator.}
We first recall the notion of an $\HVZK$ simulator.
Our definition comes in two flavours:
While a standard $\HVZK$ simulator generates transcripts relative to the public key,
a \textit{special} $\HVZK$ simulator generates transcripts relative to (the public key and) a particular challenge.
\begin{definition}[(Special) $\HVZK$ simulator]	\label{Def:HVZK:Simulator}
	An \emph{$\HVZK$ simulator} is an algorithm $\Sim$ that takes as input the public key $\pk$
	and outputs a transcript $(\commitment, \challenge, \response)$.
	A \emph{special $\HVZK$ simulator} is an algorithm $\Sim$ that takes as input the public key $\pk$ and a challenge $c$
	and outputs a transcript $(\commitment, \challenge, \response)$.
\end{definition}

\heading{Computational HVZK for multiple transcripts.}
In our security proofs, we will have to argue that collections of honestly generated transcripts
are indistinguishable from collections of simulated ones.
Since it is not always clear whether computational $\HVZK$ implies computational $\HVZK$ for \emph{multiple } transcripts,
we extend our definition, accordingly:
In the multi-$\HVZK$ game, the adversary obtains a collection of transcripts (rather than a single one).
Similarly, we extend the definition of \textit{special} computational $\HVZK$ from \cite{EC:AOTZ20}.
\begin{definition}[(Special) computational multi-$\HVZK$]	\label{Def:HVZK:Computational}
	Assume that $\IdScheme$ comes with an $\HVZK$ simulator $\Sim$.
	We define multi-$\HVZK$ games $t\textnormal{-}\HVZK$ as in \cref{fig:Def:HVZK},
	and the  multi-$\HVZK$ \textit{advantage function of an adversary $\Ad{A}$ against $\IdScheme$}
	as
	\[ \Adv^{t\textnormal{-}\HVZK}_{\IdScheme}(\Ad{A})
		:= \left|\Pr [{t\textnormal{-}\HVZK_1^{\Ad{A}}}_\IdScheme \Rightarrow 1 ]
			 - \Pr [{t\textnormal{-}\HVZK_0^{\Ad{A}}}_\IdScheme \Rightarrow 1 ] \right|
	\enspace. \]
	To define \textit{special} multi-$\HVZK$, assume that $\IdScheme$ comes with a special $\HVZK$ simulator $\Sim$.
	We define multi-$\specialHVZK$ games as in \cref{fig:Def:HVZK},
	and the  multi-$\specialHVZK$ \textit{advantage function of an adversary $\Ad{A}$ against $\IdScheme$}
	as
	\[ \Adv^{t\textnormal{-}\specialHVZK}_{\IdScheme}(\Ad{A})
	:= \left|\Pr [{t\textnormal{-}\specialHVZK_1^{\Ad{A}}}_\IdScheme \Rightarrow 1 ]
	- \Pr [{t\textnormal{-}\specialHVZK_0^{\Ad{A}}}_\IdScheme \Rightarrow 1 ] \right|
	\enspace. \]
	
	\begin{figure}[h] \begin{center} \fbox{ \small
				
				\nicoresetlinenr
				
				\begin{minipage}[t]{4.2cm}
					
					\underline{\textbf{GAME} $t$-$\HVZK_b$}
					
					\begin{nicodemus}
						
						\item $(\pk, \sk) \leftarrow \IG(\param)$
						
						\item \pcfor $i \in \lbrace 1, \cdots, t \rbrace$
						
						\item \quad $\transcript_i^0 \leftarrow \getTrans(\sk)$
						
						\item \quad $\transcript_i^1 \leftarrow \Sim(\pk)$
						
						\item $b' \leftarrow \Ad{A} (\pk, (\transcript_i^b)_{1 \leq i  \leq t })$
						
						\item \pcreturn $b'$
						
					\end{nicodemus}		
					
				\end{minipage}
				
				\quad 
				
				\begin{minipage}[t]{4.7cm}
					
					\underline{\textbf{GAME} $t$-$\specialHVZK_b$}
					\begin{nicodemus}
						
						\item $i:=1$
						
						\item $(\pk, \sk) \leftarrow \IG(\param)$
						
						\item $b' \leftarrow \Ad{A}^{\oraclegetTrans} (\pk)$
						
						\item \pcreturn $b'$
						
					\end{nicodemus}		
					
					\ \\
					
					\underline{$\oraclegetTrans(\challenge)$}
					
					\begin{nicodemus}
						
						\item \pcif $i > t$ \pcreturn $\bot$
						
						\item $i := i+1 $
						
						\item $\transcript^0 \leftarrow \getTransChallenge(\sk, \challenge)$
						
						\item $\transcript^1 \leftarrow \Sim(\pk, \challenge)$
						
						\item \pcreturn $\transcript^b$
						
					\end{nicodemus}		
					
				\end{minipage}

		} \end{center}
		\caption{Multi-$\HVZK$ game and multi-$\specialHVZK$ game for $\IdScheme$.
			Both games are defined relative to bit $b \in \bits$,
			and to the number $t$ of transcripts the adversary is given.
		}
		\label{fig:Def:HVZK}
	\end{figure}
	
\end{definition}

\heading{Statistical HVZK.}
Unlike computational $\HVZK$, \textit{statistical} $\HVZK$ can be generalized generically, %
we therefore do not need to deviate from known statistical definitions (included in \cref{sec:SignaturesInQROM:Defs:Appendix}).
We denote the respective upper bound for (special) statistical $\HVZK$
by $\boundStatisticalHVZK$ ($\boundSpecialStatisticalHVZK$).

\subsection{Revisiting the Fiat-Shamir transform}\label{sec:FiatShamirInQROM}

In this section, we show that if an identification scheme $\IdScheme$ is $\HVZK$,
and if $\SignatureScheme := \FS[\IdScheme, \ROChallenge]$ possesses $\UFCMAZero$ security
(also known as $\UFKOA$ security),
then $\SignatureScheme$ is also $\UFCMA$ secure, in the QROM.
Note that our theorem makes no assumptions on how $\eufcma_0$ is proven. For arbitrary ID schemes this can be done using a general reduction for the Fiat-Shamir transform \cite{C:DFMS19}, incurring a $\qhash^2$ multiplicative loss that is, in general, unavoidable \cite{C:DonFehMaj20}. For a \emph{lossy} ID scheme $\IdScheme$, $\eufcma_0$ of  $\FS[\IdScheme, \ROChallenge]$ can be reduced tightly to the extractability of $\IdScheme$ in the QROM \cite{EC:KilLyuSch18}. In addition, while we focus on the standard Fiat-Shamir transform for ease of presentation, the following theorem generalizes to signatures constructed using the multi-round generalization of the Fiat-Shamir transform like, e.g., MQDSS \cite{AC:CHRSS16}.

\begin{theorem}\label{thm:FS_NoMA_to_CMA_tight}
	
	For any (quantum) $\UFCMA$ adversary $\Ad{A}$ issuing at most $\qsign$ (classical) queries to the signing oracle $\oracleSIGN$
	and at most $q_{\ROChallenge}$ quantum queries to $\ROChallenge$,
	there exists a $\UFCMAZero$ adversary $\Ad{B}$
	and a multi-$\HVZK$ adversary $\AdversaryHVZK$
	such that
	\begin{align} \label{eq:bound:FS}
		\succfun^{\UFCMA}_{\FS[\IdScheme, \ROChallenge]}(\Ad{A})
		\leq
			\succfun^{\UFCMAZero}_{\FS[\IdScheme, \ROChallenge]}(\Ad{B})
			+  \Adv^{\qsign-\HVZK}_{\IdScheme}(\AdversaryHVZK) \\ 
			+ \frac {3\qsign} 2  \sqrt{ (\qhash + \qsign + 1) \cdot  \maxEntropyCommit}	\enspace,
	\end{align}
	and the running time of $\Ad{B}$
	and $\AdversaryHVZK$
	is about that of $\Ad{A}$.
	The bound given in \cref{eq:bound:FS} also holds for the modified Fiat-Shamir transform
	that defines challenges by letting $\challenge := \ROChallenge(\commitment, m, \pk)$
	instead of letting $\challenge := \ROChallenge(\commitment, m)$.
\end{theorem}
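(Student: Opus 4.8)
The plan is to prove \cref{thm:FS_NoMA_to_CMA_tight} by a short sequence of three games that mirrors the classical ROM strategy: replace honestly generated transcripts by simulated ones, and patch $\ROChallenge$ \emph{a posteriori} through the signing oracle. Let $G_0$ be the original $\UFCMA$ game, so that $\Pr[G_0^{\Ad{A}} \Rightarrow 1] = \succfun^{\UFCMA}_{\FS[\IdScheme, \ROChallenge]}(\Ad{A})$; here every signing query on a message $m$ runs $(\commitment, \state) \leftarrow \Commit(\sk)$, sets $\challenge := \ROChallenge(\commitment, m)$, computes $\response \leftarrow \Respond(\sk, \commitment, \challenge, \state)$, and returns $(\commitment, \response)$.

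First I would pass to $G_1$, in which each signing query instead draws a full honest transcript $(\commitment, \challenge, \response) \leftarrow \getTrans(\sk)$ (so $\challenge$ is freshly uniform), reprograms $\ROChallenge(\commitment, m) := \challenge$, and returns $(\commitment, \response)$. The only change is that the challenge carried by each signature is now a fresh value planted into the oracle rather than the oracle's original image at $(\commitment, m)$. I would bound $|\Pr[G_0^{\Ad{A}} \Rightarrow 1] - \Pr[G_1^{\Ad{A}} \Rightarrow 1]|$ by a distinguisher $\Ad{C}$ for the games $\gameRepro_b$: on each signing query $\Ad{C}$ calls $\Reprogram$ with the distribution that samples $(\commitment, \state) \leftarrow \Commit(\sk)$ and emits position $(\commitment, m)$ with side information $\state$, then reads $\challenge := \RO{O_b}(\commitment, m)$ with a single oracle query and completes the transcript using $\state$. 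In $\gameRepro_0$ the oracle is untouched, so $\Ad{C}$ reproduces $G_0$; in $\gameRepro_1$ the position is reprogrammed to a fresh value, reproducing $G_1$. Counting the $\qhash$ queries forwarded from $\Ad{A}$, one per signing query to read the planted challenge, and one to recompute the challenge on the final forgery, $\Ad{C}$ makes $\qhash + \qsign + 1$ oracle queries and $\qsign$ reprogramming calls, each at a position whose marginal is the commitment distribution, i.e.\ with $p_{\max} = \maxEntropyCommit$. \cref{prop:advanced} then yields $|\Pr[G_0^{\Ad{A}} \Rightarrow 1] - \Pr[G_1^{\Ad{A}} \Rightarrow 1]| \leq \frac{3\qsign}{2}\sqrt{(\qhash + \qsign + 1)\maxEntropyCommit}$.

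Next I would pass to $G_2$, identical to $G_1$ except that each signing query draws its transcript from the simulator $\Sim(\pk)$ rather than from $\getTrans(\sk)$. I bound $|\Pr[G_1^{\Ad{A}} \Rightarrow 1] - \Pr[G_2^{\Ad{A}} \Rightarrow 1]|$ by a multi-$\HVZK$ distinguisher $\AdversaryHVZK$ that requests $\qsign$ transcripts up front and uses the $i$-th of them to answer the $i$-th signing query (reprogramming $\ROChallenge$ exactly as above), finally outputting $\Ad{A}$'s winning bit; honest transcripts reproduce $G_1$ and simulated ones reproduce $G_2$, so the gap is precisely $\Adv^{\qsign-\HVZK}_{\IdScheme}(\AdversaryHVZK)$. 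Crucially, in $G_2$ the signing oracle no longer touches $\sk$, so a reduction $\Ad{B}$ against $\UFCMAZero$ can run $\Ad{A}$ on its own challenge key $\pk$, answer signing queries through $\Sim(\pk)$ while maintaining a classical reprogramming table, and forward $\Ad{A}$'s forgery $(m^*, \sigma^*)$. Since a winning forgery uses a fresh message $m^*$, the position $(\commitment^*, m^*)$ was never reprogrammed, so its challenge coincides with $\Ad{B}$'s own oracle value and the forgery verifies in the $\UFCMAZero$ game; hence $\Pr[G_2^{\Ad{A}} \Rightarrow 1] \leq \succfun^{\UFCMAZero}_{\FS[\IdScheme, \ROChallenge]}(\Ad{B})$. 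Collecting the three estimates gives the stated bound, and the variant $\challenge := \ROChallenge(\commitment, m, \pk)$ follows verbatim by folding the fixed, public $\pk$ into the adversarially determined coordinate of the reprogramming position.

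The main obstacle is not any single inequality but faithfully handling the \emph{quantum-accessible} reprogrammed oracle inside the reductions $\AdversaryHVZK$ and $\Ad{B}$: because all reprogramming positions $(\commitment_i, m_i)$ are classical, each reduction can realize $\Ad{A}$'s oracle as its own oracle patched by a small classical lookup table, implemented as a quantum circuit, exactly as in the hash-and-sign reduction $\M$ of \cref{subse:HTS:RMA}. The secondary points demanding care are the query bookkeeping that produces the factor $\qhash + \qsign + 1$ (one extra oracle call per signing query to read the freshly planted challenge, plus one for verifying the forgery) and the freshness argument ensuring the forgery's challenge position escapes every reprogramming, so that validity transfers cleanly to the $\UFCMAZero$ reduction.
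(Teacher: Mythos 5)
Your proposal is correct and follows essentially the same route as the paper's proof: the same three-game sequence (reprogram $\ROChallenge$ a posteriori with a fresh uniform challenge via a $\gameRepro_b$ distinguisher making $\qhash+\qsign+1$ queries and $\qsign$ reprogrammings with $p_{\max}=\maxEntropyCommit$, then switch to simulated transcripts via the multi-$\HVZK$ game with a classical lookup table realizing the patched quantum oracle, then reduce to $\UFCMAZero$ using freshness of $m^*$), and the same observation that the public key contributes no entropy for the $\ROChallenge(\commitment,m,\pk)$ variant. No gaps.
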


Note that if $\IdScheme$ is statistically $\HVZK$, we can replace
	$\Adv^{\qsign-\HVZK}_{\IdScheme}(\AdversaryHVZK)$
with $\qsign \cdot \boundStatisticalHVZK$.

\begin{proof}

\newcounter{CounterNoMAtoCMA} %
\setcounter{CounterNoMAtoCMA}{1}

{
	\newcounter{ReprogramNoMAtoCMA}
	\setcounter{ReprogramNoMAtoCMA}{\theCounterNoMAtoCMA}
	\stepcounter{CounterNoMAtoCMA}
}

{
	\newcounter{SwitchToSimulatedTranscriptNoMAtoCMA}
	\setcounter{SwitchToSimulatedTranscriptNoMAtoCMA}{\theCounterNoMAtoCMA}
}

\newcounter{LastGameNoMAtoCMA}
\setcounter{LastGameNoMAtoCMA}{\theCounterNoMAtoCMA}

\newcommand{\gameReprogramNoMAtoCMA}{G_{\theReprogramNoMAtoCMA}\xspace}
\newcommand{\gameReprogramOnSimulatedTranscriptNoMAtoCMA}{G_{\theReprogramOnSimulatedTranscriptNoMAtoCMA}\xspace}
\newcommand{\gameSwitchToSimulatedTranscriptNoMAtoCMA}{G_{\theSwitchToSimulatedTranscriptNoMAtoCMA}\xspace}
\newcommand{\LastGameNoMAtoCMA}{G_{\theLastGameNoMAtoCMA}\xspace} 
Consider the sequence of games given in \cref{fig:games:FS_NoMA_to_CMA_tight}.

\begin{figure}[h] \begin{center} \makebox[\textwidth][c]{ \fbox{ \small
	
	\nicoresetlinenr
		
	\begin{minipage}[t]{4.2cm}
		
		\underline{\textbf{GAMES} $G_{0}$ - $\LastGameNoMAtoCMA$}
		\begin{nicodemus}
			
			\item  $(\pk, \sk) \leftarrow \IG(\param)$
				
			\item $(m^*, \sigma^*) \leftarrow \Ad{A}^{\oracleSIGN, \qRO{\ROChallenge}} (\pk)$
			
			\item \pcif $m^* \in \ListOfMessages$ \pcreturn 0
			
			\item Parse $(\commitment^*, \response^*) := \sigma^*$ 
				
			\item $\challenge^* := \ROChallenge(\commitment^*, m^*)$
				\label{line:FS_NoMA_to_CMA:DefineHashChallenge}

			\item \pcreturn $\VerifyId(\pk, \commitment^*, \challenge^*, \response^*)$
			
		\end{nicodemus}		
		
	\end{minipage}
	
	\quad
	
	\begin{minipage}[t]{5.1cm}

		\underline{$\oracleSIGN(m)$}
		\begin{nicodemus}
			
			\item 
			 $\ListOfMessages := \ListOfMessages \cup \lbrace m \rbrace$
			
			\item $(\commitment, \challenge, \response) \leftarrow \getTrans(m)$
				\gcom{$G_0$-$G_\before{SwitchToSimulatedTranscriptNoMAtoCMA}$}

			\item $(\commitment, \challenge, \response) \leftarrow \Sim(\pk)$
				\label{line:FS_NoMA_to_CMA:SimulatedTranscript}
				\gcom{$\gameSwitchToSimulatedTranscriptNoMAtoCMA$}%

			\item $\ROChallenge := \ROChallenge^{(\commitment, m) \mapsto \challenge}$
				\label{line:FS_NoMA_to_CMA:ReprogramRO}
				\gcom{$\gameReprogramNoMAtoCMA$ -$\LastGameNoMAtoCMA$}

			\item \pcreturn $\sigma := (\commitment, \response)$
			
		\end{nicodemus}		
	
\end{minipage}

\quad

\begin{minipage}[t]{4cm}

		\underline{$\getTrans(m)$} \gcom{$G_0$-$G_\before{SwitchToSimulatedTranscriptNoMAtoCMA}$}
		\begin{nicodemus}
			
			\item 
			
			$(\commitment, \state) \leftarrow \Commit(\sk)$
			
			\item $\challenge := \ROChallenge(\commitment, m)$
			\label{line:FS_NoMA_to_CMA:DefineHashTranscript}
			\gcom{$G_0$}%
			
			\item $\challenge' \uni \ChallengeSpace$
			\label{line:FS_NoMA_to_CMA:RandomChallenge}
			\gcom{$G_\before{SwitchToSimulatedTranscriptNoMAtoCMA}$}
			
			\item $\response \leftarrow \Respond(\sk, \commitment, \challenge, \state)$
			
			\item \pcreturn $(\commitment, \challenge, \response)$
			
		\end{nicodemus}
		
	\end{minipage}

} }\end{center}
\ifnum\tightOnSpace=1 \vspace{-0.4cm} \fi
\caption{Games $G_{0}$ - $\LastGameNoMAtoCMA$ for the proof of \cref{thm:FS_NoMA_to_CMA_tight}.}
\label{fig:games:FS_NoMA_to_CMA_tight}
\end{figure}

\heading{Game $G_0$.}
{
	Since game $G_0$ is the original $\UF$-$\CMA$ game,
	\[ \succfun^{\UF\text{-}\CMA}_{\FS[\IdScheme, \ROChallenge]}(\Ad{A}) = \Pr [ G_0^{\Ad{A}} \Rightarrow 1]\enspace.\]
}

\heading{Game $\gameReprogramNoMAtoCMA$.}
{
	In game $\gameReprogramNoMAtoCMA$, we change the game twofold:
	First, the transcript is now drawn according to the underlying ID scheme, i.e., it is drawn uniformly at random as opposed to letting $c:=\ROChallenge(\commitment, m)$,
	see line~\ref{line:FS_NoMA_to_CMA:RandomChallenge}.
	Second, we reprogram the random oracle $\ROChallenge$ in line~\ref{line:FS_NoMA_to_CMA:ReprogramRO}
	such that it is rendered a-posteriori-consistent with this transcript,
	i.e., we reprogram $\ROChallenge$ such that $\ROChallenge(\commitment, m) = \challenge$.
	
	To upper bound the game distance, we construct a quantum distinguisher $\Ad{D}$
	in \cref{fig:FS_NoMA_to_CMA_tight:DistinguisherO2H}
	that is run in the adaptive reprogramming games $\gameRepro_{R,b}$ with $R:= q_S$ many reprogramming instances.
	We identify reprogramming position $\xFirst$ with $(\commitment, m)$,
	additional input $\xSecond$ with $\state$, and $y$ with $\challenge$.
	Hence, the distribution $p$ consists 
	of the constant distribution that always returns $m$ (as $m$ was already chosen by $\Ad{A}$),
	together with the distribution $\Commit(\sk)$.
	Since $\Ad{D}$ perfectly simulates game $G_b$ if run in its respective game $\gameRepro_{b}$,
	we have
	\[ \gameDist{ReprogramNoMAtoCMA}{A}
		= | \Pr[\gameRepro_{1}^{\Ad{D}} \Rightarrow 1] - \Pr[\gameRepro_{0}^{\Ad{D}} \Rightarrow 1]  |\enspace. \]
	
	Since $\Ad{D}$ issues $q_S$ reprogramming instructions and $(q_H + q_S + 1)$ many queries to $\RO{H}$,
	\cref{prop:advanced} yields
	\begin{equation}
	| \Pr[\gameRepro_{1}^{\Ad{D}} \Rightarrow 1] - \Pr[\gameRepro_{0}^{\Ad{D}} \Rightarrow 1] |
	\leq \frac {3q_S} 2  \sqrt{ (q_H + q_S + 1) \cdot p_{\max}} \enspace , \label{eq:FS_NoMA_to_CMA_tight:O2H}
	\end{equation}
	
	where $p_{\max} = \mathbb{E}_\IG \max_{\commitment}
		\Pr_{\MakeUppercase{\commitment}, \MakeUppercase{\state} \leftarrow \Commit(\sk)}
			[\MakeUppercase{\commitment} = \commitment] = \maxEntropyCommit$.
	
	\begin{figure}[h] \begin{center} \fbox{ \small
		
		\nicoresetlinenr
		
		\begin{minipage}[t]{4.5cm}
			
			\underline{\textbf{Distinguisher} $\Ad{D}^{\qRO{\ROChallenge}}$}
			\begin{nicodemus}
				
				\item $(\pk, \sk) \leftarrow \IG(\param)$
				
				\item $(m^*, \sigma^*) \leftarrow \Ad{A}^{\oracleSIGN, \qRO{\ROChallenge}} (\pk)$
				
				\item \pcif $m^* \in \ListOfMessages$ \pcreturn 0
				
				\item Parse $(\commitment^*, \response^*) := \sigma^*$
				
				\item $\challenge^* := \ROChallenge(\commitment^*, m^*)$
				
				\item \pcreturn $\VerifyId(\pk, \commitment^*, \challenge^*, \response^*)$
				
			\end{nicodemus}		
			
		\end{minipage}
		
		\quad
		
		\begin{minipage}[t]{6.3cm}
			
			\underline{$\oracleSIGN(m)$}
			\begin{nicodemus}
				
				\item  $\ListOfMessages := \ListOfMessages \cup \lbrace m \rbrace$
				
				\item $(\commitment, \state) \leftarrow \Reprogram(m, \Commit(\sk))$
				
				\item $\challenge := \ROChallenge(\commitment, m)$
				
				\item $\response \leftarrow \Respond(\sk, \commitment, \challenge, \state)$
				
				\item \pcreturn $\sigma := (\commitment, \response)$
				
			\end{nicodemus}
			
		\end{minipage}
				
	} \end{center}
	\ifnum\tightOnSpace=1 \vspace{-0.4cm} \fi
	\caption{Reprogramming distinguisher $\Ad{D}$ for the proof of \cref{thm:FS_NoMA_to_CMA_tight}.}
	\label{fig:FS_NoMA_to_CMA_tight:DistinguisherO2H}
	\end{figure}
	
}

\heading{Game $\gameSwitchToSimulatedTranscriptNoMAtoCMA$.}
{
	In game $\gameSwitchToSimulatedTranscriptNoMAtoCMA$, we change the game such that the signing algorithm does not make use of the secret key any more:
	Instead of being defined relative to the honestly generated transcripts,
	signatures are now defined relative to the simulator's transcripts.
	We will now upper bound $\gameDist{SwitchToSimulatedTranscriptNoMAtoCMA}{A}$
	via computational multi-$\HVZK$.
	Consider multi-$\HVZK$ adversary $\AdversaryHVZK$ in \cref{fig:FS_NoMA_to_CMA_tight:DistinguisherHVZK}.
	$\AdversaryHVZK$ takes as input a list of $q_s$ many transcripts,
	which are either all honest transcripts or simulated ones.
	Since reprogramming is done a-posteriori in game $G_{\before{SwitchToSimulatedTranscriptNoMAtoCMA}}$,
	$\AdversaryHVZK$ can simulate it via an initial table look-up, like the reduction $\M$ that was given in \cref{subse:HTS:RMA}
	(see the description on p.~\pageref{desc:M}).
	$\AdversaryHVZK$ perfectly simulates game $G_{\before{SwitchToSimulatedTranscriptNoMAtoCMA}}$
	if run on honest transcripts,
	and game $\gameSwitchToSimulatedTranscriptNoMAtoCMA$
	if run on simulated ones,
	hence
	\[ \gameDist{SwitchToSimulatedTranscriptNoMAtoCMA}{A}
		\leq \Adv^{q_S-\HVZK}_{\IdScheme}(\AdversaryHVZK)
	\enspace .\]

	\ifeprint
	\nicoresetlinenr
		\begin{figure}[h] \begin{center} \makebox[\textwidth][c]{ \fbox{ \small
					
					\begin{minipage}[t]{5cm}
						
						\underline{\textbf{Adversary} $\AdversaryHVZK^{\qRO{H}}(\pk, ((\commitment_i, \challenge_i, \response_i)_{i =1}^{q_s})$}
						\begin{nicodemus}
							
							\item $i:= 0$
							
							\item  $\List{\ROChallenge'} := \emptyset$
							
							\item $(m^*, \sigma^*) \leftarrow \Ad{A}^{\oracleSIGN, \qRO{\ROChallenge'}} (\pk)$
							
							\item \pcif $m^* \in \ListOfMessages$ \pcreturn 0
							
							\item Parse $(\commitment^*, \response^*) := \sigma^*$
							
							\item $\challenge^* := \ROChallenge(\commitment^*, m^*)$
							
							\item \pcreturn $\VerifyId(\pk, \commitment^*, \challenge^*, \response^*)$
							
						\end{nicodemus}		
						
					\end{minipage}
					
					\quad
					
					\begin{minipage}[t]{4.3cm}
						
						\underline{$\oracleSIGN(m)$}
						\begin{nicodemus}
							
							\item $i++$
							
							\item  $\ListOfMessages := \ListOfMessages \cup \lbrace m \rbrace$
							
							\item $(\commitment, \challenge, \response) := (\commitment_i, \challenge_i, \response_i)$
							
							\item \pcif $\exists \challenge'$ s. th. $(\commitment, m, \challenge') \in \List{\ROChallenge'}$
							
							\item \quad $\List{\ROChallenge'} := \List{\ROChallenge'} \setminus \lbrace (\commitment, m, \challenge')\rbrace$\label{line:FS_NoMA_to_CMA:OverwriteOracleValues}
							
							\item  $\List{\ROChallenge'} := \List{\ROChallenge'} \cup \lbrace (\commitment, m, \challenge) \rbrace$\label{line:FS_NoMA_to_CMA:LazySampling}
							
							\item \pcreturn $\sigma := (\commitment, \response)$
							
						\end{nicodemus}
						
					\end{minipage}
					
					\quad
					
					\begin{minipage}[t]{4.3cm}
						
						\underline{$\ROChallenge'(\commitment, m)$}
						\begin{nicodemus}
							
							\item \pcif $\exists \challenge$ s. th. $(\commitment, m, \challenge) \in \List{\ROChallenge'}$
							
							\item \quad \pcreturn $\challenge$
							
							\item \pcelse \pcreturn $\ROChallenge(\commitment, m)$
							
						\end{nicodemus}
						
					\end{minipage}
					
		} }\end{center}
		\ifnum\tightOnSpace=1 \vspace{-0.4cm} \fi
		\caption{$\HVZK$ adversary $\AdversaryHVZK$ for the proof of \cref{thm:FS_NoMA_to_CMA_tight}.}
		\label{fig:FS_NoMA_to_CMA_tight:DistinguisherHVZK}
	\end{figure}
	\else
	\nicoresetlinenr
	\begin{figure}[h] \begin{center} \makebox[\textwidth][c]{ \fbox{ \small
				
		\begin{minipage}[t]{5.9cm}
			
			\underline{\textbf{Adversary} $\AdversaryHVZK^{\qRO{H}}(\pk, ((\commitment_i, \challenge_i, \response_i)_{i \in \lbrace 1, \cdots, q_s \rbrace})$}
			\begin{nicodemus}
				
				\item $i:= 0$
				
				\item  $\List{\ROChallenge'} := \emptyset$
				
				\item $(m^*, \sigma^*) \leftarrow \Ad{A}^{\oracleSIGN, \qRO{\ROChallenge'}} (\pk)$
				
				\item \pcif $m^* \in \ListOfMessages$ \pcreturn 0
				
				\item Parse $(\commitment^*, \response^*) := \sigma^*$
				
				\item $\challenge^* := \ROChallenge(\commitment^*, m^*)$
				
				\item \pcreturn $\VerifyId(\pk, \commitment^*, \challenge^*, \response^*)$
				
			\end{nicodemus}		
			
		\end{minipage}
		
		\quad
		
		\begin{minipage}[t]{4.3cm}
			
			\underline{$\oracleSIGN(m)$}
			\begin{nicodemus}
				
				\item $i++$
				
				\item  $\ListOfMessages := \ListOfMessages \cup \lbrace m \rbrace$
				
				\item $(\commitment, \challenge, \response) := (\commitment_i, \challenge_i, \response_i)$
				
				\item \pcif $\exists \challenge'$ s. th. $(\commitment, m, \challenge') \in \List{\ROChallenge'}$
				
					\item \quad $\List{\ROChallenge'} := \List{\ROChallenge'} \setminus \lbrace (\commitment, m, \challenge')\rbrace$\label{line:FS_NoMA_to_CMA:OverwriteOracleValues}
				
				\item  $\List{\ROChallenge'} := \List{\ROChallenge'} \cup \lbrace (\commitment, m, \challenge) \rbrace$\label{line:FS_NoMA_to_CMA:LazySampling}
				
				\item \pcreturn $\sigma := (\commitment, \response)$
				
			\end{nicodemus}
			
			\end{minipage}
	
		\quad
		
		\begin{minipage}[t]{4.3cm}
			
			\underline{$\ROChallenge'(\commitment, m)$}
			\begin{nicodemus}
				
				\item \pcif $\exists \challenge$ s. th. $(\commitment, m, \challenge) \in \List{\ROChallenge'}$
				
				\item \quad \pcreturn $\challenge$
				
				\item \pcelse \pcreturn $\ROChallenge(\commitment, m)$
				
			\end{nicodemus}
			
		\end{minipage}
				
	} }\end{center}
		\ifnum\tightOnSpace=1 \vspace{-0.4cm} \fi
		\caption{$\HVZK$ adversary $\AdversaryHVZK$ for the proof of \cref{thm:FS_NoMA_to_CMA_tight}.}
		\label{fig:FS_NoMA_to_CMA_tight:DistinguisherHVZK}
	\end{figure}
\fi
	
	It remains to upper bound $\Pr [\gameSwitchToSimulatedTranscriptNoMAtoCMA^{\Ad{A}} \Rightarrow 1]$.
	Consider adversary $\Ad{B}$, given in \cref{fig:adversary:FS_NoMA_to_CMA_tight}.
	$\Ad{B}$ is run in game $\UF\text{-}\CMAZero$ and perfectly simulates game $\gameSwitchToSimulatedTranscriptNoMAtoCMA$ to $\Ad{A}$.
	If $\Ad{A}$ wins in game $\gameSwitchToSimulatedTranscriptNoMAtoCMA$, it cannot have queried $\oracleSIGN$ on $m^*$.
	Therefore, $\ROChallenge'$ is not reprogrammed on $(m^*, \commitment^*)$
	and hence, $\sigma^*$ is a valid signature in $\Ad{B}$'s $\UF\text{-}\CMAZero$ game.
	\begin{equation}
	\Pr [\gameSwitchToSimulatedTranscriptNoMAtoCMA^{\Ad{A}} \Rightarrow 1]
	\leq \succfun^{\UF\text{-}\CMAZero}_{\FS[\IdScheme, \ROChallenge]}(\Ad{B}) \enspace . \nonumber
	\end{equation}	
	
	Collecting the probabilities yields the desired bound.

	\begin{figure}[h] \begin{center} \makebox[\textwidth][c]{ \fbox{ \small
		
		\nicoresetlinenr
		
		\begin{minipage}[t]{4.3cm}
			
			\underline{\textbf{Adversary} $\Ad{B}^\qRO{\ROChallenge}(\pk)$}
			\begin{nicodemus}
				
				\item  $\List{\ROChallenge'} := \emptyset$
				
				\item $(m^*, \sigma^*) \leftarrow \Ad{A}^{\oracleSIGN, \qRO{\ROChallenge'}} (\pk)$
				
				\item \pcif $m^* \in \ListOfMessages$ ABORT
				
				\item \pcreturn $(m^*, \sigma^*)$
				
			\end{nicodemus}

		\end{minipage}
		
		\quad
				
		\begin{minipage}[t]{4.4cm}
			
			\underline{$\oracleSIGN(m)$}
			\begin{nicodemus}
				
				\item  $\ListOfMessages := \ListOfMessages \cup \lbrace m \rbrace$
				
				\item $(\commitment, \challenge, \response) \leftarrow \Sim(\pk)$
				
				\item \pcif $\exists \challenge'$ s. th. $(\commitment, m, \challenge') \in \List{\ROChallenge'}$
				
					\item \quad $\List{\ROChallenge'} := \List{\ROChallenge'} \setminus \lbrace (\commitment, m, \challenge')\rbrace$
					
				\item  $\List{\ROChallenge'} := \List{\ROChallenge'} \cup \lbrace (\commitment, m, \challenge) \rbrace$
						
				\item \pcreturn  $\sigma := (\commitment, \response)$
				
			\end{nicodemus}
		
		\end{minipage}
	
	\quad
	
	\begin{minipage}[t]{4.1cm}
		
		\underline{$\ROChallenge'(\commitment, m)$}
		\begin{nicodemus}
			
			\item \pcif $\exists \challenge$ s. th. $(\commitment, m, \challenge) \in \List{\ROChallenge'}$
			
			\item \quad \pcreturn $\challenge$
			
			\item \pcelse 
			
				\item \quad \pcreturn $\ROChallenge(\commitment, m)$
			
		\end{nicodemus}
	\end{minipage}
		}} \end{center}
		\ifnum\tightOnSpace=1 \vspace{-0.4cm} \fi
		\caption{Adversary $\Ad{B}$ for the proof of \cref{thm:FS_NoMA_to_CMA_tight}.}
		\label{fig:adversary:FS_NoMA_to_CMA_tight}
	\end{figure}
}

It remains to show that the bound also holds if challenges are derived by letting $\challenge := \ROChallenge(\commitment, m, \pk)$.
To that end, we revisit the sequence of games given in \cref{fig:games:FS_NoMA_to_CMA_tight}:
We replace $\challenge := \ROChallenge(\commitment, m)$
(and $\challenge^* := \ROChallenge(\commitment^*, m^*)$) 
with
$\challenge := \ROChallenge(\commitment, m, \pk)$
(and $\challenge^* := \ROChallenge(\commitment^*, m^*, \pk)$)
in line~\ref{line:FS_NoMA_to_CMA:DefineHashTranscript}
(line~\ref{line:FS_NoMA_to_CMA:DefineHashChallenge}),
and change the reprogram instruction in line~\ref{line:FS_NoMA_to_CMA:ReprogramRO},
accordingly.
Since $\pk$ is public, we can easily adapt both distinguisher $\Ad{D}$
and adversaries $\Ad{B}$ and $\Ad{C}$ to account for these changes.
In particular, $\Ad{D}$ will simply include $\pk$ as a (fixed) part of the probability distribution that is forwarded to its reprogramming oracle.
Since the public key holds no entropy once that it is fixed by the game,
this change does not 
affect the upper bound given in \cref{eq:FS_NoMA_to_CMA_tight:O2H}.

\end{proof}

\subsection{Revisiting the hedged Fiat-Shamir transform}\label{sec:HedgedFiatShamir}

In this section, we show how \cref{theorem:reprGameBased}
can be used to extend the results of \cite{EC:AOTZ20} to the quantum random oracle model:
We show that the Fiat-Shamir transform is robust against several types of one-bit fault injections,
even in the quantum random oracle model,
and that the hedged Fiat-Shamir transform is as robust, even if an attacker is in control of the nonce that is used to generate the signing randomness.
In this section, we follow \cite{EC:AOTZ20} and consider the modified Fiat-Shamir transform that includes the public key into the hash when generating challenges.
We consider the following one-bit tampering functions:
\begin{itemize}
	\item[] $\mathsf{flip}$-$\mathsf{bit}_i(x)$: Does a logical negation of the $i$-th bit of $x$.
	\item[] $\mathsf{set}$-$\mathsf{bit}_i(x,b)$: Sets the $i$-th bit of $x$ to $b$.
\end{itemize}

{\heading{Hedged signature schemes.}
	Let $\NonceSpace$ be any nonce space.
	\ifeprint Given \else With \fi a signature scheme $\SignatureScheme = (\KG, \Sign, \VerifySig)$
	with secret key space $\SKSpace$ and signing randomness space $\RSpaceSign$,
	and random oracle $\ROforHedging: \SKSpace \times \MSpace \times \NonceSpace \rightarrow \RSpaceSign$,
	we \ifeprint define\else associate\fi
	\[ \TrafoHedging [\SignatureScheme, \ROforHedging]
	:= \SigSchemeHedged := (\KG, \SigningHedged, \VerifySig) \enspace ,\]
	where the signing algorithm $\SigningHedged$ of $\SigSchemeHedged$ takes as input $(\sk, m, \nonce)$,
	deterministically computes $r := \ROforHedging(\sk, m, \nonce)$,
	and returns $\sigma := \Sign(\sk, m; r)$.
}

{\heading{Security of (hedged) Fiat-Shamir against fault injections and nonce attacks.}
Next, we define \underline{U}n\underline{F}orgeability
in the presence of \underline{F}aults,
under \underline{C}hosen \underline{M}essage \underline{A}ttacks ($\UFfaultCMADifferentSet{}$),
for Fiat-Shamir transformed schemes.
In game $\UFfaultCMADifferentSet{}$,
the adversary has access to a faulty signing oracle $\oracleFaultSIGN$ which returns signatures that were created relative to an injected fault.
To be more precise, game $\UFfaultCMA$
is defined relative to a set $\SetFaultFunctions$ of indices,
and the indices $i \in \SetFaultFunctions$ specify at which point during the signing procedure exactly
the faults are allowed to occur.
An overview is given in \cref{fig:Overview:FaultIndices}.

\tikzset{%
	algo/.style				= {draw, thick, rectangle, minimum height = 3em, minimum width = 2em},
	hash/.style				= {draw, thick, trapezium, shape border rotate = 270, trapezium angle=70, minimum height = 3em},
	fault/.style    		= {draw, circle}, 
	faultTwoWires/.style	= {draw, dashed, ellipse, minimum height= 4em},
	input/.style   			= {}, %
	output/.style			= {} %
}

\tikzstyle{my above of} = [above = 0.8cm of #1.north]
\tikzstyle{my below of} = [below = 0.8cm of #1.south]
\tikzstyle{my right of} = [right = of #1.east]
\tikzstyle{my left of} 	= [left = of #1.west]

\begin{figure}[h] \begin{center} \fbox{ \scalebox{0.9}{

\begin{tikzpicture}[auto, node distance= 0.4cm]

\draw

node [input, name=inputSK] {$\sk$} 										

node [fault,	right = 0.4cm of inputSK] (fault1) {1}												 		%
node [fault,	below = 0.5 cm of fault1, draw = gray] (fault0) {\textcolor{gray!60}{0}}				 	%
node [my above of = fault1] (aboveFault1) {}															 	%

node [input,	left = 0.4cm of fault0] (inputNonce) {$\nonce$}
node [input,	my below of = inputNonce] (inputM) {$m$}

node [hash, draw = none, right = 0.3cm of fault1, text opacity=0] (oracleG1) {$\ROforHedging$}				%
node [hash, draw = none, right = 0.3cm of fault0, text opacity=0] (oracleG0) {$\ROforHedging$}				%
([ yshift = -1.7em] oracleG1 |- fault1) node [hash] (oracleG) {$\ROforHedging$}						 		%

node [fault,	right = 0.3cm of oracleG1, draw = gray] (fault2) {\textcolor{gray!60}{2}}				 	%
node [my above of = fault2] (aboveFault2) {}						 										%
node [fault,	below = 0.3cm of fault2, draw = gray] (fault3) {\textcolor{gray!60}{3}}						%

node [algo, draw = none, right = 0.3cm of fault2, text opacity=0] (commit2) {$\Commit$}						%
node [algo, draw = none, right = 0.3cm of fault3, text opacity=0] (commit3) {$\Commit$}						%
([ yshift = -1.7em] commit2 |- fault2) node [algo, minimum height = 4.5em] (commit){$\Commit$}		 		%

node [faultTwoWires, draw = none, my right of = commit2, text opacity=0] (fault4Upper) {4}					%
node [faultTwoWires, draw = none, my right of = commit3, text opacity=0, yshift= -0.3cm] (fault4Lower) {4}	%
([ yshift = -1.7em] fault4Upper |- fault1) node [faultTwoWires] (fault4) {4}							 	%

node [input,	below = 0.4cm of fault4] (inputPK) {$\pk$}													%

node [fault,	my right of = fault4Lower] (fault5) {5}													 	%

node [hash, right = 0.7 cm of fault5] (oracleH) {$\ROChallenge$}											%

node [fault, my right of = oracleH] (fault6) {6}															%
([xshift=0.2cm]fault6 |- aboveFault2) node [input] (aboveFault6ForSK){}										%
([xshift=-0.2cm] fault6 |- fault4Upper) node [input] (aboveFault6For4){}									%

node [faultTwoWires, draw = none, my right of = fault6, text opacity=0] (fault7Lower) {7}					%
node [faultTwoWires, draw = none, my above of = fault7Lower, text opacity=0] (fault7Upper) {7}				%
([ yshift = -1.7em] fault7Upper |- fault1) node [faultTwoWires] (fault7) {7}							 	%

node [algo, my right of = fault7] (respond) {$\Respond$}													%

node [fault, below = 0.5cm of respond] (fault9) {9}															%

node [output, below = 0.5cm of fault9] (signature) {$\sigma$} 		
;

\draw				(inputSK)					-- node[at start] {}	node [midway] (skToFault1) {} 			(fault1);
\draw[gray!60]		(aboveFault1.west)			-| node[near end, left] {\textcolor{black}{$\sk$}}				(fault2);
\draw				(skToFault1|- inputSK)		|- 		 														(aboveFault2.west);
\draw[->, gray!60]	(fault2)					-- 																(commit2);

\draw[gray!60]		(inputNonce)				-- node [midway] (nonceToFault0) {} 							(fault0);
\draw[gray!60]		(inputM) 					-| 																(nonceToFault0);

\draw[->]			(fault1)					-- 																(oracleG1);
\draw[->, gray!60]	(fault0) 					-- 																(oracleG0);

\draw[gray!60]		(oracleG.east |- fault3)	-- node [pos=0.6] {\textcolor{black}{$r$}}  						(fault3);
\draw[->, gray!60]	(fault3) 					-- 																(commit3);

\draw	(commit.east |- fault2)					-- node [midway] (commitToFault4) {$\state$}					(fault4.west|- fault2);
\draw	([yshift= -0.3cm]commit.east |- fault3)	-- node [midway, above] {$\commitment$}							([yshift= -0.3cm]fault4.west|- fault3);

\draw				(fault4Lower)				-- node [midway, above] (fault4to5) {$\commitment$} 			(fault5);
\draw				(inputM)					-| 																(fault5);
\draw				(inputPK)					-| 																([xshift = -0.5cm]fault5);

\draw[->]			(fault5)					-- 																(oracleH);
\draw				(oracleH)					-- node [midway, above] {$\challenge$}							(fault6);

\draw				(aboveFault2.west)			-- 		 														(aboveFault6ForSK.west);
\draw				(aboveFault6ForSK.west)		|- node [pos=0.83, above] {$\sk$}								(fault7.west|- fault2);
\draw				(fault4Upper)				-- 		 														(aboveFault6For4.north|-aboveFault6For4.east);
\draw				(aboveFault6For4.north|-aboveFault6For4.east)|- node [pos=0.9, above] {$\state$}			([yshift=-0.54cm]fault7.west|- fault2);
\draw				(fault6)					-- node [midway, above] {$\challenge$}							(fault7.west|- fault6);

\draw[->]			(fault7)					-- 																(respond);

\draw				(respond)					-- 																(fault9);
\draw[->]			(fault9)					-- 																(signature);
\end{tikzpicture}
}}
\end{center}
	\ifnum\tightOnSpace=1 \vspace{-0.4cm} \fi
	\caption{
		Faulting a (hedged) Fiat-Shamir signature. Circles represent faults, and their numbers are the respective fault indices $i \in \SetFaultFunctions$ (following \cite{EC:AOTZ20}, for the formal definition see \cref{fig:Def:UF_Fault_CMA}).
		Greyed out fault wires indicate that the hedged construction can not be proven robust against these faults, in general.
		Dashed fault nodes indicate that the Fiat-Shamir construction is robust against these faults if the scheme is subset-revealing.
	}	
	\label{fig:Overview:FaultIndices}
\end{figure} 
For the hedged Fiat-Shamir construction, we further define \underline{U}n\underline{F}orgeability,
with control over the used \underline{N}onces and in the presence of \underline{F}aults,
under \underline{C}hosen \underline{M}essage \underline{A}ttacks ($\UFnonceFaultCMADifferentSet{}$).
In game $\UFnonceFaultCMADifferentSet{}$, 
the adversary is even allowed to control the nonce $\nonce$ that is used to derive the internal randomness of algorithm $\Commit$.
We therefore denote the respective oracle by $\oracleNonceFaultSIGN$.
Our definitions slightly simplify the one of \cite{EC:AOTZ20}:
While \cite{EC:AOTZ20} also considered fault attacks on the input of algorithm $\Commit$
(with corresponding indices 2 and 3),
they showed that the hedged construction can not be proven robust against these faults, in general.
We therefore omitted them from our games, but adhered to the numbering for comparability.

The hedged Fiat-Shamir scheme derandomizes the signing procedure by replacing 
the signing randomness by $r := \ROforHedging(\sk, m, \nonce)$.
Hence, game $\UFnonceFaultCMADifferentSet{}$ considers two additional faults:
An attacker can fault the input of $\ROforHedging$,
i.e., either the secret key (fault index 1), or the tuple $(m, \nonce)$ (fault index 0).
As shown in \cite{EC:AOTZ20}, the hedged construction can not be proven robust against faults on $(m, \nonce)$, in general, therefore we
only consider index 1.

Furthemore, we do not formalize derivation/serialisation and drop the corresponding indices 8 and 10
to not overly complicate our application example.
A generalization of our result that also considers derivation/serialisation, however, is straightforward.

\begin{definition}($\UFfaultCMADifferentSet{}$ and $\UFnonceFaultCMADifferentSet{}$)
	For any subset $\SetFaultFunctions \subset \lbrace 4, \cdots, 9 \rbrace$,
	\ifeprint let \else we define\fi the $\UFfaultCMA$ game \ifeprint be defined \fi as in \cref{fig:Def:UF_Fault_CMA},
	and the $\UFfaultCMA$ success probability of a quantum adversary $\Ad{A}$
	against $\FS[\IdScheme, \ROChallenge]$ as
	\[\succfun^{\UFfaultCMA}_{\FS[\IdScheme, \ROChallenge]}(\Ad{A})
		:= \Pr[\UFfaultCMA_{\FS[\IdScheme, \ROChallenge]}^\Ad{A} \Rightarrow 1 ]\enspace .\]
	
	Furthermore, we define the $\UFnonceFaultCMA$ game (also in \cref{fig:Def:UF_Fault_CMA})
	for any subset $\SetFaultFunctions \subset \lbrace 1, 4, \cdots, 9 \rbrace$,
	and the $\UFnonceFaultCMA$ success probability of a quantum adversary $\Ad{A}$
	against $\SigSchemeHedged := \TrafoHedging[\FS[\IdScheme, \ROChallenge], \ROforHedging]$ as
	\[\succfun^{\UFnonceFaultCMA}_{\SigSchemeHedged}(\Ad{A})
	:= \Pr[\UFnonceFaultCMA_{\SigSchemeHedged}^\Ad{A} \Rightarrow 1 ]\enspace .\]
	
	\begin{figure}[h] \begin{center}\makebox[\textwidth][c]{ \fbox{ \small
				
		\begin{minipage}[t]{5.5cm}
			
			\nicoresetlinenr
			\underline{\textbf{Game} \boxedFull{$\UFfaultCMA$} \dashboxed{$\UFnonceFaultCMA$}}
			\begin{nicodemus}
				
				\item $(\pk, \sk) \leftarrow \IG(\param)$
				
				\item \boxedFull{$(m^*, \sigma^*) \leftarrow \Ad{A}^{\oracleFaultSIGN, \qRO{\ROChallenge}} (\pk)$}
				
				\item \dashboxed{ $(m^*, \sigma^*) \leftarrow \Ad{A}^{\oracleNonceFaultSIGN, \qRO{\ROChallenge}, \qRO{\ROforHedging}} (\pk)$}
				
				\item \pcif $m^* \in \ListOfMessages$ \pcreturn 0
				
				\item Parse $(\commitment^*, \response^*) := \sigma^*$
				
				\item $\challenge^* := \ROChallenge(\commitment^*, m^*)$
				
				\item \pcreturn $\VerifyId(\pk, \commitment^*, \challenge^*, \response^*)$
				
			\end{nicodemus}		
			
		\end{minipage}
		
		\;\ifeprint\!\!\fi
		
		\begin{minipage}[t]{4.4cm}
			
			\underline{$\oracleFaultSIGN(m, i \in\SetFaultFunctions, \phi)$}
			
			\begin{nicodemus}
				
				\item $f_i := \phi$ and $f_j := \id  \  \forall \ j \neq i$
				
				\item
				
				\item $(\commitment, \state) \leftarrow \Commit(\sk)$
				
				\item $(\commitment, \state) := f_4(\commitment, \state)$ \label{line:Def:UF_Fault_CMA:Fault4}
				
				\item $(\hat{\commitment}, \hat{m}, \hat{\pk}) := f_5(\commitment, m, \pk)$ \label{line:Def:UF_Fault_CMA:Fault5}
				
				\item $\challenge := f_6(\ROChallenge(\hat{\commitment}, \hat{m}, \hat{\pk}))$ \label{line:Def:UF_Fault_CMA:Fault6}
							
				\item $\response \leftarrow \Respond(f_7(\sk, c, \state))$ \label{line:Def:UF_Fault_CMA:Fault7}

				\item $\ListOfMessages := \ListOfMessages \cup \lbrace \hat{m} \rbrace$
				
				\item \pcreturn $\sigma := f_9(\commitment, \response)$ \label{line:Def:UF_Fault_CMA:Fault9}
				
			\end{nicodemus}
					
		\end{minipage}
	
		\;\ifeprint\!\!\fi
		
		\begin{minipage}[t]{4.4cm}
			
			\underline{$\oracleNonceFaultSIGN(m, \nonce, i \in\SetFaultFunctions, \phi)$}
			
			\begin{nicodemus}
				
				\item $f_i := \phi$ and $f_j := \id \ \forall \ j \neq i$
				
				\item $r := \ROforHedging(f_1(\sk), m, \nonce) $
				
				\item $(\commitment, \state) \leftarrow \Commit(\sk; r)$
				
				\item $(\commitment, \state) := f_4(\commitment, \state)$ \label{line:Def:UF_Nonce_Fault_CMA:Fault4}
				
				\item $(\hat{\commitment}, \hat{m}, \hat{\pk}) := f_5(\commitment, m, \pk)$
				\label{line:Def:UF_Nonce_Fault_CMA:Fault5}
				
				\item $\challenge := f_6(\ROChallenge( \hat{\commitment}, \hat{m}, \hat{\pk}))$
				\label{line:Def:UF_Nonce_Fault_CMA:Fault6}
				
				\item $\response \leftarrow \Respond(f_7(\sk, c, \state))$	
				\label{line:Def:UF_Nonce_Fault_CMA:Fault7}

				\item $\ListOfMessages := \ListOfMessages \cup \lbrace \hat{m} \rbrace$
				
				\item \pcreturn $\sigma := f_9(\commitment, \response)$
				\label{line:Def:UF_Nonce_Fault_CMA:Fault9}
				
			\end{nicodemus}
			
		\end{minipage}
				
	}}
	\end{center}
		\ifnum\tightOnSpace=1 \vspace{-0.4cm} \fi
		\caption{Left: Game $\UFfaultCMA$ for $\SignatureScheme = \FS[\IdScheme, \ROChallenge]$,
			and game $\UFnonceFaultCMA$ for the hedged Fiat-Shamir construction $\SigSchemeHedged := \TrafoHedging[\FS[\IdScheme, \ROChallenge], \ROforHedging]$,
			both defined relative to a set $\SetFaultFunctions$ of allowed fault index positions.
			$\phi$ denotes the fault function,
			which either negates one particular bit of its input,
			sets one particular bit of its input to 0 or 1, or does nothing.
			We implicitly require fault index $i$ to be contained in $\SetFaultFunctions$,
			i.e., we make the convention that both faulty signing oracles return $\bot$
			if $i \notin\SetFaultFunctions$.
		}
		\label{fig:Def:UF_Fault_CMA}
	\end{figure}
	
\end{definition}
}

\heading{From $\UFCMAZero$ to $\UFfaultCMADifferentSet{}$.}
First, we generalize \cite[Lemma 5]{EC:AOTZ20} to the quantum random oracle model.
The proof is given in \cref{sec:HedgedFiatShamir:ProofFCMA}.
\begin{restatable}{theorem}{thmfour}
\label{theorem:UFfaultCMA}
	Assume $\IdScheme$ to be validity aware (see \cref{Def:ValidityAwareness}, \cref{sec:SignaturesInQROM:Defs:Appendix}).
	If $\SignatureScheme := \FS[\IdScheme, \ROChallenge]$ is $\UFCMAZero$ secure,
	then $\SignatureScheme$ is also $\UFfaultCMA$ secure
	for $\SetFaultFunctions := \lbrace 5,6,
	9 \rbrace$,
	in the quantum random oracle model.
	Concretely, for any adversary $\Ad{A}$ against the $\UFfaultCMA$ security of $\SignatureScheme$, 
	issuing at most $q_{S}$ (classical) queries to $\oracleFaultSIGN$
	and $q_\ROChallenge$ (quantum) queries to $\ROChallenge$,
	there exists an $\UFCMAZero$ adversary $\Ad{B}$ and a multi-$\HVZK$ adversary $\AdversaryHVZK$ such that
	\begin{align}	\label{eq:Advantage:UF_F_CMA}
	\succfun^{\UFfaultCMADifferentSet{\lbrace 5,6,9 \rbrace}}_{\SignatureScheme}(\Ad{A})
		\leq \ & \succfun^{\UFCMAZero}_{\SignatureScheme}(\Ad{B})
		+  \Adv^{\qsign-\HVZK}_{\IdScheme}(\AdversaryHVZK)
	\nonumber \\
		& + \frac {3q_{S}} 2  \sqrt{ 2 \cdot (q_H + q_{S} + 1) \cdot \maxEntropyCommit} 
	\enspace .
 	\end{align}
	and $\Ad{B}$ and $\AdversaryHVZK$ have about the running time of $\Ad{A}$.
	
	If we assume that $\IdScheme$ is subset-revealing,
	then $\SignatureScheme$ is even $\UFfaultCMADifferentSet{\SetFaultFunctions'}$ secure
	for $\SetFaultFunctions' :=\SetFaultFunctions \cup \lbrace 4, 7\rbrace$.
	Concretely, the bound of \cref{eq:Advantage:UF_F_CMA} then holds also for
	$\SetFaultFunctions' = \lbrace 4, 5, 6, 7, 9\rbrace$.
\end{restatable}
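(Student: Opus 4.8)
The plan is to mirror the game sequence underlying the proof of \cref{thm:FS_NoMA_to_CMA_tight}, threading the three admissible faults through each hop. Write $\oracleFaultSIGN$ for the faulty signing oracle of \cref{fig:Def:UF_Fault_CMA} restricted to $\SetFaultFunctions = \{5,6,9\}$, and recall that every query activates exactly one fault $f_i$ while the remaining $f_j$ are the identity. I would start from $G_0 = \UFfaultCMADifferentSet{\{5,6,9\}}$ and first render $\ROChallenge$ \emph{a-posteriori} consistent with the (faulted) transcript produced on each query: in game $G_1$ the oracle samples $(\commitment,\state)\leftarrow\Commit(\sk)$, forms the faulted hash input $(\hat{\commitment},\hat{m},\hat{\pk}) := f_5(\commitment,m,\pk)$, draws a fresh uniform $y$, reprograms $\ROChallenge(\hat{\commitment},\hat{m},\hat{\pk}):=y$, sets $\challenge := f_6(y)$, and answers as before. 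Since the only change is whether the value at the faulted point is read from the original oracle or freshly resampled, $G_0$ and $G_1$ are exactly the reprogramming games $\gameRepro_0,\gameRepro_1$ for a distinguisher $\Ad{D}$ issuing $q_S$ reprogramming instructions and $q_H+q_S+1$ queries to $\ROChallenge$, with reprogramming position $(\hat{\commitment},\hat{m},\hat{\pk})$ whose only entropy stems from $\hat{\commitment}=f_5(\commitment)$.

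The quantitative heart of this hop is the entropy bookkeeping, which is where the factor $2$ in \cref{eq:Advantage:UF_F_CMA} originates. For fault index $6$ or $9$ we have $f_5=\id$, so $\hat{\commitment}=\commitment$ and $\max_{\hat{\commitment}}\Pr[\hat{\commitment}] = \maxEntropyCommit$; for fault index $5$ the map $f_5$ is at most two-to-one (it flips or sets a single bit), so two commitments can collapse onto the same value and $\max_{\hat{\commitment}}\Pr[\hat{\commitment}]\le 2\maxEntropyCommit$. Taking the worst case $p_{\max}\le 2\maxEntropyCommit$ and applying \cref{prop:advanced} (itself a consequence of \cref{theorem:reprGameBased}) yields $|\Pr[G_1^{\Ad{A}}\Rightarrow 1]-\Pr[G_0^{\Ad{A}}\Rightarrow 1]| \le \frac{3q_S}{2}\sqrt{2\,(q_H+q_S+1)\,\maxEntropyCommit}$, the last summand of the claimed bound.

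Next I would replace the honestly generated transcripts by simulated ones in game $G_2$, bounding this hop by $\Adv^{q_S\text{-}\HVZK}_{\IdScheme}(\AdversaryHVZK)$ exactly as in \cref{thm:FS_NoMA_to_CMA_tight}, where $\AdversaryHVZK$ performs the a-posteriori reprogramming through an initial classical table look-up layered on top of its own oracle, so no secret key is needed. Finally I would give a $\UFCMAZero$ reduction $\Ad{B}$ that runs $\Ad{A}$, answers each faulty query from a simulated transcript (applying $f_9$ to the output pair whenever index $9$ is active), and forwards $\Ad{A}$'s forgery. Because a winning $\Ad{A}$ forges on a fresh $m^*\notin\ListOfMessages$ whereas every reprogrammed point carries a message component $\hat{m}\in\ListOfMessages$, the point $(\commitment^*,m^*)$ is never reprogrammed; hence $\ROChallenge'(\commitment^*,m^*)=\ROChallenge(\commitment^*,m^*)$ and $\sigma^*$ is a valid $\UFCMAZero$ forgery, giving $\Pr[G_2^{\Ad{A}}\Rightarrow 1]\le \succfun^{\UFCMAZero}_{\SignatureScheme}(\Ad{B})$. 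Collecting the three bounds establishes \cref{eq:Advantage:UF_F_CMA}.

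The main obstacle is faithfully reproducing the \emph{faulted} transcripts in the $G_1$-to-$G_2$ hop without the secret key. Faults $5$ and $9$ are benign: fault $5$ merely relocates the uniformly reprogrammed hash point, leaving the challenge $\challenge=y$ uniform so that plain $\HVZK$ applies, while fault $9$ is applied post hoc to the already-simulated output pair. Fault $6$ is the delicate case, since it forces $\challenge=f_6(y)$, whose distribution is no longer uniform when $f_6$ sets a bit, so the simulated pair $(\commitment,\response)$ must remain consistent with this specific challenge and with whether the resulting signature verifies. This is precisely where validity awareness of $\IdScheme$ (\cref{Def:ValidityAwareness}) is invoked: it guarantees that the simulator's output matches the honest faulted transcript closely enough to keep $G_1$ and $G_2$ indistinguishable within the stated $\HVZK$ advantage (for the bijective faults \textsf{flip-bit} and identity the challenge stays uniform and no extra care is needed). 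The subset-revealing extension to $\SetFaultFunctions'=\{4,5,6,7,9\}$ then follows with no change to the bound: as responses do not depend on $\sk$, fault $7$ on $(\sk,c,\state)$ and fault $4$ on $(\commitment,\state)$ can both be applied by the reduction directly to the simulator's transcript, leaving the entropy and $\HVZK$ accounting above untouched.
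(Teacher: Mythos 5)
Your game plan — reprogram $\ROChallenge$ a-posteriori at the faulted position, bound that hop with \cref{prop:advanced} using $p_{\max}\le 2\maxEntropyCommit$ (worst case when $f_5$ or $f_4$ hits the commitment and is two-to-one), then swap to simulated transcripts and finish with a $\UFCMAZero$ reduction — is the same skeleton as the paper's proof, and your entropy bookkeeping and final reduction match. (The paper actually runs the hops fault-index by fault-index, in the order $9,5,6,7,4$, for statistical HVZK, and then remarks that for computational HVZK one must reorder into exactly the two monolithic hops you propose; so your ordering is the one the paper itself endorses for the computational case.)

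There is, however, one genuine gap: your treatment of fault index $6$ (and of index $7$ when $\phi$ targets the challenge). When $f_6$ is a set-bit function, the honest signer runs $\Respond(\sk,\phi(\challenge),\state)$ on a \emph{prescribed, non-uniform} challenge $\phi(\challenge)$, so the simulation must produce a transcript consistent with that specific challenge. The plain multi-$\HVZK$ game of \cref{Def:HVZK:Computational}, whose simulator $\Sim(\pk)$ takes no challenge input, cannot do this — its transcripts carry whatever challenge distribution $\getTrans$ induces. The paper resolves this with the \emph{special} $\HVZK$ simulator $\Sim(\pk,\phi(\challenge))$ (i.e., the $\specialHVZK$ notion and oracle $\getTransChallenge$), and the hop is charged to special multi-$\HVZK$, not plain $\HVZK$. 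Validity awareness does not supply this: it is invoked only to let the simulation output $\response:=\bot$ when $\phi(\challenge)\notin\ChallengeSpace$, mimicking the honest signer's rejection; it says nothing about matching an in-range but biased challenge. So the sentence in your proposal attributing the $G_1$-to-$G_2$ consistency for fault $6$ to validity awareness is where the argument would fail as written; replace the plain $\HVZK$ hop with a special multi-$\HVZK$ hop (simulator fed the faulted challenge) and the rest of your proof goes through, including the subset-revealing extension.
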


\heading{From $\UFfaultCMADifferentSet{}$ to $\UFnonceFaultCMADifferentSet{}$.}
Second, we generalize \cite[Lemma 4]{EC:AOTZ20} to the QROM.
The proof is given in \cref{sec:HedgedFiatShamir:ProofNonceFCMA}.
\begin{restatable}{theorem}{thmfive}\label{theorem:UFNonceFaultCMA}
	If $\SignatureScheme := \FS[\IdScheme, \ROChallenge]$ is $\UFfaultCMA$ secure
	for a fault index set $\SetFaultFunctions$,
	then $\SigSchemeHedged := \TrafoHedging [\SignatureScheme, \ROforHedging]$
	is $\UFnonceFaultCMA$ secure for $\SetFaultFunctions' := \SetFaultFunctions \cup \lbrace 1 \rbrace$,
	in the quantum random oracle model,
	against any adversary that issues no query $(m, \nonce)$ to $\oracleNonceFaultSIGN$ more than once. 
	Concretely, for any adversary $\Ad{A}$ against the $\UFnonceFaultCMA$ security of $\SigSchemeHedged$
	for $\SetFaultFunctions'$,
	issuing at most $q_{S}$ queries to $\oracleNonceFaultSIGN$,
	at most $q_\ROChallenge$ queries to $\ROChallenge$,
	and at most $q_\ROforHedging$ queries to $\ROforHedging$,
	there exist $\UFfaultCMA$ adversaries $\Ad{B_1}$ $\Ad{B_2}$ such that
	\begin{align*}
		\succfun^{\UFnonceFaultCMA}_{\SigSchemeHedged}(\Ad{A})
		\leq	\succfun^{\UFfaultCMA}_{\SignatureScheme}(\Ad{B_1})
				+	2q_{\ROforHedging} \cdot \sqrt{
						\succfun^{\UFfaultCMA}_{\SignatureScheme}(\Ad{B_2})
					}
		\enspace ,
	\end{align*}
	and $\Ad{B_1}$ has about the running time of $\Ad{A}$,
	while $\Ad{B_2}$ has a running time of roughly
	$\Time(\Ad{B_2}) \approx \Time(\Ad{A}) + |\sk| \cdot(\Time(\Sign) + \Time(\VerifySig))$,
	where $|\sk|$ denotes the length of $\sk$.
\end{restatable}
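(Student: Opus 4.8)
The only difference between the hedged scheme $\SigSchemeHedged$ and the underlying scheme $\SignatureScheme$ is that the signing randomness is derived deterministically as $r := \ROforHedging(f_1(\sk), m, \nonce)$ rather than sampled freshly. The plan is therefore to remove this derandomisation in a single game hop and then to reduce directly to $\UFfaultCMA$, generalising \cite[Lemma 4]{EC:AOTZ20} to the QROM. Concretely, I would start from game $G_0 = \UFnonceFaultCMA_{\SigSchemeHedged}$ (see \cref{fig:Def:UF_Fault_CMA}) and define $G_1$ by replacing, in every call to $\oracleNonceFaultSIGN$, the value $r := \ROforHedging(f_1(\sk), m, \nonce)$ by a freshly sampled uniform $r \uni \RSpaceSign$. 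Because the adversary never repeats a pair $(m, \nonce)$, the inputs $(f_1(\sk), m, \nonce)$ used across signing queries are pairwise distinct, so this change is exactly a reprogramming of $\ROforHedging$ on those points to independent uniform values. In $G_1$ the nonce and the fault index $1$ no longer influence the output at all, so $\oracleNonceFaultSIGN$ behaves exactly like the oracle $\oracleFaultSIGN$ of the $\UFfaultCMA$ game for $\SignatureScheme$ (with fault index $1$ mapped to the identity and the nonce dropped). Hence an adversary $\Ad{B_1}$ that simulates $\ROforHedging$ by a random oracle of its own and answers signing queries through its own $\oracleFaultSIGN$ oracle perfectly simulates $G_1$, and any forgery of $\Ad{A}$ is a valid forgery for $\SignatureScheme$; this gives $\Pr[G_1^{\Ad{A}} \Rightarrow 1] \le \succfun^{\UFfaultCMA}_{\SignatureScheme}(\Ad{B_1})$.

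It remains to bound the game distance $|\Pr[G_0^{\Ad{A}} \Rightarrow 1] - \Pr[G_1^{\Ad{A}} \Rightarrow 1]|$. Since the reprogrammed positions all carry a possibly-faulted secret key as their first component, and $\sk$ is hidden from $\Ad{A}$, this is a search-type rather than a guessing-type change, so I would use the one-way-to-hiding lemma \cite{C:Unruh14,C:AmbHamUnr19} here rather than the adaptive reprogramming theorem: the positions are not freshly high-entropy (they all share $f_1(\sk)$), and an adversary detecting the change must essentially locate $\sk$, which accounts for the square-root loss. Let $\mathcal{F}_1$ denote the set of admissible one-bit fault functions (including the identity), and let
\[ S := \{ (f_1(\sk), m, \nonce) : f_1 \in \mathcal{F}_1,\ m \in \MSpace,\ \nonce \in \NonceSpace \} \enspace , \]
which is fixed once $(\pk,\sk)$ has been generated. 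Then $G_0$ and $G_1$ are precisely the two O2H worlds in which $\Ad{A}$ has quantum access to $\ROforHedging$ and to its reprogramming on $S$, respectively, with the signing oracle using the corresponding values. The O2H lemma yields $|\Pr[G_0^{\Ad{A}} \Rightarrow 1] - \Pr[G_1^{\Ad{A}} \Rightarrow 1]| \le 2 q_{\ROforHedging} \sqrt{\Pr[\mathrm{Find}]}$, where $\mathrm{Find}$ is the event that measuring a uniformly chosen one of $\Ad{A}$'s $\ROforHedging$-queries returns a point of $S$, i.e.\ a point whose first component equals $f_1(\sk)$ for some admissible $f_1$.

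Finally I would construct $\Ad{B_2}$ to bound $\Pr[\mathrm{Find}]$. The extractor $\Ad{B_2}$ runs $\Ad{A}$ exactly as in the $G_1$-simulation above (a random oracle for $\ROforHedging$, signing answered through $\oracleFaultSIGN$), which reproduces $\Ad{A}$'s view faithfully up to the first query landing in $S$; it picks an index $j \uni \{1, \dots, q_{\ROforHedging}\}$, measures the $j$-th $\ROforHedging$-query, and reads off its first component $k^*$. On $\mathrm{Find}$ we have $k^* = f_1(\sk)$ for a one-bit fault, so $\sk$ lies within Hamming distance one of $k^*$; $\Ad{B_2}$ enumerates the at most $|\sk|+1$ candidates, identifies the correct key by signing a test message and verifying it (this is the source of the $|\sk| \cdot (\Time(\Sign)+\Time(\VerifySig))$ term), and then uses $\sk$ to forge on a fresh message, winning its $\UFfaultCMA$ game whenever $\mathrm{Find}$ occurs. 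Hence $\Pr[\mathrm{Find}] \le \succfun^{\UFfaultCMA}_{\SignatureScheme}(\Ad{B_2})$, and combining the three bounds gives the claim.

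I expect the main obstacle to be the consistency of the simulation in the O2H step: both $\Ad{B_1}$ and $\Ad{B_2}$ must answer the signing queries through $\oracleFaultSIGN$ without knowing $\sk$, yet the adversary's direct oracle $\ROforHedging$ must stay consistent with the randomness implicitly used inside those signatures. The point that makes this work is that any such inconsistency can only be observed by querying $\ROforHedging$ at a point of $S$ --- which is precisely the $\mathrm{Find}$ event --- so the simulation is perfect until $\mathrm{Find}$ occurs and the $\mathrm{Find}$ probability is reproduced correctly. A secondary subtlety is the recovery of $\sk$ from the measured $f_1(\sk)$ across the allowed $\mathsf{flip}$-$\mathsf{bit}$ and $\mathsf{set}$-$\mathsf{bit}$ faults, and arguing that the test-signature check isolates the true key among the Hamming-ball candidates.
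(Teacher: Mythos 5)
Your proposal matches the paper's proof essentially step for step: the same single game hop replacing $r := \ROforHedging(f_1(\sk), m, \nonce)$ by fresh randomness (justified by the no-repeated-$(m,\nonce)$ assumption), the same reduction $\Ad{B_1}$ simulating the re-randomised game via its own $\oracleFaultSIGN$, the same use of the extraction variant of one-way to hiding to bound the hop by $2q_{\ROforHedging}\sqrt{\Pr[\mathrm{Find}]}$, and the same extractor-based $\Ad{B_2}$ that recovers $\sk$ from the measured faulted key by enumerating the $O(|\sk|)$ one-bit-fault candidates and testing each with a sign-and-verify. The only presentational difference is that the paper makes the O2H application precise via two intermediate games (adversary and signer both using a reprogrammed oracle $\ROforHedging'$, then switching only the adversary's direct access back to $\ROforHedging$), which cleanly resolves the consistency subtlety you correctly identify as the main obstacle.
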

With regards to the reduction's advantage, this proof is not as tight as the one in \cite{EC:AOTZ20}:
$\TrafoHedging [\SignatureScheme, \ROforHedging]$
derives the commitment randomness as $r := \ROforHedging(\sk, m, \nonce)$.
During our proof, we need to decouple $r$ from the secret key.
In the ROM, it is straightforward how to turn any adversary noticing this change into an extractor that returns the secret key.
In the QROM, however, all currently known extraction techniques still come with a quadratic loss in the extraction probability.
On the other hand, our reduction is tighter with regards to running time, which we reduce by a factor of $q_{\ROforHedging}$ when compared to \cite{EC:AOTZ20}.
If we hedge with an independent seed $s$ of length $\ell$ (instead of $\sk$), it can be shown with a multi-instance generalization of \cite[Lem.~2.2]{EC:SaiXagYam18} that
\begin{align*}
\succfun^{\UFnonceFaultCMA}_{\SigSchemeHedged}(\Ad{A})
	\leq \succfun^{\UFfaultCMA}_{\SignatureScheme}(\Ad{B})
	+ (\ell + 1 )\cdot (q_{S} + q_{\ROforHedging}) \cdot \sqrt{\nicefrac{1}{2^{\ell-1}}}
\enspace .
\end{align*}
\section{Adaptive reprogramming: proofs}\label{sec:AdaptiveReprogramming}

We will now give the proof for our main \cref{theorem:reprGameBased}, which can be broken down into three steps:
In this section, we consider the simple special case in which only a single reprogramming instance occurs,
and where no additional input $\xSecond$ is provided to the adversary.
The generalisation to multiple reprogramming instances follows from a standard hybrid argument.
The generalisation that considers additional input is also straightforward, as the achieved bounds are information-theoretical and a reduction can hence compute marginal and conditioned distributions on its own.
For the sake of completeness, we include the generalisation steps in \cref{sec:AdaptiveReprogramming:Appendix}.

In this and the following sections, we need quantum theory. We stick to the common notation as introduced in, e.g. \cite{NC00}. Nevertheless we introduce some of the most important basics and notational choices we make. For a vector $\ket\psi\in\mathcal H$ in a complex Euclidean space $\mathcal H$, we denote the standard Euclidean norm by $\|\ket\psi\|$. We use a subscript to indicate that a vector $\ket\psi$ is the state of a quantum register $A$ with Hilbert space $\mathcal H$, i.e. $\ket\psi_A$. Similarly, $M_A$ indicates that a matrix $M$ acting on $\mathcal H$ is considered as acting on register $A$. The joint Hilbert space of multiple registers is given by the tensor product of the single-register Hilbert spaces. Where it helps simplify notation, we take the liberty to reorder registers, keeping track of them using register subscripts.  The only other norm we will require is the trace norm. For a matrix $M$ acting on $\mathcal H$, the trace norm $\|M\|_1$ is defined as the sum of the singular values of  $M$. An important quantum gate is the quantum extension of the classical CNOT gate. This quantum gate is a unitary matrix $\mathrm{CNOT}$ acting on two qubits, i.e. on the vector space $\mathbb C^2\otimes \mathbb C^2$, as $\mathrm{CNOT}\ket{b_1}\ket{b_2}=\ket{b_1}\ket{b_2\oplus b_1}$. We sometimes subscript a CNOT gate with control register $A$ and target register $B$ with $A:B$, and extend this notation to the case where many CNOT gates are applied, i.e. $\mathrm{CNOT}^{\otimes n}_{A:B}$ means a CNOT gate is applied to the $i$-th qubit of the $n$-qubit registers $A$ and $B$ for each $i=1,...,n$ with the qubits in $A$ being the controls and the ones in $B$ the targets.
\subsection{The superposition oracle}
For proving the main result of this section, we will use the (simplest version of the) superposition oracle introduced in \cite{C:Zhandry19}. In the following, we introduce that technique, striving  to keep this explanation accessible even to readers with minimal knowledge about quantum theory.%

Superposition oracles are perfectly correct methods for simulating a quantum-accessible random oracle $\RO O : \{0,1\}^n\to\{0,1\}^m$. Different variants of the superposition oracle have different additional features that make them more useful than the quantum-accessible random oracle itself. We will use the fact that in the superposition oracle formalism, the reprogramming can be directly implemented by replacing a part of the quantum state held by the oracle, instead of using a simulator that sits between the original oracle and the querying algorithm. Notice that for this, we only need the 
simplest version of the superposition oracle from~\cite{C:Zhandry19}.\footnote{Note that this basic superposition oracle does not provide an \emph{efficient} simulation of a quantum-accessible random oracle, which is fine for proving a query lower bound that holds without assumptions about time complexity.} In that basic form, there are only three relatively simple conceptual steps underlying the construction of the superposition oracle, with the third one being key to its usefulness in analyses:\\
\ifnum\tightOnSpace=1
--
\else
\begin{itemize}
	\item 
	\fi
	For each $x\in\{0,1\}^n$, $\RO O(x)$ is a random variable uniformly distributed on $\{0,1\}^m$. This random variable can, of course, be sampled using a \emph{quantum measurement}, more precisely a computational basis measurement of the state 
    \begin{align*}
		\ket{\phi_0}=2^{-m/2}\sum_{y\in\{0,1\}^m}\ket  y.
	\end{align*}
		\ifnum\tightOnSpace=1
	--
	\else
	\item 
	\fi  For a function $o : \{0,1\}^n\to\{0,1\}^m$, we can store the string $o(x)$ in a quantum register $F_x$. In fact, to sample $\RO O(x)$, we can prepare a register $F_x$ in state $\ket{\phi_0}$, perform a computational basis measurement and keep the \emph{collapsed} so-called \emph{post-measurement state}. Outcome $y$ of the measurement corresponds to the projector $\proj{y}$, and a post-measurement state proportional to
	\begin{align*}
		\proj y\ket{\phi_0}=2^{-\frac m 2}\ket y.
	\end{align*}
Now a query with input $\ket x_X\ket \psi_Y$ can be answered using CNOT gates, i.e. we can answer queries with a superposition oracle unitary $O$ acting on input registers $X,Y$ and an oracle register $F=F_{0^m}F_{0^{m-1}1}...F_{1^m}$ such that 
\begin{align*}
	O_{XYF}\proj x_X=\proj x_X\otimes \left(\mathrm{CNOT}^{\otimes m}\right)_{F_x:Y}.
\end{align*}
	\ifnum\tightOnSpace=1
--
\else
\item 
\fi 
    Since the matrices $\proj y_{F_x}$ and $\left(\mathrm{CNOT}^{\otimes m}\right)_{F_x:Y}$ commute, we can delay the measurement that performs the sampling of the random oracle until the end of the runtime of  the querying algorithm. Queries are hence answered using the unitary $O$, but acting on oracle registers $F_x$ that are all initialized in the uniform superposition state $\ket{\phi_0}$, and only after the querying algorithm has finished, the register $F$ is measured to obtain the concrete random function $\RO O$.
\ifnum\tightOnSpace=1
\else
\end{itemize}
\fi

A quantum-accessible oracle for a random function $\RO O:\{0,1\}^n\to \{0,1\}^m$ is thus implemented  as follows:\\
\ifnum\tightOnSpace=1
--
\else
\begin{itemize}
	\item 
	\fi
	Initialize: Prepare the initial state
	\begin{align*}
	\ket{\Phi}_F=\bigotimes_{x\in\{0,1\}^n}\ket{\phi_0}_{F_x}.
	\end{align*}
	\ifnum\tightOnSpace=1
	--
	\else
	\item 
	\fi 
	Oracle: A quantum query on registers $X$ and $Y$ is answered using $O_{XYF}$\\
\ifnum\tightOnSpace=1
--
\else
\item 
\fi Post-processing: Register $F$ is measured to obtain a random function $\RO O$.
\ifnum\tightOnSpace=1
\else
\end{itemize}
\fi
The last step can be (partially) omitted whenever the function $\RO O$ is not needed for evaluation of the success or failure of the algorithm. In the following, the querying algorithm is, e.g. tasked with distinguishing two oracles, a setting where the final sampling measurement can be omitted.%

Note that it is straightforward to implement the operation of reprogramming a random oracle to a fresh random value on a certain input $x$: just discard the contents of register $F_x$ and replace them with a freshly prepared state $\ket{\phi_0}$.
In addition, we need the following lemma
\begin{lemma}[Lemma 2 in \cite{EC:AMRS20}, reformulated]\label{lem:qqueries}
	Let $\ket{\psi_q}_{AF}$ be the joint adversary-oracle state after an adverary has made $q$ queries to the superposition oracle with register $F$. Then this state can be written as
	\begin{align*}
		\ket{\psi_q}_{AF}=\sum_{\substack{S\subset\{0,1\}^n\\|S|\le q}}\ket{\psi_q^{(S)}}_{AF_S}\otimes \left(\ket{\phi_0}^{\otimes (2^n-|S|)}\right)_{F_{S^c}},
	\end{align*}
	where for any set $R=\{x_1,x_2,...,x_{|R|}\}\subset \{0,1\}^n$ we have defined $F_R=F_{x_1}F_{x_2}...F_{x_{|R|}}$ and $\ket{\psi_q^{(S)}}_{AF_{S}}$ are vectors such that $\bra{\phi_0}_{F_x}\ket{\psi_q^{(S)}}_{AF_{S}}=0$ for all $x\in S$.
\end{lemma}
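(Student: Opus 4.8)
The plan is to prove the statement by induction on the number of queries $q$, tracking how a single application of the superposition oracle changes the structure of the joint state. For the base case $q=0$, before any query the joint state is $\ket{\psi_0}_A\otimes\ket{\Phi}_F=\ket{\psi_0}_A\otimes\bigotimes_{x}\ket{\phi_0}_{F_x}$, which is exactly the claimed form with the single term $S=\emptyset$ (and the orthogonality condition is vacuous). So it remains to show that one query maps a state of the claimed form after $q$ queries to one of the claimed form after $q+1$ queries.

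Recall that one query consists of an adversary unitary acting on register $A$ followed by the oracle unitary $O_{XYF}=\sum_x\proj x_X\otimes(\mathrm{CNOT}^{\otimes m})_{F_x:Y}$, where the query registers $X,Y$ are part of $A$. The adversary unitary acts trivially on $F$, so it preserves both the claimed tensor structure and the orthogonality condition $\bra{\phi_0}_{F_x}\ket{\psi_q^{(S)}}=0$ for $x\in S$; I therefore only need to analyze $O_{XYF}$. The key computation is the short identity
\[
\bra{\phi_0}_{F_x}(\mathrm{CNOT}^{\otimes m})_{F_x:Y}\ket{\phi_0}_{F_x}=\proj{\phi_0}_Y,
\]
which follows by expanding $\ket{\phi_0}=2^{-m/2}\sum_y\ket y$ and using $(\mathrm{CNOT}^{\otimes m})\ket a_{F_x}\ket b_Y=\ket a_{F_x}\ket{b\oplus a}_Y$.

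With this identity in hand, I would decompose the action of each controlled block $(\mathrm{CNOT}^{\otimes m})_{F_x:Y}$ on register $F_x$ using the two orthogonal projectors $P_0:=\proj{\phi_0}_{F_x}$ and $P_1:=(I-\proj{\phi_0})_{F_x}$, and perform a case analysis over each term $S$ of the inductive hypothesis and each branch value $x$ of the control register $X$. If $x\notin S$, the register $F_x$ starts in $\ket{\phi_0}$: the $P_0$-part leaves it in $\ket{\phi_0}$ (acting as $\proj{\phi_0}_Y$ on the output register by the identity above) and keeps the index set equal to $S$, whereas the $P_1$-part produces a component orthogonal to $\ket{\phi_0}$ on $F_x$ and hence enlarges the index set to $S\cup\{x\}$. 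If $x\in S$, the register $F_x$ is already excited inside $\ket{\psi_q^{(S)}}$: here the $P_0$-part returns $F_x$ to $\ket{\phi_0}$ and shrinks the index set to $S\setminus\{x\}$, while the $P_1$-part keeps it at $S$ and preserves orthogonality on $F_x$. In every branch the index set changes by at most one element, and each new component lies in the range of $P_1$ on precisely the registers indexed by its new set.

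Finally I would re-collect all resulting components according to their index set $S'$. Since the inductive hypothesis produced only sets of size at most $q$ and each branch changes the set size by at most $+1$, every surviving $S'$ satisfies $|S'|\le q+1$; and since each contribution to a fixed $S'$ is individually orthogonal to $\ket{\phi_0}$ on every $F_x$ with $x\in S'$, so is their sum. This yields the claimed form after $q+1$ queries and closes the induction. The main obstacle is the bookkeeping in the case analysis: one must verify carefully that the orthogonality constraint survives both the creation of a fresh excited register (the $x\notin S$, $P_1$ branch) and the CNOT acting on an already-excited register (the $x\in S$ branches), and that a given output set $S'$ collects contributions only from $S\in\{S',\,S'\setminus\{x\},\,S'\cup\{x\}\}$, so that the size bound $|S'|\le q+1$ is never violated.
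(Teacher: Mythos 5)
Your proof is correct. Note that the paper itself gives no proof of this lemma --- it is imported verbatim from the cited reference \cite{EC:AMRS20} --- so there is nothing in the paper to compare against; your induction on the number of queries is exactly the standard argument used there. The key identity $\bra{\phi_0}_{F_x}(\mathrm{CNOT}^{\otimes m})_{F_x:Y}\ket{\phi_0}_{F_x}=\proj{\phi_0}_Y$ checks out, the case analysis over $x\in S$ versus $x\notin S$ with the projectors $\proj{\phi_0}_{F_x}$ and $I-\proj{\phi_0}_{F_x}$ correctly accounts for the index set changing by at most one element per query, and the orthogonality condition survives regrouping by linearity since the adversary's unitaries and the CNOT blocks never touch the registers $F_{x'}$ for $x'\neq x$.
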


\subsection{Reprogramming once}

We are now ready to study our simple special case.
Suppose a random oracle $\RO O$ is reprogrammed at a single input $x^*\in\{0,1\}^n$, sampled according to some probability distribution $p$,
to a fresh random output  $y^*\leftarrow \{0,1\}^m$. We set $\RO O_0=\RO O$ and  define $\RO O_1$ by $\RO O_1(x^*)=y$ and $\RO O_1(x)=\RO O$ for $x\neq x^*$.
We will show that if $x^*$ has sufficient min-entropy given $\RO O$, such reprogramming is hard to detect.

More formally, consider a two-stage distinguisher $\Ad{D}=(\Ad{D}_0, \Ad{D}_1)$. The first stage $\Ad{D}_0$ has trivial input, makes $q$ quantum queries to $\RO O$ and ouputs a quantum state $\ket{\psi_{int}}$ and a sampling algorithm for a probability distribution $p$ on $\{0,1\}^n$. The second stage $\Ad{D}_1$ gets $x^*\leftarrow p$ and $\ket{\psi_{int}}$ as input, has arbitrary quantum query access to $\RO O_b$ and outputs a bit $b'$ with the goal that $b'=b$. 
We prove the following.
\begin{theorem}\label{thm:repr}  \label{thm:Reprogramming:SingleInstance}
	The success probability for any distinguisher $\Ad{D}$ as defined above is bounded by
	\begin{align*}
	\Pr[b=b']\le\frac 1 2+ \frac 1 2\sqrt{q p^{\Ad{D}}_{\max}}+\frac 1 4 q p^{\Ad{D}}_{\max},
	\end{align*}
	where the probability is taken over $b\leftarrow\{0,1\},(\ket{\psi_{int}}, p)\leftarrow\Ad{D}_0^{\RO O}(1^n)$ and $b'\leftarrow\Ad{D}_1^{\RO O_b}(x^*,\ket{\psi_{int}})$, and $p^{\Ad{D}}_{\max}=\mathbb E_{(\ket{\psi_{int}}, p)\leftarrow\Ad{D}_0^{\RO O_0}(1^n)}\max_{x}p(x)$.
\end{theorem}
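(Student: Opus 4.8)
The plan is to run the entire argument inside the superposition-oracle formalism recalled above, where reprogramming at $x^*$ is implemented simply by discarding register $F_{x^*}$ and re-initialising it in the uniform state $\ket{\phi_0}$. Thus the two worlds $b=0$ and $b=1$ hand $\Ad{D}_1$ exactly the same joint adversary--oracle state, except that in world $1$ the register $F_{x^*}$ is refreshed after $\Ad{D}_0$ halts. Since $\Ad{D}_1$'s subsequent queries and its final measurement together form a fixed quantum channel applied to that state, for each fixed $x^*$ the distinguishing advantage is at most the trace distance of the two inputs, and Holevo--Helstrom gives $\Pr[b=b'\mid x^*]\le \tfrac12+\tfrac14\|\rho_0-\rho_1\|_1$, where $\rho_0=\proj{\psi_q}$ is the state after $\Ad{D}_0$'s $q$ queries and $\rho_1$ is its refresh on $F_{x^*}$. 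Averaging over $x^*\leftarrow p$ and over the run of $\Ad{D}_0$ then reduces everything to bounding $\E_{x^*}\|\rho_0-\rho_1\|_1$.

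First I would apply \cref{lem:qqueries} to write $\ket{\psi_q}$ in its $\sum_{|S|\le q}\ket{\psi_q^{(S)}}_{AF_S}\otimes(\ket{\phi_0}^{\otimes(2^n-|S|)})_{F_{S^c}}$ form and, for a fixed $x^*$, split it as $\ket{\psi_q}=\ket\alpha+\ket\beta$ according to whether register $F_{x^*}$ is in state $\ket{\phi_0}$ or orthogonal to it. These components are orthogonal, so $\|\alpha\|^2+\|\beta\|^2=1$, and the orthogonality property $\bra{\phi_0}_{F_x}\ket{\psi_q^{(S)}}=0$ for $x\in S$ shows that $\ket\beta$ collects exactly the terms with $x^*\in S$, whence $\|\beta\|^2=\sum_{S\ni x^*}\|\psi_q^{(S)}\|^2$. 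Summing against $p$ and using $\sum_{x\in S}p(x)\le|S|\max_x p(x)\le q\max_x p(x)$ together with $\sum_S\|\psi_q^{(S)}\|^2=1$ yields the key entropy bound $\E_{x^*}\|\beta\|^2\le q\max_x p(x)$, which after the outer expectation over $\Ad{D}_0$ becomes $q\,p^{\Ad D}_{\max}$.

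Second, and this is the technical crux, I would bound $\|\rho_0-\rho_1\|_1$ in terms of $\|\beta\|$ alone. The point is that refreshing $F_{x^*}$ acts as the identity on the $\ket\alpha$-component and destroys the $\alpha$--$\beta$ coherences, so in the block basis given by $H_A\otimes\ket{\phi_0}$ and the one-dimensional $\ket\beta$-direction the difference takes the form $\rho_0-\rho_1=\left(\begin{smallmatrix} -G & s\ket g\\ s\bra g & s^2\end{smallmatrix}\right)$, with $s=\|\beta\|$, $\ket g\propto\ket\alpha$ of norm $\|\alpha\|=\sqrt{1-s^2}$, and $G\succeq0$ the refreshed version of $\proj\beta$ satisfying $\Tr G=\|\beta\|^2$. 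Splitting off the diagonal block $G$ (of trace norm $\|\beta\|^2$) and bounding the remaining rank-two part by an explicit $2\times2$ eigenvalue computation, which gives trace norm $s\sqrt{4-3s^2}\le 2s$, I obtain $\|\rho_0-\rho_1\|_1\le 2\|\beta\|+\|\beta\|^2$. Getting these constants exactly right, rather than the looser $2\|\alpha\|\|\beta\|+2\|\beta\|^2$ that a naive triangle inequality would produce, is the delicate part, since it is precisely what yields the $\tfrac14$ coefficient of the quadratic term in the statement.

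Finally I would combine the pieces: plugging the trace-norm bound into Holevo--Helstrom and averaging over $x^*$ gives $\Pr[b=b']\le\tfrac12+\tfrac12\,\E_{x^*}\|\beta\|+\tfrac14\,\E_{x^*}\|\beta\|^2$, and I conclude with Jensen's inequality $\E\|\beta\|\le\sqrt{\E\|\beta\|^2}$ together with the entropy bound $\E\|\beta\|^2\le q\,p^{\Ad D}_{\max}$ to reach $\tfrac12+\tfrac12\sqrt{q\,p^{\Ad D}_{\max}}+\tfrac14\,q\,p^{\Ad D}_{\max}$, exactly as claimed.
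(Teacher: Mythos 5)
Your proposal is correct and follows essentially the same route as the paper's proof: superposition-oracle implementation of reprogramming, reduction to trace distance via Holevo--Helstrom, the split of the post-query state according to whether $F_{x^*}$ is in $\ket{\phi_0}$, the entropy bound $\E_{x^*}\|\beta\|^2\le q\,p_{\max}$ from \cref{lem:qqueries}, and a final Jensen step. The only (cosmetic) difference is that you obtain $\|\rho_0-\rho_1\|_1\le 2\|\beta\|+\|\beta\|^2$ by an explicit $2\times 2$ eigenvalue computation on the rank-two part, whereas the paper reaches the identical bound $2\sqrt{\varepsilon_x}+\varepsilon_x$ via a three-term triangle inequality combined with H\"older's inequality.
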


\begin{proof}
	We implement $\RO O=\RO O_0$ as a superposition oracle. Without loss of generality\footnote{This can be seen by employing the Stinespring dilation theorem, or by using standard techniques to delay measurement and discard operations until the end of a quantum algorithm.}, we can assume that $\Ad{D}$ proceeds by performing a unitary quantum computation, followed by  a measurement to produce the classical output $p$ and the discarding of a working register $G$. Let $\ket{\gamma}_{RGF}$ be the algorithm-oracle-state after the unitary part of $\Ad{D}_0$ and the measurement have been performed, conditioned on its second output being a fixed probability distribution $p$. $R$ contains $\Ad{D}_0$'s first output%
	. 
	
	Define
	$	\varepsilon_x=1-\big\|\bra{\phi_0}_{F_{x}}\ket{\gamma}_{RGF}\big\|^2,$
a measure of how far the contents of register $F_x$ are from the uniform superposition.
Intuitively, this is the `probability' that the distinguisher knows $\RO{O}(x)$, and should be small in expectation over $x\leftarrow p$. We therefore begin by bounding the distinguishing advantage in terms of this quantity.
	For a fixed $x$,  we can write the density matrix $\rho^{(0)}=\proj{\gamma}$ as 
	\begin{align}
		\rho^{(0)}_{RGF}=&\bra{\phi_0}_{F_x}\rho^{(0)}_{RGF}\ket{\phi_0}_{F_x}\otimes \proj{\phi_0}_{F_x}+\rho^{(0)}_{RGF}\left(\mathds 1- \proj{\phi_0}_{F_x}\right) \nonumber \\
		&+\left(\mathds 1- \proj{\phi_0}_{F_x}\right)\rho^{(0)}_{RGF}\proj{\phi_0}_{F_x}.\label{eq:rho0}
	\end{align}
	The density matrix $\rho^{(1,x)}_{RGF}$ for the algorithm-oracle-state after $\Ad{D}_0$ has finished and the oracle has been reprogrammed at $x$ (i.e. $b=1$) is
	\begin{align}
		&\rho^{(1,x)}_{RGF}=\Tr_{F_{x}}[\rho^{(1,x)}_{RGF}]\otimes \proj{\phi_0}_{F_x} 
		=\bra{\phi_0}_{F_x}\rho^{(0)}_{RGF}\ket{\phi_0}_{F_x}\otimes \proj{\phi_0}_{F_x}\nonumber \\
		&+\Tr_{F_{x}}[(\mathds 1-\proj{\phi_0}_{F_x})\rho^{(0)}_{RGF}]\otimes \proj{\phi_0}_{F_x},\label{eq:rho1}
	\end{align}
	where the second equality is immediate when computing the partial trace in an orthonormal basis containing $\ket{\phi_0}$. 
	
	We analyze the success probability of $\Ad{D}$. In the following, set $x^*=x$. The second stage, $\Ad{D}_1$, has arbitrary query access to the oracle $\RO O_b$. In the superposition oracle framework, that means $\Ad{D}_1$ can apply arbitrary unitary operations on its registers $R$ and $G$, and the oracle unitary $O$ to some sub-register registers $XY$ of $G$ and the oracle register $F$. We bound the success probability by allowing arbitraty operations on $F$, thus reducing the oracle distinguishing task to the task of distinguishing the quantum states $\rho^{(b,x)}_{RF}=\Tr_G\rho^{(b,x)}_{RGF}$ for $b=0,1$, where $\rho^{(0,x)}\coloneqq \rho^{(0)}$. By the bound relating distinguishing advantage and trace distance,
	\begin{align}
		\Pr[b=b'|x^*=x]\le&\frac 1 2+ \frac{1}{4}\big\|\rho^{(0)}_{RF}-\rho^{(1,x)}_{RF}\big\|_1\le\frac 1 2+ \frac{1}{4}\big\|\rho^{(0)}_{RGF}-\rho^{(1,x)}_{RGF}\big\|_1,\label{eq:disting}
	\end{align}
	where the probability is taken over $b\leftarrow\{0,1\},\ket{\psi_{int}}\leftarrow\Ad{D}_0^{\RO O_0}(1^n)$ and $b'\leftarrow\Ad{D}_1^{\RO O_b}(x,\ket{\psi_{int}})$, and we have used that the trace distance is non-increasing under partial trace. Using Equation \eqref{eq:rho0} and \eqref{eq:rho1}, we bound
	\begin{align*}
		&\big\|\rho^{(0)}_{RGF}-\rho^{(1,x)}_{RGF}\big\|_1\\
		&\le \Big\|\rho^{(0)}_{RGF}\left(\mathds 1- \proj{\phi_0}_{F_x}\right)+\left(\mathds 1- \proj{\phi_0}_{F_x}\right)\rho^{(0)}_{RGF}\proj{\phi_0}_{F_x}\\
		&\quad\quad-\Tr_{F_{x}}[(\mathds 1-\proj{\phi_0}_{F_x})\rho^{(0)}_{RGF}]\otimes \proj{\phi_0}_{F_x}\Big\|_1\\
		&\le \Big\|\rho^{(0)}_{RGF}\left(\mathds 1- \proj{\phi_0}_{F_x}\right)\Big\|_1+\Big\|\left(\mathds 1- \proj{\phi_0}_{F_x}\right)\rho^{(0)}_{RGF}\proj{\phi_0}_{F_x}\Big\|_1\\
		&\quad\quad+\Big\|\Tr_{F_{x}}[(\mathds 1-\proj{\phi_0}_{F_x})\rho^{(0)}_{RGF}]\otimes \proj{\phi_0}_{F_x}\Big\|_1,
	\end{align*}
	Where the last line is the triangle inequality. The trace norm of a positive semidefinite matrix is equal to its trace, so the last term can be simplified as 
	\begin{align*}
		&		\Big\|\Tr_{F_{x}}[(\mathds 1-\proj{\phi_0}_{F_x})\rho^{(0)}_{RGF}]\otimes \proj{\phi_0}_{F_x}\Big\|_1\\
		&\quad\quad=\Tr[(\mathds 1-\proj{\phi_0}_{F_x})\proj{\gamma}_{RGF}]=\varepsilon_x.
	\end{align*}
The second term is upper-bounded by the first via Hölder's inequality, which simplifies as
	\begin{align*}
		\Big\|\rho^{(0)}_{RGF}\left(\mathds 1- \proj{\phi_0}_{F_x}\right)\Big\|_1&=\Big\|\proj{\gamma}_{RGF}\left(\mathds 1- \proj{\phi_0}_{F_x}\right)\Big\|_1\\
		=\Big\|\left(\mathds 1- \proj{\phi_0}_{F_x}\right)\ket{\gamma}_{RGF}\Big\|_2&=\sqrt{\varepsilon_x}
	\end{align*}
	where the second equality uses that $\ket\gamma$ is normalized.	In summary we have 
	\begin{equation}\label{eq:trdistbound}
		\big\|\rho^{(0)}_{RGF}-\rho^{(1,x)}_{RGF}\big\|_1\le 2\sqrt{\varepsilon_x}+\varepsilon_x.
	\end{equation}
It remains to bound $\varepsilon_x$ in expectation over $x\leftarrow p$. To this end, we prove
\begin{equation}\label{eq:bound1}
	\mathbb E_{x^*\leftarrow p}\left[\big\|\bra{\phi_0}_{F_{x^*}}\ket{\gamma}_{RGF}\big\|^2\right]\ge 1-q p_{\max},
\end{equation}
where $p_{\max}=\max_{x}p(x)$. In the following, sums over $S$ are taken over $S\subset\{0,1\}^n:|S|\le q$, with additional restrictions explicitly mentioned.
We have
\begin{align*}
	&\mathbb E_{x^*\leftarrow p}\left[\big\|\bra{\phi_0}_{F_{x^*}}\ket{\gamma}_{RGF}\big\|^2\right]=\sum_{x^*\in\{0,1\}^n}p(x^*)\big\|\bra{\phi_0}_{F_{x^*}}\ket{\gamma}_{RGF}\big\|^2\\
	&\quad\quad=\sum_{x^*\in\{0,1\}^n}p(x^*)\big\|\sum_S \bra{\phi_0}_{F_{x^*}}\ket{\psi_q^{(S)}}_{RGF_S}\otimes \left(\ket{\phi_0}^{\otimes (2^n-|S|)}\right)_{F_{S^c}}\big\|^2,
\end{align*}
  where we have used Lemma \ref{lem:qqueries} as well as the notation $\ket{\psi_q^{(S)}}$ from there. (Lemma \ref{lem:qqueries} clearly also holds after the projector corresponding to second output equaling $p$ is applied). Using $\bra{\phi_0}_{F_x}\ket{\psi_q^{(S)}}_{RGF_S}=0$ for all $x\in S$ we simplify
\begin{align*}
	&\sum_{x^*\in\{0,1\}^n}p(x^*)\big\|\sum_S \bra{\phi_0}_{F_{x^*}}\ket{\psi_q^{(S)}}_{RGF_S}\otimes \left(\ket{\phi_0}^{\otimes (2^n-|S|)}\right)_{F_{S^c}}\big\|^2\\
	&\quad\quad=\sum_{x^*\in\{0,1\}^n}p(x^*)\big\|\sum_{S\not\ni x^*}\ket{\psi_q^{(S)}}_{RGF_S}\otimes \left(\ket{\phi_0}^{\otimes (2^n-|S|-1)}\right)_{F_{S^c\setminus \{x^*\}}}\big\|^2.
\end{align*}

The summands in the second sum are pairwise orthogonal, so
\begin{align*}
	&\sum_{x^*\in\{0,1\}^n}p(x^*)\big\|\sum_{S\not\ni x^*}\ket{\psi_q^{(S)}}_{RGF_S}\otimes \left(\ket{\phi_0}^{\otimes (2^n-|S|-1)}\right)_{F_{S^c\setminus \{x^*\}}}\big\|^2\\
	&\quad\quad=\sum_{x^*\in\{0,1\}^n}p(x^*)\sum_{S\not\ni x^*}\big\|\ket{\psi_q^{(S)}}_{RGF_S}\otimes \left(\ket{\phi_0}^{\otimes (2^n-|S|-1)}\right)_{F_{S^c\setminus \{x^*\}}}\big\|^2\\
	&\quad\quad=\sum_S \sum_{x^*\in S^c}p(x^*)\big\|\ket{\psi_q^{(S)}}_{RGF_S}\otimes \left(\ket{\phi_0}^{\otimes (2^n-|S|-1)}\right)_{F_{S^c\setminus \{x^*\}}}\big\|^2\\
	&\quad\quad=\sum_S \sum_{x^*\in S^c}p(x^*)\big\|\ket{\psi_q^{(S)}}_{RGF_S}\otimes \left(\ket{\phi_0}^{\otimes (2^n-|S|)}\right)_{F_{S^c}}\big\|^2
\end{align*}
where we have used the fact that the state $\ket{\phi_0}$ is normalized in the last line. %
But for any $S\subset\{0,1\}^n$ we have 
\begin{align*}
	\sum_{x^*\in S^c}p(x^*)=&1-\sum_{x^*\in S}p(x^*)\ge 1-|S|p_{\max},
\end{align*}
where here, $p_{\max}=\max_{x}p(x)$.
We hence obtain
\begin{align*}
	&\sum_S \sum_{x^*\in S^c}p(x^*)\big\|\ket{\psi_q^{(S)}}_{RGF_S}\otimes \left(\ket{\phi_0}^{\otimes (2^n-|S|)}\right)_{F_{S^c}}\big\|^2\\
	&\quad\quad\ge\sum_S (1-|S|p_{\max})\big\|\ket{\psi_q^{(S)}}_{RGF_S}\otimes \left(\ket{\phi_0}^{\otimes (2^n-|S|)}\right)_{F_{S^c}}\big\|^2\\
	&\quad\quad\ge(1-q p_{\max})\sum_S \big\|\ket{\psi_q^{(S)}}_{RGF_S}\otimes \left(\ket{\phi_0}^{\otimes (2^n-|S|)}\right)_{F_{S^c}}\big\|^2=1-q p_{\max},
\end{align*}
where we have used the normalization of  $\ket{\gamma}_{RGF}$ in the last equality. Combining the above equations proves Equation \eqref{eq:bound1}.  Putting everything together, we bound
\begin{align*}
	\Pr[b=b']=&\mathbb E_{p}\mathbb E_{x}\Pr[b=b'|p,x]\le\frac 1 2+\frac 1 4\mathbb E_{p}\mathbb E_{x}[2\sqrt{\varepsilon_x}+\varepsilon_x]\\
	\le&\frac 1 2+\frac 1 4\mathbb E_{p}[2\sqrt{q p_{\max}}+q p_{\max}]\le\frac 1 2+\frac 1 2\sqrt{q p^{\Ad{D}}_{\max}}+q p^{\Ad{D}}_{\max}.
\end{align*}
Here, the inequalities are due to Equation \eqref{eq:disting} and Equation \eqref{eq:trdistbound}, Equation \eqref{eq:bound1} and Jensen's inequality, and another Jensen's inequality, respectively. 
\qed	
\end{proof}
\section{A matching attack}\label{sec:Attack}
We now describe an attack matching the bound presented in \Cref{thm:repr}. 
For simplicity, we restrict our attention to the case where just one point is (potentially) reprogrammed.

Our distinguisher makes $q$ queries to $\RO{O}$, the oracle before the potential reprogramming, and $q$ queries to $\RO{O}'$, the oracle after the potential reprogramming. In our attack, we fix  an arbitrary cyclic permutation $\sigma$ on $[2^n]$, and for the fixed reprogrammed point $x^*$, we define \ifeprint the set \fi $S = \{x^*, \sigma^{-1}(x^*),...,\sigma^{-q+1}(x^*)\}$, $\overline{S} = \{0,1\}^n \setminus S$,
$\Pi_0 = \frac{1}{2}\left( \ket{S} + \ket{\overline{S}}\right)\left( \bra{S} + \bra{\overline{S}}\right)$ and $\Pi_1 = I - \Pi_0$.\footnote{Formally, $S$, $\Pi_0$ and $\Pi_1$ are functions of $x^*$ but we omit this dependence for simplicity, since we can assume that $x^*$ is fixed.}
The distinguisher $\Ad{D}$ is defined in \cref{fig:attack}.

\begin{figure}[h] \begin{center} \makebox[\textwidth][c]{\fbox{
			
	\nicoresetlinenr	
	
	\begin{minipage}[t]{7cm}
		
		\underline{Before potential reprogramming:}
		\begin{nicodemus}
		    \item Prepare registers $XY$ in $\frac{1}{\sqrt{2^n}}\sum_{x \in  [2^n]}\ket{x, 0}_{XY}$
		    \item Query $\RO{O}$ using registers $XY$
		    \item \pcfor $i = 0,...,q-1$:
			    \item  \quad Apply $\sigma$ on register $X$
			    \item  \quad Query $\RO{O}$ using registers $XY$
		\end{nicodemus}
		
	\end{minipage}
	\begin{minipage}[t]{6cm}
		
		\underline{After potential reprogramming:}
		\begin{nicodemus}
		    \item Query ${\RO{O}'}$ using using registers $XY$
		    \item \pcfor $i = q-2,...,0$:
			    \item \quad Apply $\sigma^{-1}$ on register $X$
			    \item \quad Query ${\RO{O}'}$ using registers $XY$
		    \item Measure $X$ according to $\{\Pi_0, \Pi_1\}$
		    \item Output $b$ if the state projects onto $\Pi_b$.
		\end{nicodemus}
		
	\end{minipage}
}}\end{center}
	\ifnum\tightOnSpace=1 \vspace{-0.4cm} \fi
	\caption{Distinguisher for a single reprogrammed point.}
	\label{fig:attack}
\end{figure}

\begin{restatable}{theorem}{lemmatwo}\label{lem:basic-attack}
	 For every $1 \leq q  < 2^{n-3}$, the attack described in \Cref{fig:attack} can be implemented in quantum polynomial-time.
	 Performing $q$ queries each before and after the potential reprogramming,
	 it detects the reprogramming of a random oracle $\RO O : \{0,1\}^n\to\{0,1\}^m$ at a single point
	 with probability at least $\Omega(\sqrt{\frac{q}{2^n}})$.
\end{restatable}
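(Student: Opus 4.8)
The plan is to fix the underlying random function $\RO{O}$ and the fresh reprogramming value $y^{*}=\RO{O}'(x^{*})$, analyze the distinguisher branch-by-branch on the computational basis of $XY$, and average over $(\RO{O},y^{*})$ only at the very end. This is legitimate because, for a fixed oracle, every operation in \cref{fig:attack} — the permutation $\sigma^{\pm1}$ on $X$ and each oracle call $\ket{x,y}\mapsto\ket{x,y\oplus\RO{O}_b(x)}$ — is a permutation of the computational basis, so linearity propagates the branch analysis to the initial uniform superposition. For the efficiency claim I would take $\sigma$ to be $x\mapsto x+1\bmod 2^{n}$ (so $\sigma^{\pm1}$ and the needed $\sigma^{j}$ are $\poly(n)$-time), observe that state preparation is a layer of Hadamards, and note that since $\Ad{D}_{1}$ knows $x^{*}$ the set $S=\{x^{*},\sigma^{-1}(x^{*}),\dots,\sigma^{-q+1}(x^{*})\}$ is an interval, so membership $x\in S$ is a cheap comparison. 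The two-outcome measurement $\{\Pi_{0},\Pi_{1}\}$ is then implemented by a poly-time unitary $V$ preparing the piecewise-uniform reference state $\tfrac1{\sqrt2}(\ket{S}+\ket{\overline S})$ from $\ket{0^{n}}$, applying $V^{\dagger}$, and declaring outcome $\Pi_{0}$ iff the all-zero string is observed, since $|\langle 0^{n}|V^{\dagger}|\psi\rangle|^{2}=\Tr(\Pi_{0}\proj{\psi})$.

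The heart of the argument is that the \emph{after} block is designed to invert the \emph{before} block whenever the oracle is unchanged. Hence in the $b=0$ case the final state is exactly the initial $\ket{\psi_{0}}=\tfrac1{\sqrt{2^{n}}}\sum_{x}\ket{x,0}$, whose $X$-marginal is $\tfrac1{\sqrt{2^{n}}}(\sqrt{q}\,\ket{S}+\sqrt{2^{n}-q}\,\ket{\overline S})$; a one-line overlap with $\tfrac1{\sqrt2}(\ket{S}+\ket{\overline S})$ gives $\Pr[\Pi_{0}\mid b=0]=\tfrac12+\tfrac{\sqrt{q(2^{n}-q)}}{2^{n}}$. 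For $b=1$ I would first identify $S$ as \emph{precisely} the set of starting points $x$ whose forward orbit $x,\sigma(x),\dots,\sigma^{q-1}(x)$ passes through $x^{*}$ (indeed $x=\sigma^{-j}(x^{*})$ means $\sigma^{j}(x)=x^{*}$), so $|S|=q$. For $x\notin S$ the orbit never touches $x^{*}$, the before/after calls cancel, and the branch returns to $\ket{x,0}$; for $x\in S$ exactly one call lands on $x^{*}$, using $\RO{O}(x^{*})$ on the way forward and $y^{*}$ on the way back, so that single call survives and the branch ends in $\ket{x,\Delta}$ with the \emph{same} displacement $\Delta:=\RO{O}(x^{*})\oplus y^{*}$ for all $x\in S$. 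The final state is therefore $\tfrac1{\sqrt{2^{n}}}\big(\sum_{x\notin S}\ket{x,0}+\sum_{x\in S}\ket{x,\Delta}\big)$.

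To read off $\Pr[\Pi_{0}\mid b=1]$ I would split on $\Delta$: with probability $2^{-m}$ (over $y^{*}$) we have $\Delta=0$, the state coincides with the $b=0$ state and contributes $\tfrac12+\tfrac{\sqrt{q(2^{n}-q)}}{2^{n}}$; when $\Delta\neq0$ the $Y$-register is orthogonal across the $S$ and $\overline S$ blocks, so tracing it out leaves the incoherent mixture $\tfrac{q}{2^{n}}\proj{S}+\tfrac{2^{n}-q}{2^{n}}\proj{\overline S}$, for which $\Tr(\Pi_{0}\rho_{X})=\tfrac12$ exactly. Averaging yields $\Pr[\Pi_{0}\mid b=1]=\tfrac12+2^{-m}\tfrac{\sqrt{q(2^{n}-q)}}{2^{n}}$, so the distinguishing advantage is $(1-2^{-m})\tfrac{\sqrt{q(2^{n}-q)}}{2^{n}}$; the bound $q<2^{n-3}$ gives $2^{n}-q>\tfrac78 2^{n}$ and $1-2^{-m}\ge\tfrac12$, which delivers $\Omega(\sqrt{q/2^{n}})$ as claimed. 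I expect the main obstacle to be exactly the Step-2/Step-3 bookkeeping: establishing rigorously which branches fail to cancel and that they all acquire the \emph{identical} displacement $\Delta$ — the feature that collapses the $X$-overlap from $\tfrac12+\Theta(\sqrt{q/2^{n}})$ down to exactly $\tfrac12$ — together with correctly handling the $XY$-entanglement when computing the measurement statistics. A minor related care point is reconciling the loop ranges in \cref{fig:attack} so that the before- and after-blocks are genuine inverses of one another.
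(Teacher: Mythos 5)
Your proposal is correct and follows essentially the same route as the paper's proof: the same orbit/branch analysis yielding the final state $\frac{1}{\sqrt{2^n}}\bigl(\sum_{x\notin S}\ket{x,0}+\sum_{x\in S}\ket{x,\Delta}\bigr)$ with $\Delta=\RO{O}(x^*)\oplus\RO{O}'(x^*)$, the same overlap computations giving $\tfrac12+\tfrac{\sqrt{q(2^n-q)}}{2^n}$ in the unchanged case versus exactly $\tfrac12$ when $\Delta\neq 0$ (via the reduced state $\tfrac{q}{2^n}\proj{S}+\tfrac{2^n-q}{2^n}\proj{\overline{S}}$), and the same averaging over the probability-$2^{-m}$ event $\Delta=0$, arriving at the identical $\Omega(\sqrt{q/2^n})$ advantage. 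The only difference is cosmetic: for the measurement $\{\Pi_0,\Pi_1\}$ the paper prepares $\ket{S}$ with $O(q)$ applications of $\sigma$ and invokes a lemma of \cite{EC:AlaMajRus20} to approximate $\ket{\overline{S}}$ for general $S$, whereas you exploit that $S$ is an interval for $\sigma=+1\bmod 2^n$; your remark about the off-by-one in the loop ranges of \cref{fig:attack} is also accurate (the paper's displayed final state and its definition of $S$ reflect the intended matching of the two query blocks).
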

\begin{proof}[sketch]
  We can analyze the state of the distinguisher before its measurement. If the oracle is not reprogrammed, then its state is
  \[
  	\frac{1}{\sqrt{2^n}}\sum_{x}\ket{x}\ket{0},
  \]
  whereas if the reprogramming happens, its state is 
  \[
  	\sum_{x \in S} \ket{x}\ket{\RO{O}(x^*) \oplus \RO{O}'(x^*)} + \sum_{x \in \overline{S}} \ket{x}\ket{0},
  \]
  where $\RO{O}(x^*) \oplus \RO{O}'(x^*)$ is a uniformly random value.
  The advantage follows by calculating the probability that these states project onto $\Pi_0$.

  For the efficiency of our distinguisher, we can use the tools provided in \cite{EC:AlaMajRus20} to efficiently implement $\Pi_0$ and $\Pi_1$,
  which are the only non-trivial operations of the attack.
\end{proof}
Due to space restrictions, we refer to \Cref{ap:attack}, where we give the full proof of \Cref{lem:basic-attack} and discuss its extension to multiple reprogrammed points.

\bibliographystyle{alpha}

\ifnum\tightOnSpace=1
	\bibliography{bib,abbrev4,cryptobib/crypto}
\else
	\bibliography{bib,cryptobib/abbrev0,cryptobib/crypto}
\fi

\ifnum\supplementaryMaterial=1
\newpage
\bigskip\noindent{\Huge\textbf{Supplementary material}}\vspace{1cm}
\fi

\appendix

\section{Adaptive reprogramming: Omitted proofs}\label{sec:AdaptiveReprogramming:Appendix}

We now discuss how to derive our main theorem, \cref{theorem:reprGameBased}, from the simple case proven in \cref{thm:Reprogramming:SingleInstance}. For easier reference we repeat the theorem statement.

\thmone*

First, we extend \cref{thm:Reprogramming:SingleInstance} to multiple reprogramming instances with a hybrid argument. Afterwards, we extend the result to cover side information. 
To prove the first step, we introduce helper games $G_b$ in \cref{fig:Def:Games:Reprogramming:MultipleInstance},
in which the adversary has access to oracle $\Reprogram'$.
(These are already almost the same as the $\gameRepro$ games used in \cref{theorem:reprGameBased}. The only difference is that they do not sample and return the additional side information $\xSecond$.) 
\begin{lemma}\label{cor:Reprogramming:MultipleInstance}
	Let $\Ad{D}$ be any distinguisher, issuing $\numberReprogrammingInstances$ many reprogramming instructions.
	Let $\hat q^{(\indexReprogrammingInstances)}$ denote the total number of $\Ad{D}$'s queries to $\RO{O}$
	until the $\indexReprogrammingInstances$-th query to $\Reprogram'$.
	Furthermore, let $p^{(\indexReprogrammingInstances)}$ denote the $\indexReprogrammingInstances$-th distribution on $\XFirst$ on which $\Reprogram'$ is queried,
	and let
	\begin{equation*}
	p^{(\indexReprogrammingInstances)}_{\max} := \Exp \left[\max_{x} p^{(\indexReprogrammingInstances)}(x)\right] \enspace ,
	\end{equation*}
	where the expectation is taken over $\Ad{D}$'s behaviour until its $\indexReprogrammingInstances$-th query to $\Reprogram'$.
	
	The success probability for any distinguisher $\Ad{D}$ is bounded by
	\begin{align*}
	|\Pr[G_{0}^{\Ad{D}} \Rightarrow 1] - \Pr[G_{1}^{\Ad{D}} \Rightarrow 1]|
	\leq &	\sum_{\indexReprogrammingInstances=1}^\numberReprogrammingInstances
		\left(
		\sqrt{ \hat q^{(\indexReprogrammingInstances)} p^{(\indexReprogrammingInstances)}_{\max} }
		+ \frac{1}{2} \hat q^{(\indexReprogrammingInstances)} p^{(\indexReprogrammingInstances)}_{\max}
		\right) 
	\enspace .
	\end{align*}

	\begin{figure}[H] \begin{center} \fbox{\small
				
		\nicoresetlinenr	
		
		\begin{minipage}[t]{3.8cm}
			
			\underline{\textbf{GAMES} $G_{b}$}
			\begin{nicodemus}
				
				\item $\RO{O}_0 \uni Y^X$
				
				\item $\RO{O}_1 := \RO{O}_0$ 
				
				\item $b' \leftarrow \Ad{D}^{\qRO{O_b}, \Reprogram'}$
				
				\item \pcreturn $b'$
				
			\end{nicodemus}
			
		\end{minipage}
		
		\quad
				
		\begin{minipage}[t]{2.3cm}
			
			\underline{$\Reprogram'(p)$}
			\begin{nicodemus}
				
				\item $x \leftarrow p$
				
				\item $y \uni Y$
				
				\item $\RO{O}_1 := \RO{O}_1^{x \mapsto y}$%
				
				\item \pcreturn $x$
				
			\end{nicodemus}
			
		\end{minipage}
				
	}\end{center}
		\caption{Games $G_b$ of \cref{cor:Reprogramming:MultipleInstance}.}
		\label{fig:Def:Games:Reprogramming:MultipleInstance}
	\end{figure}

\end{lemma}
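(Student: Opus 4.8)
The plan is to obtain \cref{cor:Reprogramming:MultipleInstance} from the single-instance statement \cref{thm:Reprogramming:SingleInstance} by a hybrid argument over the $\numberReprogrammingInstances$ reprogramming instructions. First I would introduce hybrid games $H_0,\dots,H_{\numberReprogrammingInstances}$ that all behave like the games in \cref{fig:Def:Games:Reprogramming:MultipleInstance}, except that in $H_j$ only the first $j$ queries to $\Reprogram'$ actually reprogram the oracle, while the remaining $\numberReprogrammingInstances-j$ queries still sample $x\leftarrow p$ and return $x$ to $\Ad{D}$ but leave the oracle untouched. By construction $H_0$ coincides with $G_0$ (the oracle is never reprogrammed) and $H_{\numberReprogrammingInstances}$ coincides with $G_1$, so the triangle inequality gives
\begin{equation*}
|\Pr[G_0^{\Ad{D}}\Rightarrow 1]-\Pr[G_1^{\Ad{D}}\Rightarrow 1]|
\le \sum_{\indexReprogrammingInstances=1}^{\numberReprogrammingInstances}
|\Pr[H_{\indexReprogrammingInstances-1}^{\Ad{D}}\Rightarrow 1]-\Pr[H_{\indexReprogrammingInstances}^{\Ad{D}}\Rightarrow 1]|.
\end{equation*}

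Next, for each $\indexReprogrammingInstances$ I would bound the $\indexReprogrammingInstances$-th summand by building a two-stage single-instance distinguisher $\Ad{D}^{(\indexReprogrammingInstances)}=(\Ad{D}_0^{(\indexReprogrammingInstances)},\Ad{D}_1^{(\indexReprogrammingInstances)})$ for \cref{thm:Reprogramming:SingleInstance}. The first stage $\Ad{D}_0^{(\indexReprogrammingInstances)}$ runs $\Ad{D}$ until just before its $\indexReprogrammingInstances$-th call to $\Reprogram'$, forwarding $\Ad{D}$'s oracle queries to its own single-instance oracle $\RO{O}$ and itself simulating the first $\indexReprogrammingInstances-1$ reprogramming calls (sampling and returning the positions and implementing the reprogrammings); when $\Ad{D}$ issues its $\indexReprogrammingInstances$-th reprogramming query with distribution $p^{(\indexReprogrammingInstances)}$, it halts and outputs $p^{(\indexReprogrammingInstances)}$ together with its internal state. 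The second stage $\Ad{D}_1^{(\indexReprogrammingInstances)}$ receives $x^*\leftarrow p^{(\indexReprogrammingInstances)}$, hands it back to $\Ad{D}$ as the answer to the $\indexReprogrammingInstances$-th reprogramming call, and then continues the simulation with quantum access to $\RO{O}_b$, answering the remaining calls by merely sampling and returning positions, and finally outputs $\Ad{D}$'s guess. If $b=0$ the oracle seen by $\Ad{D}$ is reprogrammed only at the first $\indexReprogrammingInstances-1$ positions, i.e.\ $\Ad{D}^{(\indexReprogrammingInstances)}$ simulates $H_{\indexReprogrammingInstances-1}$; if $b=1$ the $\indexReprogrammingInstances$-th position is reprogrammed too, i.e.\ it simulates $H_{\indexReprogrammingInstances}$. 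Since $\Ad{D}_0^{(\indexReprogrammingInstances)}$ makes exactly $\hat q^{(\indexReprogrammingInstances)}$ queries to $\RO{O}$ and outputs $p^{(\indexReprogrammingInstances)}$, rewriting the success-probability bound of \cref{thm:Reprogramming:SingleInstance} as a distinguishing advantage ($2\Pr[b=b']-1$) yields
\begin{equation*}
|\Pr[H_{\indexReprogrammingInstances-1}^{\Ad{D}}\Rightarrow 1]-\Pr[H_{\indexReprogrammingInstances}^{\Ad{D}}\Rightarrow 1]|
\le \sqrt{\hat q^{(\indexReprogrammingInstances)} p^{(\indexReprogrammingInstances)}_{\max}}+\tfrac{1}{2}\hat q^{(\indexReprogrammingInstances)} p^{(\indexReprogrammingInstances)}_{\max},
\end{equation*}
where $p^{(\indexReprogrammingInstances)}_{\max}=\mathbb{E}[\max_x p^{(\indexReprogrammingInstances)}(x)]$ is exactly the quantity $p^{\Ad{D}^{(\indexReprogrammingInstances)}}_{\max}$ attached to this reduction. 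Summing over $\indexReprogrammingInstances$ reproduces the claimed inequality.

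The step I expect to require the most care is verifying that $\Ad{D}^{(\indexReprogrammingInstances)}$ \emph{perfectly} simulates $H_{\indexReprogrammingInstances-1}$ and $H_{\indexReprogrammingInstances}$, and in particular that the query count fed into \cref{thm:Reprogramming:SingleInstance} is really $\hat q^{(\indexReprogrammingInstances)}$ even though $\indexReprogrammingInstances-1$ reprogrammings intervene. The delicate case is when the $\indexReprogrammingInstances$-th reprogramming position coincides with an earlier one, where a naive classical patch and the challenger's fresh reprogramming compete; the clean justification is to argue inside the superposition-oracle formalism used to prove \cref{thm:Reprogramming:SingleInstance}, in which a reprogramming merely discards and re-initialises the register $F_x$ to $\ket{\phi_0}$. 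By \cref{lem:qqueries} the joint state after $\hat q^{(\indexReprogrammingInstances)}$ queries is supported on sets of touched registers of size at most $\hat q^{(\indexReprogrammingInstances)}$, and since discarding-and-reinitialising a register can only shrink these sets, this remains true after any number of intervening reprogrammings. Hence the quantity $\varepsilon_x$ that controls the per-step trace distance still satisfies $\mathbb{E}_{x\leftarrow p^{(\indexReprogrammingInstances)}}[\varepsilon_x]\le \hat q^{(\indexReprogrammingInstances)} p^{(\indexReprogrammingInstances)}_{\max}$, so the single-instance analysis applies verbatim with $q=\hat q^{(\indexReprogrammingInstances)}$ and the per-step bound above is justified. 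Finally, because all bounds here are information-theoretic, the reduction may sample, store and apply the earlier reprogramming values at no cost, which is what licenses the self-simulation of the first $\indexReprogrammingInstances-1$ instances.
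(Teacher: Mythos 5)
Your overall skeleton --- a hybrid over the $\numberReprogrammingInstances$ reprogramming instructions, each step reduced to the two-stage single-instance statement \cref{thm:Reprogramming:SingleInstance} with $q=\hat q^{(\indexReprogrammingInstances)}$ and $p^{(\indexReprogrammingInstances)}_{\max}$ --- is exactly the paper's, and the bookkeeping (converting $\Pr[b=b']$ into an advantage and summing the per-step bounds) is correct. The gap is in the \emph{direction} of your hybrids. You let $H_j$ apply the \emph{first} $j$ reprogrammings, so in the step $H_{\indexReprogrammingInstances-1}$ versus $H_{\indexReprogrammingInstances}$ your reduction must itself simulate reprogrammings $1,\dots,\indexReprogrammingInstances-1$, which chronologically \emph{precede} the challenger's potential reprogramming at $x^*$. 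A lookup-table patch sitting between $\Ad{D}$ and the challenger's oracle then masks the challenger's action whenever $x^*$ collides with an earlier patched position: to simulate $H_{\indexReprogrammingInstances}$ (case $b=1$) you must delete the stale entry so the fresh value shows through, but to simulate $H_{\indexReprogrammingInstances-1}$ (case $b=0$) you must keep it, and the reduction cannot branch on $b$. So the simulation is not perfect, and the per-step bound does not follow from a black-box invocation of \cref{thm:Reprogramming:SingleInstance}. You flag exactly this case as delicate, but your proposed repair --- implementing the earlier reprogrammings as discard-and-reinitialise operations on the superposition-oracle registers --- is not something the first stage of \cref{thm:Reprogramming:SingleInstance} is permitted to do (it only makes queries), so you would in fact have to re-prove a strengthened single-instance theorem, including an extension of \cref{lem:qqueries} to adversaries that interleave queries with register resets; since a reset produces a mixed state, that lemma does not apply verbatim as you assert.

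The paper sidesteps all of this by running the hybrid in the opposite direction: its $H_{\indexReprogrammingInstances}$ leaves the first $\indexReprogrammingInstances$ positions \emph{un}reprogrammed and reprograms from position $\indexReprogrammingInstances+1$ on, so that $H_0=G_1$ and $H_\numberReprogrammingInstances=G_0$. In the step $H_{\indexReprogrammingInstances-1}$ versus $H_{\indexReprogrammingInstances}$ the first $\indexReprogrammingInstances-1$ calls then require no simulation at all (they reprogram in neither hybrid), the challenger's reprogramming at $x^*$ is the chronologically first active one, and the reduction's lookup-table patches for calls $\indexReprogrammingInstances+1,\dots,\numberReprogrammingInstances$ correctly \emph{override} it in both the $b=0$ and $b=1$ cases --- which is exactly the semantics of sequential reprogramming in the games. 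With that ordering the simulation is exact and \cref{thm:Reprogramming:SingleInstance} applies as stated; if you reverse your hybrid direction accordingly, the rest of your argument goes through unchanged.
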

\begin{proof}
	We define hybrid settings $H_\indexReprogrammingInstances$ for $\indexReprogrammingInstances = 0,...,\numberReprogrammingInstances$,
	in which $\Ad{D}$ has access to oracle $\RO{O}$ which is not reprogrammed at the first $\indexReprogrammingInstances$ many positions,
	but is reprogrammed from the $(\indexReprogrammingInstances+1)$-th position on.
	Hence, $H_0$ is the distinguishing game $G_1$, and $H_\numberReprogrammingInstances$ is $G_0$.
	Any distinguisher $\Ad{D}$ succeeds with advantage
	\begin{align*}
		|\Pr[G_{0}^{\Ad{D}} \Rightarrow 1] - \Pr[G_{1}^{\Ad{D}} \Rightarrow 1]|
		& = \Pr [H_{0}^{\Ad{D}} \Rightarrow 1 ] - \Pr [ H_\numberReprogrammingInstances^{\Ad{D}} \Rightarrow 1  ] |
		\\
		& =  \left|
				\sum_{\indexReprogrammingInstances=1}^\numberReprogrammingInstances
					\left( \Pr [ H_{\indexReprogrammingInstances-1}^{\Ad{D}} \Rightarrow 1 ]
					- \Pr [ H_{\indexReprogrammingInstances}^{\Ad{D}} \Rightarrow 1 ]\right)
			\right|
		\\
		& \leq  \sum_{\indexReprogrammingInstances=1}^\numberReprogrammingInstances
			\left|\Pr [ H_{\indexReprogrammingInstances-1}^{\Ad{D}} \Rightarrow 1 ]
			- \Pr [ H_{\indexReprogrammingInstances}^{\Ad{D}} \Rightarrow 1 ] \right|
	\enspace ,
	\label{eq:AdvantageDecomposition}
	\end{align*}
	where we have used the triangle inequality in the last line.
	
	To upper bound $|\Pr [ H_{\indexReprogrammingInstances-1}^{\Ad{D}} \Rightarrow 1 ] - \Pr [ H_{\indexReprogrammingInstances}^{\Ad{D}} \Rightarrow 1 ] |$,
	we will now define distinguishers
	$\hat{\Ad{D}}_\indexReprogrammingInstances = (\hat{\Ad{D}}_{\indexReprogrammingInstances, 0}, \hat{\Ad{D}}_{\indexReprogrammingInstances, 1})$
	that are run in the single-instance distinguishing games $G'_b$ of \cref{thm:Reprogramming:SingleInstance}:
	Let $\RO{O}'$ denote the oracle that is provided by $G'_b$.
 	Until right before the $\indexReprogrammingInstances$-th query to $\Reprogram'$,
 	the first stage $\hat{\Ad{D}}_{\indexReprogrammingInstances, 0}$
 	uses $\RO{O}'$ to simulate the hybrid setting $H_{\indexReprogrammingInstances-1}$ to $\Ad{D}$.
 	(Until this query, $H_{\indexReprogrammingInstances-1}$ and $H_{\indexReprogrammingInstances}$ do not differ.)
	$\hat{\Ad{D}}_{\indexReprogrammingInstances, 0}$ then uses as its output to game $G'_b$ the $\indexReprogrammingInstances$-th distribution on which $\Reprogram'$ was queried.
	The second stage $\hat{\Ad{D}}_{\indexReprogrammingInstances, 1}$ uses its input $x^*$ to simulate the  $r$-th response of $\Reprogram'$.
	As from (and including) the $(\indexReprogrammingInstances+1)$-th query,
	$\hat{\Ad{D}}_{\indexReprogrammingInstances, 1}$ can simulate the reprogramming by using fresh uniformly random values
	to overwrite $\RO{O}'$.
	To be more precise, during each call to $\Reprogram'$ on some distribution $p$,
	$\hat{\Ad{D}}_{\indexReprogrammingInstances, 1}$ samples $x \leftarrow p$ and $y \uni Y$,
	and adds $(x,y)$ to a list $\List{\RO{O}}$.
	(If $x$ has been sampled before, $\hat{\Ad{D}}_{\indexReprogrammingInstances, 1}$ replaces the former oracle value in the list.)
	$\hat{\Ad{D}}_{\indexReprogrammingInstances, 1}$ defines $\RO{O}$ by
	\begin{align*}
		\RO{O}(x) := \begin{cases}
			y			& \exists y \text{ s.th. } (x, y) \in \List{\RO{O}} \\
			\RO{O'}(x)	& \text{o.w.}
		\end{cases}%
	\end{align*}%
		
	In the case that $\hat{\Ad{D}_\indexReprogrammingInstances}$ is run in game $G_0'$, 
	the reprogramming starts with the $(\indexReprogrammingInstances+1)$-th query
	and $\hat{\Ad{D}_\indexReprogrammingInstances}$ perfectly simulates game $H_{\indexReprogrammingInstances}$.
	In the case that $\hat{\Ad{D}_\indexReprogrammingInstances}$ is run in game $G_1'$, 
	the reprogramming already starts with the $\indexReprogrammingInstances$-th query
	and $\hat{\Ad{D}_\indexReprogrammingInstances}$ perfectly simulates game $H_{\indexReprogrammingInstances-1}$.
	\begin{align*}
	|\Pr [ H_{\indexReprogrammingInstances-1}^{\Ad{D}} \Rightarrow 1 ] - \Pr [ H_{\indexReprogrammingInstances}^{\Ad{D}} \Rightarrow 1 ] |
	=
	|\Pr [ {G_{1}'}^{\hat{\Ad{D}}_\indexReprogrammingInstances} \Rightarrow 1 ] - \Pr [{G_{0}'}^{\hat{\Ad{D}}_\indexReprogrammingInstances} \Rightarrow 1 ] |
	\enspace .
	\end{align*}
	Since the first stage $\hat{\Ad{D}}_{\indexReprogrammingInstances, 0}$ issues $\hat q_{\indexReprogrammingInstances}$ many queries to $\RO{O'}$,
	we can apply \cref{thm:Reprogramming:SingleInstance} to obtain
	\begin{align*}
	|\Pr [ {G_{1}'}^{\hat{\Ad{D}}_\indexReprogrammingInstances} \Rightarrow 1 ] - \Pr [{G_{0}'}^{\hat{\Ad{D}}_\indexReprogrammingInstances} \Rightarrow 1 ] |
	&\leq \sqrt{\hat q_\indexReprogrammingInstances \cdot p^{(\indexReprogrammingInstances)}_{\max}}
		+ \frac 1 2 \hat q_\indexReprogrammingInstances \cdot p^{(\indexReprogrammingInstances)}_{\max}
	\enspace .
	\end{align*}
	\qed
\end{proof}

Second, we prove that \cref{cor:Reprogramming:MultipleInstance} implies our main \cref{theorem:reprGameBased}. For this we have to show that additional side-information can be simulated by a reduction.

\begin{proof}
	
	Consider a distinguisher $\Ad{D}$ run in games $\gameRepro_b$.
	To upper bound $\Ad{D}$'s advantage,
	we now define 
	a distinguisher $\Ad{\hat{D}}$ against the helper games $G_b$ from \cref{fig:Def:Games:Reprogramming:MultipleInstance}.
	
	When queried on a distribution $p$ on $\XFirst \times \XSecond$,
	$\Ad{\hat{D}}$ will simulate $\Reprogram$ as follows:
	$\Ad{\hat{D}}$ will forward the marginal distribution $p_{\XFirst}$ of $\xFirst$
	to its own oracle $\Reprogram'$,
	and obtain some $\xFirst$ that was sampled accordingly.
	It will then sample $\xSecond$ according to $p_{\XSecond|\xFirst}$,
	where $p_{\XSecond|\xFirst}$ is the probability distribution on $\XSecond$,
	conditioned on $\xFirst$,
	i.e.,
	\[p_{\XSecond|\xFirst}(\xSecond)
		:= \frac{\Pr[\xFirst, \xSecond]}
				{\Pr[\xFirst]} \enspace .\]
	where the probabilities are taken over $(\xFirst, \xSecond) \leftarrow p$,
	and the probability in the denominator is taken over $\xFirst \leftarrow p_{\XFirst}$.
	Note that $\Ad{\hat{D}}$ can be unbounded, as the statement of \cref{cor:Reprogramming:MultipleInstance} is information-theoretical. This is important because while $p$ is efficiently sampleable, $p_{\XSecond|\xFirst}$ might not be.
	Since the distribution of $(\xFirst, \xSecond)$ is identical to $p$,
	and since the reprogramming only happens on $\xFirst$, 
	$\Ad{\hat{D}}$ perfectly simulates game $\gameRepro_b$ to $\Ad{D}$ if run in game $G_b$ and
	\begin{align*}
	| \Pr[\gameRepro_{1}^{\Ad{D}} \Rightarrow 1] - \Pr[\gameRepro_{0}^{\Ad{D}} \Rightarrow 1] |
	= 
	| \Pr[G_{1}^{\Ad{\hat{D}}} \Rightarrow 1] - \Pr[G_{0}^{\Ad{\hat{D}}} \Rightarrow 1] |  \enspace . \nonumber
	\end{align*}
	
	Since $\Ad{\hat{D}}$ can answer any random oracle query issued by $\Ad{D}$
	by simply forwarding it,
	$\Ad{\hat{D}}$ issues exactly as many queries to $\RO{O}$ (until the $\indexReprogrammingInstances$-th reprogramming instruction) as $\Ad{D}$.
	We can now apply \cref{cor:Reprogramming:MultipleInstance} to obtain
	
	\begin{align*}
		| \Pr[G_{1}^{\Ad{\hat{D}}} \Rightarrow 1] - \Pr[G_{0}^{\Ad{\hat{D}}} \Rightarrow 1] |
	&\leq \sum_{r = 1}^{R}  \left( \sqrt{ \hat q_{r} p_{\max}^{(r)} }  + \frac 1 2 \hat q_{r} p_{\max}^{(r)} 
	\right) 
	\enspace , \nonumber
	\end{align*}
	
	where $p_{\max}^{(r)} = \mathbb E \max_{x} p^{(r)}_{\XFirst}(x)$.

\end{proof}

\section{Definitions: Hash functions, and identification and signature schemes}	\label{sec:SignaturesInQROM:Defs:Appendix}

\subsection{Security of Hash Functions}\label{app:def:hash}
One of our proofs makes use of a multi-target version of extended target-collision resistance (\mmetcr). Extended target-collision resistance is a variant of target collision resistance, where the adversary also succeeds if the collision is under a different key. As we will consider this notion for a random oracle, we adapt the notion slightly for this setting. 

The success probability of an adversary \A against 
\mmetcr security of a quantum-accessible random oracle $\RO{H}$ is defined as follows. The definition makes use of a (classical) challenge oracle $\bbox(\cdot)$ which on input 
of the $j$-th message $x_j$ outputs a uniformly random function key $z_j$.
\begin{align*}
\succmmetcr{\RO{H}}{\A,p} & = \pr\left[\right. (x',z',i) \exec 
\A^{\bbox,\qRO{H}}( ): \nonumber\\
& \left. x'\neq x_i \wedge \RO{H}(z_i\|x_i) = 
\RO{H}(z'\|x') \wedge 0< i\leq p\right].\,
\end{align*}

A later proof makes use of another version of target collision resistance called multi-target extended target-collision 
resistant with nonce (\nmmetcr). The definition of \nmmetcr again makes use of a (classical) challenge 
oracle $\bbox(\cdot)$ that on input of the $j$-th message $M_j$ outputs a 
uniformly random function key $z_j$. Before the experiment starts, $\A$ 
can select an arbitary $n$-bit string \id.
\begin{align*}
\succnmmetcr{\RO{H}}{\A,p} & = \pr\left[\right. \id\exec\A(), (x',z',i) 
\exec 
\A^{\bbox(\cdot),\qRO{H}}(\id): \nonumber\\
& \left. x'\neq x_i \wedge \RO{H}(z_i,\id,i,x_i) = 
\RO{H}(z',\id,i,x') \wedge 0< i \leq p\right]\,.
\end{align*}

The QROM bound for \nmmetcr makes use of the following two games omitted in the main body of the paper. The games were defined in the proof in~\cite{XMSSembed}, we rephrased them to match our notation:

\heading{Game $G_{a,i}$.} After \A selected \id, it gets access to $\RO{H}$. In phase 1, after making at most $q_1$ queries to $\RO{H}$, \A outputs a message $x\in X$. Then
	a random $z \uni Z$ is sampled and $(z,\RO{H}(z\|\id\|i\|x))$ is
	handed to $\A$. $\A$ continues to the second phase and makes at most
	$q_2$ queries. $\A$ outputs $b\in \{0,1\}$ at the end.
	
\heading{Game $G_{b,i}$.} After \A selected \id, it gets access to $\RO{H}$.  After making at most $q_1$
	queries to $\RO{H}$, \A outputs a message $x\in X$. Then a random
	$z \uni Z$ is sampled as well as a random range element
	$y\uni Y$. Program $\RO{H} := \RO{H}^{(z\|\id\|i\|x)\mapsto y}$. $\A$ receives $(z, y=\RO{H}(z\|\id\|i\|x))$ and proceeds to the second
	phase. After making at most $q_2$ queries, $\A$ outputs
	$b\in \{0,1\}$ at the end.

In a reduction to apply our new bound, \id and $i$ would be sent as part of the message to \Reprogram. Note that the proof in~\cite{XMSSembed} runs a hybrid argument over \qsign programmings and in every programming step a different value for $i$ is used, thereby making those distinct.

\subsection{Identification schemes}

We now define syntax and security of identification schemes.
Let $\param$ be a tuple of common system parameters shared among all participants.

\begin{definition}[Identification schemes]
	An identification scheme $\IdScheme$ is defined as a collection of algorithms
	$\IdScheme = (\IG, \Commit, \Respond, \VerifyId)$.
	\begin{itemize}
		\item The key generation algorithm $\IG$ takes system parameters $\param$ as input and returns
		the public and secret keys $(\pk, \sk)$.
		We assume that $\pk$ defines the challenge space $\ChallengeSpace$, the commitment space $\CommSpace$ and the response space $\ResponseSpace$.
		
		\item $\Commit$ takes as input the secret key $\sk$ and returns a commitment $\commitment \in \CommSpace$ and a state $\state$
		
		\item $\Respond$ takes as input the secret key $\sk$, a commitment  $\commitment $, a challenge $\challenge$, and a state $\state$, and returns a response $\response \in \ResponseSpace \cup \lbrace \bot \rbrace$,
		where $\bot  \notin \ResponseSpace$ is a special symbol indicating failure.
		
		\item The deterministic verification algorithm $\VerifyId(\pk, \commitment, \challenge, \response)$ returns 1 (accept) or 0 (reject).
	\end{itemize}
\end{definition}

Note that during one of our application examples (i.e., in \cref{sec:HedgedFiatShamir}),
we define the response algorithm such that it does not explicitly take a commitment $\commitment$ as input.
If needed, it can be assumed that $\state$ contains a copy of $\commitment$.

A \textit{transcript} is a triplet $\transcript = (\commitment, \challenge, \response) \in  \CommSpace \times \ChallengeSpace \times \ResponseSpace$.
It is called \textit{valid} (with respect to public key $\pk$)
if $\VerifyId(\pk, \commitment, \challenge, \response) = 1$.
Below, we define a transcript oracle $\getTrans$ that returns the
transcript $\transcript = (\commitment, \challenge, \response)$
of a real interaction between prover and verifier.
We furthermore define another transcript oracle $\getTransChallenge$ that returns an honest transcript for a fixed challenge $\challenge$.

\begin{figure}[h] \begin{center} \fbox{ \small
			
			\nicoresetlinenr
			
			\begin{minipage}[t]{5cm}
				
				\underline{\textbf{Algorithm} $\getTrans(\sk)$}
				\begin{nicodemus}
					
					\item $(\commitment, \state) \leftarrow \Commit(\sk)$
					
					\item $\challenge \uni \ChallengeSpace$
					
					\item $\response \leftarrow \Respond(\sk, \commitment, \challenge, \state)$
					
					\item \pcreturn $(\commitment, \challenge, \response)$
					
				\end{nicodemus}
				
			\end{minipage}
			
			\quad 
			
			\begin{minipage}[t]{5cm}
				
				\underline{\textbf{Algorithm} $\getTransChallenge(\sk, \challenge)$}
				\begin{nicodemus}
					
					\item $(\commitment, \state) \leftarrow \Commit(\sk)$
					
					\item $\response \leftarrow \Respond(\sk, \commitment, \challenge, \state)$
					
					\item \pcreturn $(\commitment, \challenge, \response)$
					
				\end{nicodemus}		
				
			\end{minipage}
		}
	\end{center}
	\caption{%
		Generating honest transcripts with oracles $\getTrans$ and $\getTransChallenge$.
		\label{fig:Def:getTrans}
	}
\end{figure}

{%
\heading{Commitment entropy.}
We define
\[\maxEntropyCommit:= \mathbb{E} \max_{\commitment}
\Pr [\commitment] \enspace ,\]
where the expectation is taken over $(\pk, \sk) \leftarrow \IG$,
and the probability is taken over $(\commitment, \state) \leftarrow \Commit(\sk)$. }

{%
In one of our applications (\cref{sec:HedgedFiatShamir}),
the $\Respond$ algorithm is required to reject whenever its challenge input $\challenge$ is malformed.
As observed in \cite{EC:AOTZ20}, this additional requirement is not too severe,
since most practical implementations perform a sanity check on $\challenge$.
We will call this property validity awareness.
\begin{definition}[Validity awareness]\label{Def:ValidityAwareness}
	We say that $\IdScheme$ is \textit{validity aware}
	if $\Respond(\sk, \commitment, \challenge, \state) = \bot$
	for all challenges $\challenge \notin \ChallengeSpace$.
\end{definition} }

{%
}

\heading{Statistical HVZK.}
We \ifeprint \else now \fi recall the definition of statistical honest-verifier zero-knowledge ($\HVZK$),
and the definition of \textit{special} statistical $\HVZK$ from \cite{EC:AOTZ20}.
\begin{definition}[(Special) statistical $\HVZK$]	\label{Def:HVZK:Statistical}
	Assume that $\IdScheme$ comes with an $\HVZK$ simulator $\Sim$.
	We say that $\IdScheme$ is \textit{$\boundStatisticalHVZK$-statistical $\HVZK$} if for any key pair $(\pk, \sk) \in \supp(\IG)$,
	the distribution of $(\commitment, \challenge, \response) \leftarrow \Sim(\pk)$
	has statistical distance at most $\boundStatisticalHVZK$ from an honest transcript
	$(\commitment, \challenge, \response) \leftarrow \getTrans(\sk)$.
	
	To define \textit{special} statistical $\HVZK$, 
	assume that
	$\IdScheme$ comes with a special $\HVZK$ simulator $\Sim$.
	We say that $\IdScheme$ is \textit{$\boundSpecialStatisticalHVZK$-statistical $\specialHVZK$} if 
	for any key pair $(\pk, \sk) \in \supp(\IG)$ and any challenge $\challenge \in \ChallengeSpace$,
	the distribution of $(\commitment, \challenge, \response) \leftarrow \getTransChallenge(\sk, \challenge)$ 
	and the distribution of $(\commitment, \challenge, \response) \leftarrow \Sim(\pk, \challenge)$
	have statistical distance at most $\boundSpecialStatisticalHVZK$.
\end{definition} 
{%
Following \cite{EC:AOTZ20}, we now define subset-revealing identification schemes.
Intuitively, an identification scheme is subset-revealing if
$\Respond$ responds to a challenge by revealing parts of the state that was computed by $\Commit$, and does not depend on $\sk$.
\begin{definition}[Subset-revealing identification protocol]\label{Def:SubsetRevealing}
	
	Let $\IdScheme = (\IG, \Commit, \Respond, \VerifyId)$ be an identification protocol.
	We say that $\IdScheme$ is \textit{subset-revealing} if for any key pair in the support of $\IG$ it holds that

	\begin{itemize}
		
		\item the challenge space $\ChallengeSpace$ is polynomial in the security parameter,
		
		\item for any tuple $(\commitment, \state) \in \supp(\Commit(\sk))$, the state $\state$ consists of a collection $(\state_1, \cdots , \state_N)$ such that $N$ is polynomial in the security parameter,
		
		\item and furthermore there exists an algorithm $\mathsf{DeriveSet}$ such that
			
			\begin{itemize}
				
				\item $\mathsf{DeriveSet}$ takes as input a challenge $\challenge$
				and returns a subset $I \subset \lbrace 1, \cdots , N \rbrace$,
				
				\item for any tuple $(\commitment, \state)\in \supp(\Commit(\sk)$
				and any challenge $\challenge \in \ChallengeSpace$
				\ifeprint we have \else it holds that\fi $\Respond(\sk, \challenge, \state) = (\state_i)_{i \in I}$,
				where $I = \mathsf{DeriveSet}(\challenge)$.
				
			\end{itemize}
	\end{itemize}
	
\end{definition} }

\subsection{Signature schemes}

We now define syntax and security of digital signature schemes.
Let $\param$ be a tuple of common system parameters shared among all participants.

\begin{definition}[Signature scheme]
	A digital signature scheme $\SignatureScheme$ is defined as a triple of algorithms
	$\SignatureScheme = (\KG, \Sign, \VerifySig)$.
	\begin{itemize}
		\item The key generation algorithm $\KG(\param)$
			returns a key pair $(\pk, \sk)$.
			We assume that $\pk$ defines the message space $\MSpace$.
			
		\item The signing algorithm $\Sign(\sk, m)$ returns a signature $\sigma$.
		
		\item The deterministic verification algorithm $\VerifySig(\pk, m, \sigma)$
			returns 1 (accept) or 0 (reject).
	\end{itemize}
\end{definition}

\heading{$\UFCMA$, $\UFCMAZero$ and \eufrma security.}
We define \underline{U}n\underline{F}orgeability under \underline{C}hosen \underline{M}essage \underline{A}ttacks ($\UFCMA$),
\underline{U}n\underline{F}orgeability under \underline{C}hosen \underline{M}essage \underline{A}ttacks with 0 queries to the signing oracle
($\UFCMAZero$, also known as $\UFKOA$ or $\UF\text{-}\mathsf{NMA}$)
and \underline{U}n\underline{F}orgeability under \underline{R}andom \underline{M}essage \underline{A}ttacks (\eufrma)
success functions of a quantum adversary $\Ad{A}$ against $\SignatureScheme$
as
\[\succfun^{\UF - X}_{\SignatureScheme}(\Ad{A})
:= \Pr[\UF\text{-}X_{\SignatureScheme}^\Ad{A} \Rightarrow 1 ] \enspace , \]
where the games for $X \in \lbrace \CMA$, $\CMAZero, \mathsf{RMA} \rbrace$ are given in \cref{fig:Def:UFCMA}.

\begin{figure}[h] \begin{center} \makebox[\textwidth][c]{\fbox{ \small
			
	\begin{minipage}[t]{4.3cm}
		
		\nicoresetlinenr
		\underline{\textbf{Game} \boxedFull{$\UFCMA$} \dashboxed{$\UFCMAZero$}}
		\begin{nicodemus}
			
			\item $(\pk, \sk) \leftarrow \IG(\param)$
			
			\item \boxedFull{$(m^*, \sigma^*) \leftarrow \Ad{A}^{\oracleSIGN} (\pk)$}
			
			\item \dashboxed{ $(m^*, \sigma^*) \leftarrow \Ad{A}(\pk)$}
			
			\item \pcif $m^* \in \ListOfMessages$ \pcreturn 0
			
			\item \pcreturn $\VerifySig(\pk, m^*, \sigma^*)$
			
		\end{nicodemus}		
		
	\end{minipage}
	
	\;
			
	\begin{minipage}[t]{3cm}
		
		\underline{$\oracleSIGN(m)$}
		
		\begin{nicodemus}
			
			\item $\ListOfMessages := \ListOfMessages \cup \lbrace m \rbrace$
			
			\item $\sigma \leftarrow \Sign(\sk,m)$
			
			\item \pcreturn $\sigma$
			
		\end{nicodemus}
		
	\end{minipage}

	\;
	
	\begin{minipage}[t]{5.5cm}
		
		\nicoresetlinenr
		
		\underline{\textbf{GAME} \eufrma}
		\begin{nicodemus}
			
			\item $(\pk, \sk) \leftarrow \KG()$
			
			\item $\{m_1 , \cdots, m_N\} \uni \MSpace^{N}$
			
			\item \pcfor $i \in \lbrace 1, \cdots, N \rbrace$
			
				\item \quad $\sigma_i \exec \Sign(\sk, m_i)$
			
			\item $(m^*, \sigma^*) \exec \Ad{A}(\pk, \{(m_i,\sigma_i)\}_{1 \leq i \leq N})$
			
			\item \pcif  $m^* \in \{m_i\}_{1 \leq i \leq N}$
				
				\item \quad \pcreturn 0
				
			\item \pcreturn $\VerifySig(\pk, m^*, \sigma^*)$
			
		\end{nicodemus}		
		
	\end{minipage}
			
	}}
	\end{center}
	\caption{
		Games $\UFCMA$, $\UFCMAZero$ (left) and $\eufrma$ (right) for $\SignatureScheme$.
		Game $\eufrma$ is defined relative to $N$, the number of message-signature pairs the adversary is given.
	}
	\label{fig:Def:UFCMA}
\end{figure}

{\heading{The hash-and-sign construction.}
	To a signature scheme $\sigS = (\KG, \Sign, \VerifySig)$ with message space $\MSpaceBeforeHTS$,
	and a hash function $\RO{H}: Z \times \MSpaceAfterHTS \rightarrow \MSpaceBeforeHTS$ (later modeled as a RO),
	we associate
	\[ \sigSp := \hts(\sigS,\RO{H}) := (\KG, \Sign', \VerifySig') \enspace \]
	with message space $\MSpaceAfterHTS$,
	where algorithms $\Sign'$ and $\VerifySig'$ of $\sigSp$ are defined in \cref{fig:Def:hts}.

	\begin{figure}[h] \begin{center} \fbox{
				
				\nicoresetlinenr	
				
				\begin{minipage}[t]{3.7cm}
					
					\underline{$\Sign'(\sk, \messageAfterHTS)$}
					\begin{nicodemus}
						
						\item $z \uni Z$
						
						\item $\sigma = \Sign(\sk,\RO{H}(z\|\messageAfterHTS))$ 
						
						\item \pcreturn $\sigma' = (z,\sigma)$
						
					\end{nicodemus}
					
				\end{minipage}
				
				\quad
				
				\begin{minipage}[t]{3.7cm}
					
					\underline{$\VerifySig'(\pk, \sigma', \messageAfterHTS)$}
					\begin{nicodemus}
						
						\item Parse $\sigma'$ as $(z,\sigma)$
						
						\item $\messageBeforeHTS = \RO{H}(z\|\messageAfterHTS)$ 
						
						\item \pcreturn $\VerifySig(\pk, \messageBeforeHTS, \sigma)$
						
					\end{nicodemus}
				\end{minipage}
		}\end{center}
		\caption{Construction $\sigSp=\hts(\sigS,\RO{H})$. Key generation remains the same as in $\sigS$.}
		\label{fig:Def:hts}
	\end{figure}

}

{\heading{The Fiat-Shamir transform.}
	To an identification scheme $\IdScheme = (\IG, \Commit, \Respond, \VerifyId)$
	with commitment space $\CommSpace$, 
	and random oracle
	$\ROChallenge:  \CommSpace \times \MSpace \rightarrow \ChallengeSpace$
	for some message space $\MSpace$,
	we associate
	\[ \FS [\IdScheme, \ROChallenge]
	:= \SignatureScheme := (\IG, \Sign, \VerifySig) \enspace ,\]
	where algorithms $\Sign$ and $\VerifySig$ of $\SignatureScheme$ are defined in \cref{fig:Def:FiatShamir}.
	
	We will also consider the modified Fiat-Shamir transform, in which lines \ref{line:Def:FS:ChallengeHash1} and \ref{line:Def:FS:ChallengeHash2} are replaced with $\challenge := \ROChallenge(\commitment, m, \pk)$.

	\begin{figure}[h] \begin{center} \fbox{ \small
				
		\begin{minipage}[t]{4cm}
			
			\nicoresetlinenr
			
			\underline{$\Sign(\sk, m)$}
			\begin{nicodemus}
				
				\item $(\commitment, \state) \leftarrow \Commit(\sk)$
				
				\item $\challenge := \ROChallenge(\commitment, m)$ \label{line:Def:FS:ChallengeHash1}
				
				\item $\response \leftarrow \Respond(\sk, \commitment, \challenge, \state)$
				
				\item \pcreturn $\sigma := (\commitment, \response)$
				
			\end{nicodemus}		
			
		\end{minipage}
	
		\quad
		
		\begin{minipage}[t]{4cm}
			
			\underline{$\VerifySig(\pk, m, \sigma = (\commitment, \response))$}
			\begin{nicodemus}
				
				\item $\challenge := \ROChallenge(\commitment, m)$ \label{line:Def:FS:ChallengeHash2}
				
				\item \pcreturn $\VerifyId(\pk, \commitment, \challenge, \response)$
				
			\end{nicodemus}		
			
		\end{minipage}
	
		} \end{center}
		\caption{Signing and verification algorithms of $\SignatureScheme = \FS[\IdScheme, \ROforHedging]$.
		}
		\label{fig:Def:FiatShamir}
	\end{figure}
}

\section{From $\UFCMAZero$ to $\UFfaultCMADifferentSet{}$ (Proof of \cref{theorem:UFfaultCMA})}%
\label{sec:HedgedFiatShamir:ProofFCMA}

\newcounter{CounterCMAtoFCMA} %
\setcounter{CounterCMAtoFCMA}{1}

{
	\newcounter{SimulateNine}
	\setcounter{SimulateNine}{\theCounterCMAtoFCMA}
	\stepcounter{CounterCMAtoFCMA}
}

{
	\newcounter{SimulateFive}
	\setcounter{SimulateFive}{\theCounterCMAtoFCMA}
	\stepcounter{CounterCMAtoFCMA}
}

{
	\newcounter{SimulateSix}
	\setcounter{SimulateSix}{\theCounterCMAtoFCMA}
	\stepcounter{CounterCMAtoFCMA}
}

{
	\newcounter{SimulateSeven}
	\setcounter{SimulateSeven}{\theCounterCMAtoFCMA}
	\stepcounter{CounterCMAtoFCMA}
}

{
	\newcounter{SimulateFour}
	\setcounter{SimulateFour}{\theCounterCMAtoFCMA}
	\stepcounter{CounterCMAtoFCMA}
}

\newcounter{LastGameCMAtoFCMA}
\setcounter{LastGameCMAtoFCMA}{\theCounterCMAtoFCMA}

\newcommand{\gameSimulateNine}{G_{\theSimulateNine}\xspace}
\newcommand{\gameSimulateFive}{G_{\theSimulateFive}\xspace}
\newcommand{\gameSimulateSix}{G_{\theSimulateSix}\xspace}
\newcommand{\gameSimulateSeven}{G_{\theSimulateSeven}\xspace}
\newcommand{\gameSimulateFour}{G_{\theSimulateFour}\xspace}

\newcommand{\LastGameCMAtoFCMA}{G_{\theLastGameCMAtoFCMA}\xspace} 
We now give the proof for \cref{theorem:UFfaultCMA}:
\thmfour*

Following the proof structure of \cite{EC:AOTZ20},
we will break down the proof into several sequential steps.
Consider the sequence of games, given in \cref{fig:UFfaultCMA:games}.
With each game-hop, we take one more index $i$ for which we replace execution of $\oracleFaultSIGN$
with a simulation that can be executed without knowledge of $\sk$,
see line \cref{line:UFfaultCMA:UseSimulations}.
The workings of these simulations will be made explicit in the proof for the respective game-hop.
Similar to \cite{EC:AOTZ20}, the order of the indices for which we start simulating is 9, 5, 6, 7, 4.

For a scheme that cannot be assumed to be subset-revealing, we will only proceed until game $G_3$,
and then use game $G_3$ to argue that we can turn any adversary
against the $\UFfaultCMADifferentSet{\lbrace 5, 6, 9 \rbrace}$ security of $\SignatureScheme$
into an  $\UFCMAZero$ adversary (see \cref{lem:UFfaultCMA:Adversary569}).

If we can assume the scheme to be subset-revealing, we will proceed until game $G_5$,
and then use game $G_5$ to argue that we can turn any adversary against the $\UFfaultCMADifferentSet{\lbrace 4, 5, 6, 7, 9 \rbrace}$ security of $\SignatureScheme$ into an  $\UFCMAZero$ adversary (see \cref{lem:UFfaultCMA:AdversaryAllFaults}).

Note that our sequential proof is given for statistical $\specialHVZK$.
The reason why we do not give our proof in the computational setting
right away is that it would then be required to make all of our changes at once,
rendering the proof overly involved, while not providing any new insights.
At the end of this section, we show how to generalise the proof to the computational setting.

\begin{figure}[h] \begin{center} \makebox[\textwidth][c]{\fbox{ \small
			
		\begin{minipage}[t]{4.7cm}
			
			\nicoresetlinenr
			\underline{\textbf{Games} $G_0$ - $G_5$}
			\begin{nicodemus}
				
				\item $(\pk, \sk) \leftarrow \IG(\param)$
				
				\item $(m^*, \sigma^*) \leftarrow \Ad{A}^{\oracleFaultSIGN, \qRO{\ROChallenge}} (\pk)$
				
				\item \pcif $m^* \in \ListOfMessages$ \pcreturn 0
				
				\item Parse $(\commitment^*, \response^*) := \sigma^*$
				
				\item $\challenge^* := \ROChallenge(\commitment^*, m^*)$
				
				\item \pcreturn $\VerifyId(\pk, \commitment^*, \challenge^*, \response^*)$
				
			\end{nicodemus}		
			
		\end{minipage}
		
		\;
		
		\begin{minipage}[t]{4.8cm}
			
			\underline{$\oracleFaultSIGN(m, i \in \SetFaultFunctions, \phi)$}
			
			\begin{nicodemus}
				
				\item $S := \emptyset$ \gcom{$G_0$}
				
				\item $S := \lbrace 9 \rbrace$ \gcom{$G_1$-$G_5$}
				
				\item $S := S \cup \lbrace 5 \rbrace$ \gcom{$G_2$-$G_5$}
				
				\item $S := S \cup \lbrace 6 \rbrace$ \gcom{$G_3$-$G_5$}
				
				\item $S := S \cup \lbrace 7 \rbrace$ \gcom{$G_4$-$G_5$}
				
				\item $S := S \cup \lbrace 4 \rbrace$ \gcom{$G_5$}
				
				\item \pcif $i \in S$ 
				
					\item \quad $\sigma \leftarrow \simSignature_i(m, \phi)$ \label{line:UFfaultCMA:UseSimulations}
				
				\item \pcelse $\sigma \leftarrow \getSignature(m, i, \phi)$
				
				\item \pcreturn $\sigma$
				
			\end{nicodemus}
			
		\end{minipage}
		
		\;
		
		\begin{minipage}[t]{4.4cm}
			
			\underline{$\getSignature(m, i, \phi)$}
			
			\begin{nicodemus}
				
				\item $f_i := \phi$ and $f_j := Id  \  \forall \ j \neq i$
				
				\item $(\commitment, \state) \leftarrow \Commit(\sk)$
				
				\item $(\commitment, \state) := f_4(\commitment, \state)$
				
				\item $(\hat{\commitment}, \hat{m}, \hat{\pk}) := f_5(\commitment, m, \pk)$
				
				\item $\challenge := f_6(\ROChallenge(\hat{m}, \hat{\commitment}, \hat{\pk}))$
				
				\item $\response \leftarrow \Respond(f_7(\sk, c, \state))$
				
				\item $\ListOfMessages := \ListOfMessages \cup \lbrace \hat{m} \rbrace$
				
				\item \pcreturn $\sigma := f_9(\commitment, \response)$
				
			\end{nicodemus}

		\end{minipage}
			
}}
	\end{center}
	\caption{
		Games $G_0$ - $G_5$ for the proof of \cref{theorem:UFfaultCMA}.
		Helper methods $\getSignature$ and $\simSignature_i$ (where $i \in \lbrace 4, 5, 6, 7, 9 \rbrace$) are internal and cannot be accessed directly by $\Ad{A}$.
		Recall that we require queried indices $i$ to be contained in $\SetFaultFunctions$ (see \cref{fig:Def:UF_Fault_CMA}).
	}
	\label{fig:UFfaultCMA:games}
\end{figure}

{\heading{Game $G_0$.}
	Since game $G_0$ is the original $\UFfaultCMA$ game,
	\[ \succfun^{\UFfaultCMA}_{\SigSchemeHedged}(\Ad{A})
	= \Pr [ G_0^{\Ad{A}} \Rightarrow 1]\enspace.\]
}

{\heading{Games $\gameSimulateNine$ - $\gameSimulateSix$.}
	In games $\gameSimulateNine$ to $\gameSimulateSix$, we sequentially start to simulate faulty signatures for fault indices 9, 5 and 6.
	
	\begin{lemma}\label{lem:UFfaultCMA:9}%
		There exists an algorithm $\simSignature_9$ such that
		for any adversary $\Ad{A}$ against the $\UFfaultCMA$ security of $\SignatureScheme$,
		issuing at most $q_{S,9}$ queries to $\oracleFaultSIGN$ on index 9,
		$q_{S}$ queries to $\oracleFaultSIGN$ in total,
		and at most $q_\ROChallenge$ queries to $\ROChallenge$,
		\begin{align}
			\gameDist{SimulateNine}{A}	\label{eq:UFfaultCMA:9}
			\leq q_{S,9} \cdot \left(
					\boundSpecialStatisticalHVZK
					+ \frac {3}{2}  \sqrt{ (q_H + q_{S} + 1) \cdot \maxEntropyCommit} 
				\right) \enspace .
		\end{align}
	\end{lemma}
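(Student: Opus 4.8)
The plan is to take $\simSignature_9$ to be the algorithm that, on input $(m,\phi)$, samples a fresh challenge $\challenge \uni \ChallengeSpace$, runs the special $\HVZK$ simulator $(\commitment,\challenge,\response) \leftarrow \Sim(\pk,\challenge)$, reprograms $\ROChallenge := \ROChallenge^{(\commitment,m,\pk)\mapsto\challenge}$, sets $f_9 := \phi$, and returns $\sigma := f_9(\commitment,\response)$; note that $\simSignature_9$ never touches $\sk$. Because fault index $9$ acts only on the final output while $f_4,f_5,f_6,f_7$ are the identity on an index-$9$ query, the honest computation $\getSignature(m,9,\phi)$ used in $G_0$ and the call $\simSignature_9(m,\phi)$ used in $G_1$ differ in exactly two respects: how the underlying (pre-fault) transcript is produced, and whether $\ROChallenge$ is reprogrammed a-posteriori. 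I would therefore insert a single intermediate hybrid $G_0'$ between $G_0$ and $G_1$, bound the two effects separately, and combine them with the triangle inequality.

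First I would pass from $G_0$ to $G_0'$, where each index-$9$ query still computes an honest transcript via $\Commit/\Respond$ but now draws $\challenge \uni \ChallengeSpace$ and reprograms $\ROChallenge(\commitment,m,\pk):=\challenge$ a-posteriori. In $G_0$ the challenge already equals the uniform oracle value and no reprogramming takes place, so the only observable change is the reprogramming itself. I would capture this by a distinguisher $\Ad{D}$ against the games $\gameRepro_b$ of \cref{prop:advanced}, built like the one in \cref{fig:FS_NoMA_to_CMA_tight:DistinguisherO2H}: $\Ad{D}$ runs $\Ad{A}$, answers its $\ROChallenge$-queries by forwarding them to its own oracle, and on each index-$9$ signing query it calls $\Reprogram$ on the distribution that fixes $(m,\pk)$ and samples $\commitment$ via $\Commit(\sk)$ with side information $\state$, then recovers the freshly programmed challenge through one query $\challenge:=\ROChallenge(\commitment,m,\pk)$ before finishing the honest response. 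This uses $R=q_{S,9}$ reprogramming instances, its $\XFirst$-marginal has $p_{\max}=\maxEntropyCommit$, and $\Ad{D}$ makes at most $q_H+q_S+1$ oracle queries ($q_H$ forwarded ones, one per signing query to compute or recover the challenge, and one for the final verification), so \cref{prop:advanced} gives $|\Pr[G_0^{\Ad{A}}\Rightarrow 1]-\Pr[{G_0'}^{\Ad{A}}\Rightarrow 1]| \le \tfrac{3q_{S,9}}{2}\sqrt{(q_H+q_S+1)\maxEntropyCommit}$.

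Second I would pass from $G_0'$ to $G_1$ by replacing, on each index-$9$ query, the honest transcript $\getTransChallenge(\sk,\challenge)$ with the simulated one $\Sim(\pk,\challenge)$; this is the step that removes $\sk$ from signing. Since the challenge is fixed before the transcript is generated, special statistical $\HVZK$ bounds the statistical distance between the two transcript distributions by $\boundSpecialStatisticalHVZK$ for every fixed $\challenge$, hence also after averaging over $\challenge\uni\ChallengeSpace$. As the subsequent reprogramming at $(\commitment,m,\pk)$ and the deterministic fault $\phi$ are mere post-processing of the transcript, the data-processing inequality together with a standard hybrid over the at most $q_{S,9}$ index-$9$ queries yields $|\Pr[{G_0'}^{\Ad{A}}\Rightarrow 1]-\Pr[G_1^{\Ad{A}}\Rightarrow 1]| \le q_{S,9}\cdot\boundSpecialStatisticalHVZK$. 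Adding the two bounds proves the claim.

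The only genuinely quantum part — and hence the step needing the most care — is the first one: I must verify that the a-posteriori reprogramming carried out inside the classically-queried oracle $\oracleFaultSIGN$ really fits the template of \cref{prop:advanced}, i.e.\ that the reprogramming position is drawn from a distribution whose $\XFirst$-marginal has maximal weight $\maxEntropyCommit$ regardless of $\Ad{A}$'s prior (possibly quantum) queries, and that folding the auxiliary state $\state$ into the sampled distribution is admissible. Both are licensed by the information-theoretic, side-information-aware formulation of \cref{theorem:reprGameBased}. The remaining work is purely accounting: tracking the reprogramming-instance count $q_{S,9}$ and the oracle-query count $q_H+q_S+1$ so that the constant $3/2$ and the radicand come out exactly as stated.
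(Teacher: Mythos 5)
Your proposal matches the paper's proof: the paper defines $\simSignature_9$ exactly as you do and argues the game hop in the same two steps (derandomize the challenge and reprogram $\ROChallenge$ a-posteriori via the adaptive reprogramming theorem with $R=q_{S,9}$, $q=q_H+q_S+1$, $p_{\max}=\maxEntropyCommit$; then swap honest for simulated transcripts using statistical special $\HVZK$), yielding the identical bound. Your write-up simply fills in details the paper leaves implicit by referring back to the proof of \cref{thm:FS_NoMA_to_CMA_tight}.
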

	The details on algorithm $\simSignature_9$ and the proof for \cref{eq:UFfaultCMA:9} are given in \cref{sec:HedgedFiatShamir:ProofFault9}.
	\begin{lemma}\label{lem:UFfaultCMA:5}%
		There exists an algorithm $\simSignature_5$ such that
		for any adversary $\Ad{A}$ against the $\UFfaultCMA$ security of $\SignatureScheme$,
		issuing at most $q_{S,5}$ queries to $\oracleFaultSIGN$ on index 5,
		$q_{S}$ queries to $\oracleFaultSIGN$ in total,
		and at most $q_\ROChallenge$ queries to $\ROChallenge$,
		\begin{align}
		\gameDist{SimulateFive}{A}	\label{eq:UFfaultCMA:5}
			\leq q_{S,5} \cdot \left(
				\boundSpecialStatisticalHVZK 
				+ \frac {3}{2}  \sqrt{ (q_H + q_{S} + 1) \cdot 2 \maxEntropyCommit}
			\right) \enspace .
		\end{align}
	\end{lemma}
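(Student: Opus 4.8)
The plan is to reproduce the two-step argument behind \cref{thm:FS_NoMA_to_CMA_tight}, adapted to the fact that the hash input carrying the challenge is now the \emph{faulted} triple $(\hat{\commitment},\hat{m},\hat{\pk})=\phi(\commitment,m,\pk)$ rather than $(\commitment,m)$. Concretely, $\simSignature_5(m,\phi)$ will draw a transcript $(\commitment,\challenge,\response)\leftarrow\Sim(\pk,\challenge')$ from the special $\HVZK$ simulator for a uniformly random $\challenge'$, compute the faulted input $(\hat{\commitment},\hat{m},\hat{\pk})$, reprogram $\ROChallenge$ so that $\ROChallenge(\hat{\commitment},\hat{m},\hat{\pk})=\challenge$, and output $\sigma:=(\commitment,\response)$; note that this correctly reproduces the real faulty output, where the response is valid for the challenge at the faulted point even though verification later reads the unfaulted point. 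To bound $\gameDist{SimulateFive}{A}$ I would pass through an intermediate game that still generates an \emph{honest} transcript (real $\commitment\leftarrow\Commit(\sk)$, uniform $\challenge$, $\response\leftarrow\Respond(\sk,\commitment,\challenge,\state)$) but already performs the a-posteriori reprogramming at the faulted input, so that the triangle inequality splits the bound into a reprogramming term and a zero-knowledge term.

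For the first hop, from $\gameSimulateNine$ to the intermediate game, I would build a reprogramming distinguisher $\Ad{D}$ that generates $(\pk,\sk)\leftarrow\IG(\param)$ itself, runs $\Ad{A}$, answers index-$9$ queries with $\simSignature_9$ and all remaining non-index-$5$ queries with $\getSignature$ (both of which $\Ad{D}$ can execute since it holds $\sk$), and on each index-$5$ query calls $\Reprogram$ with the distribution $p$ that samples $(\commitment,\state)\leftarrow\Commit(\sk)$ and returns reprogramming position $\xFirst=(\hat{\commitment},\hat{m},\hat{\pk})$ together with side information $\xSecond=(\commitment,\state)$. Reading the (possibly reprogrammed) challenge $\challenge:=\ROChallenge(\xFirst)$ and responding with $\Respond(\sk,\commitment,\challenge,\state)$, the distinguisher reproduces $\gameSimulateNine$ when run in $\gameRepro_0$ (the challenge is the original oracle value at the faulted point, exactly as $\getSignature$ computes it for index $5$) and the intermediate game when run in $\gameRepro_1$. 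With $R=q_{S,5}$ reprogramming instances and at most $q_H+q_S+1$ oracle queries in total, \cref{prop:advanced} (using that every per-query marginal is bounded by the same $p_{\max}$) yields the term $\tfrac{3q_{S,5}}{2}\sqrt{(q_H+q_S+1)\cdot p_{\max}}$.

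The heart of the argument, and the only genuine departure from the index-$9$ case, is the bound $p_{\max}\le 2\maxEntropyCommit$ on the marginal of the reprogramming position. Since $m,\pk$ are fixed before the query, I would split into three cases for the one-bit fault $\phi$. If $\phi$ acts on a bit of $m$ or $\pk$ then $\hat{\commitment}=\commitment$ and $(\hat m,\hat\pk)$ is deterministic, so $\max_{\xFirst}p_{\XFirst}(\xFirst)=\max_{\commitment}\Pr[\commitment]$. If $\phi=\mathsf{flip}\text{-}\mathsf{bit}_i$ acts on $\commitment$ the map $\commitment\mapsto\hat{\commitment}$ is a bijection and the maximum is again unchanged. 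The sole lossy case is $\phi=\mathsf{set}\text{-}\mathsf{bit}_i(\cdot,b)$ on $\commitment$, where the two commitments agreeing off position $i$ collapse onto one value, giving $\Pr[\hat{\commitment}=w]\le 2\max_{\commitment}\Pr[\commitment]$; taking the expectation over $(\pk,\sk)\leftarrow\IG$ bounds all cases by $2\maxEntropyCommit$. For the second hop, from the intermediate game to $\gameSimulateFive$, the two games are identical except that the $q_{S,5}$ index-$5$ transcripts are honest in one and simulated in the other; because the statement is for \emph{statistical} special $\HVZK$ this is a purely information-theoretic swap, and a hybrid over the $q_{S,5}$ transcripts bounds the distance by $q_{S,5}\cdot\boundSpecialStatisticalHVZK$ without a reduction (in particular sidestepping the fact that the not-yet-simulated faults still use $\sk$). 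Summing the two hops gives exactly $q_{S,5}\bigl(\boundSpecialStatisticalHVZK+\tfrac{3}{2}\sqrt{(q_H+q_S+1)\cdot 2\maxEntropyCommit}\bigr)$. I expect the entropy bookkeeping, verified uniformly over all admissible one-bit faults and over the key generation, to be the delicate point, whereas the quantum content is fully outsourced to \cref{prop:advanced}.
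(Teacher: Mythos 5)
Your proposal is correct and follows essentially the same route as the paper: define $\simSignature_5$ by sampling a uniform challenge, invoking the special $\HVZK$ simulator, and reprogramming $\ROChallenge$ at the \emph{faulted} triple $(\hat{\commitment},\hat{m},\hat{\pk})$; then split the hop into an adaptive-reprogramming step (invoking the reprogramming theorem with $R=q_{S,5}$ and $q=q_H+q_S+1$) and a statistical special-$\HVZK$ step. Your case analysis showing $p_{\max}\le 2\maxEntropyCommit$ (identity and bit-flip on $\commitment$ preserve the max, set-bit at worst doubles it, faults on $m$ or $\pk$ are harmless since those are fixed) is exactly the paper's justification, just spelled out in more detail.
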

	The details on algorithm $\simSignature_5$ and the proof for \cref{eq:UFfaultCMA:5} are given in \cref{sec:HedgedFiatShamir:ProofFault5}.	
	\begin{lemma}\label{lem:UFfaultCMA:6}%
		
		There exists an algorithm $\simSignature_6$ such that
		for any adversary $\Ad{A}$ against the $\UFfaultCMA$ security of $\SignatureScheme$,
		issuing at most $q_{S,6}$ queries to $\oracleFaultSIGN$ on index 6,
		$q_{S}$ queries to $\oracleFaultSIGN$ in total,
		and at most $q_\ROChallenge$ queries to $\ROChallenge$,
		
		\begin{align}
		\gameDist{SimulateSix}{A}	\label{eq:UFfaultCMA:6}
			\leq q_{S,6} \cdot \left(
				\boundSpecialStatisticalHVZK
				+ \frac {3}{2}  \sqrt{ (q_H + q_{S} + 1) \cdot \maxEntropyCommit}
			\right) \enspace . 
		\end{align}
	\end{lemma}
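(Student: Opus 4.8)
The plan is to reproduce, almost verbatim, the reprogramming step from the proof of \cref{thm:FS_NoMA_to_CMA_tight}, the only new ingredient being that a one-bit fault $\phi = f_6$ is applied to the \emph{output} of $\ROChallenge$ before the response is computed. Concretely, I would define $\simSignature_6(m, \phi)$ to first sample a fresh value $\challenge_0 \uni \ChallengeSpace$, set the faulted challenge $\challenge := \phi(\challenge_0)$, invoke the special $\HVZK$ simulator to obtain $(\commitment, \challenge, \response) \leftarrow \Sim(\pk, \challenge)$, reprogram $\ROChallenge := \ROChallenge^{(\commitment, m, \pk) \mapsto \challenge_0}$ a-posteriori, add $m$ to $\ListOfMessages$, and return $\sigma := (\commitment, \response)$. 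The key point to get right is that the oracle is reprogrammed at the \emph{unfaulted} input $(\commitment, m, \pk)$ to the \emph{pre-fault} value $\challenge_0$, whereas the simulator is run on the \emph{post-fault} challenge $\phi(\challenge_0)$; this is exactly the relation that holds in the real execution $\getSignature(m, 6, \phi)$, which reads $\challenge_0 := \ROChallenge(\commitment, m, \pk)$ off the oracle, faults it to $\challenge$, and responds honestly from $\sk$.

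I would then bound $\gameDist{SimulateSix}{A}$ by interposing one intermediate game and splitting into two sub-steps. First I would move from $\gameSimulateFive$ to an intermediate game that still computes the transcript honestly via $\getTransChallenge(\sk, \phi(\challenge_0))$ but samples $\challenge_0$ afresh and reprograms $\ROChallenge^{(\commitment, m, \pk) \mapsto \challenge_0}$ instead of reading the value off the oracle. To bound this, I would build a distinguisher $\Ad{D}$ for the games $\gameRepro_b$ that, on each fault-$6$ query, calls $\Reprogram$ on the position $(\commitment, m, \pk)$ --- random part $\commitment \leftarrow \Commit(\sk)$, fixed part $(m, \pk)$ --- with side information $\state$, exactly as the distinguisher in \cref{fig:FS_NoMA_to_CMA_tight:DistinguisherO2H}. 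Since the $\Reprogram$ oracle does not hand back the fresh value, $\Ad{D}$ recovers $\challenge_0$ by one extra classical query to $\RO{O}_b$, and only then applies $\phi$ and computes the response; the remaining signing queries (faults $9,5$ already simulated, $7,4$ still honest) are handled as in $\gameSimulateFive$ and trigger no call to $\Reprogram$. With $q_{S,6}$ reprogramming instances, at most $q_H + q_S + 1$ oracle queries (the hash queries of $\Ad{A}$, one per signing query, and the final verification query), and $p_{\max} = \maxEntropyCommit$, \cref{prop:advanced} yields the term $\tfrac{3 q_{S,6}}{2}\sqrt{(q_H + q_S + 1)\cdot\maxEntropyCommit}$.

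Second, I would pass from the intermediate game to $\gameSimulateSix$ by replacing, in each of the $q_{S,6}$ fault-$6$ queries, the honest transcript $\getTransChallenge(\sk, \challenge)$ by $\Sim(\pk, \challenge)$. Because reprogramming is now done a-posteriori, the reduction can maintain $\ROChallenge$ by an initial table look-up without $\sk$, exactly as reduction $\M$ does (see the description on p.~\pageref{desc:M}), so a query-by-query application of special statistical $\HVZK$ together with the triangle inequality contributes $q_{S,6}\cdot\boundSpecialStatisticalHVZK$. Summing the two sub-steps gives
\[
\gameDist{SimulateSix}{A} \leq q_{S,6}\cdot\boundSpecialStatisticalHVZK + \frac{3 q_{S,6}}{2}\sqrt{(q_H + q_S + 1)\cdot\maxEntropyCommit} = q_{S,6}\left(\boundSpecialStatisticalHVZK + \frac{3}{2}\sqrt{(q_H + q_S + 1)\cdot\maxEntropyCommit}\right),
\]
which is the claimed bound.

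The main obstacle I anticipate is bookkeeping rather than a genuine difficulty: I must ensure the faulted challenge and the reprogrammed value are matched correctly (simulate on $\phi(\challenge_0)$, reprogram to $\challenge_0$), and that the extra oracle query reconstructing $\challenge_0$ is counted, so that $\Ad{D}$ is a faithful distinguisher with the query budget $q_H + q_S + 1$. A secondary point to verify is that validity awareness covers the case $\phi(\challenge_0) \notin \ChallengeSpace$: there $\Respond$ (hence also the special $\HVZK$ simulator it is compared against) outputs a rejecting transcript, so the $\HVZK$ switch stays sound. Unlike the fault-$5$ case, the hash \emph{input} is unfaulted here, so the reprogramming position is the standard $(\commitment, m, \pk)$ with $p_{\max} = \maxEntropyCommit$ and no factor-$2$ entropy loss arises, which is why this bound is cleaner than \cref{eq:UFfaultCMA:5}.
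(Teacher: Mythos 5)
Your proposal matches the paper's proof essentially verbatim: the same $\simSignature_6$ (simulate on the post-fault challenge $\phi(\challenge_0)$, reprogram $(\commitment,m,\pk)$ to the pre-fault $\challenge_0$), the same two-step decomposition into an adaptive-reprogramming hop with $R=q_{S,6}$, $q=q_H+q_S+1$, $p_{\max}=\maxEntropyCommit$ followed by a special statistical $\HVZK$ switch, and the same invocation of validity awareness for $\phi(\challenge_0)\notin\ChallengeSpace$. The only nuance is that the special $\HVZK$ simulator is not itself guaranteed to reject on invalid challenges, so (as in the paper's $\simSignature_6$) the simulation must explicitly set $\response:=\bot$ in that case rather than rely on $\Sim$ to do so.
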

	The details on algorithm $\simSignature_6$ and the proof for \cref{eq:UFfaultCMA:6} are given in \cref{sec:HedgedFiatShamir:ProofFault6}.
	What we have shown by now is that
	\begin{align}
	| \Pr [ G_0^{\Ad{A}} \Rightarrow 1] - \Pr [ \gameSimulateSix^{\Ad{A}} \Rightarrow 1]|
	\leq q_{S,\lbrace 5,6, 9 \rbrace} \cdot \left( \boundSpecialStatisticalHVZK
			+ \frac {3}{2} \sqrt{(q_H + q_{S} + 1) \cdot 2 \maxEntropyCommit}
		\right) \enspace, 
	\end{align}
	where $q_{S,\lbrace 5,6, 9 \rbrace}$ denotes the maximal number of queries to $\oracleFaultSIGN$ on all indices $i \in \lbrace 5,6, 9 \rbrace$.
	We are now ready to give our first security statement.
	
	\begin{lemma}\label{lem:UFfaultCMA:Adversary569}%
		
		For any adversary $\Ad{A}$ against the $\UFfaultCMADifferentSet{\lbrace 5,6,9 \rbrace}$ security of $\SignatureScheme$, 
		there exists an adversary $\Ad{B}$ such that
		
		\begin{align*}
			\Pr [ \gameSimulateSix^{\Ad{A}} \Rightarrow 1]
			\leq	\succfun^{\UFCMAZero}_{\FS[\IdScheme, \ROChallenge]}(\Ad{B}) \enspace ,
		\end{align*}
		and $\Ad{B}$ has the same running time as $\Ad{A}$.
	\end{lemma}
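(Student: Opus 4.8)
The plan is to adapt the $\UFCMAZero$ reduction from the proof of \cref{thm:FS_NoMA_to_CMA_tight} (adversary $\Ad{B}$ of \cref{fig:adversary:FS_NoMA_to_CMA_tight}) to the present, fully simulated setting. Recall that in game $\gameSimulateSix$ we have $S = \lbrace 5,6,9 \rbrace = \SetFaultFunctions$, so every admissible query to $\oracleFaultSIGN$ is answered by one of the simulators $\simSignature_5, \simSignature_6, \simSignature_9$, and none of these uses $\sk$. First I would build a $\UFCMAZero$ adversary $\Ad{B}$ that receives $\pk$, runs $\Ad{A}$ on input $\pk$, and answers each of $\Ad{A}$'s signing queries by invoking the appropriate simulator $\simSignature_i$. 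Because these simulators reprogram $\ROChallenge$ a-posteriori at positions whose message component $\hat{m}$ is the faulted message added to $\ListOfMessages$, $\Ad{B}$ does not need to reprogram its own oracle: as with the reduction $\M$ on p.~\pageref{desc:M}, it keeps a classical list $\List{\ROChallenge'}$ of the reprogrammed pairs, simulates $\ROChallenge'$ by returning the stored value on a listed input and forwarding to its own random oracle $\ROChallenge$ otherwise, and implements this as a quantum circuit so that $\Ad{A}$ enjoys quantum access. When $\Ad{A}$ outputs $(m^*, \sigma^*)$, $\Ad{B}$ outputs the same pair.

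Since $\Ad{B}$ reproduces every step of $\gameSimulateSix$ exactly --- the simulators require no secret information and the table look-up realises the a-posteriori reprogrammed oracle perfectly --- $\Ad{B}$ perfectly simulates $\gameSimulateSix$ towards $\Ad{A}$. The decisive step is to transfer a win of $\Ad{A}$ into a win of $\Ad{B}$. Suppose $\Ad{A}$ wins $\gameSimulateSix$; then in particular $m^* \notin \ListOfMessages$. As every position at which $\ROChallenge'$ was reprogrammed has its message component in $\ListOfMessages$, the point $(\commitment^*, m^*)$ was never reprogrammed, whence $\ROChallenge'(\commitment^*, m^*) = \ROChallenge(\commitment^*, m^*)$. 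Consequently, the challenge $\challenge^*$ that the $\UFCMAZero$ game recomputes from $\Ad{B}$'s own (un-reprogrammed) oracle equals the one used inside $\gameSimulateSix$, so $\VerifyId(\pk, \commitment^*, \challenge^*, \response^*) = 1$ is precisely the statement that $\sigma^*$ verifies for $m^*$. Hence $\Ad{B}$ wins its $\UFCMAZero$ game whenever $\Ad{A}$ wins $\gameSimulateSix$, which yields $\Pr[\gameSimulateSix^{\Ad{A}} \Rightarrow 1] \leq \succfun^{\UFCMAZero}_{\FS[\IdScheme, \ROChallenge]}(\Ad{B})$.

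I expect the main obstacle to be the bookkeeping that justifies the claim that $\ROChallenge$ is only ever reprogrammed at message components belonging to $\ListOfMessages$; this is what makes the freshness condition $m^* \notin \ListOfMessages$ rule out any reprogramming at the forgery point, and it must be read off carefully from the specifications of $\simSignature_5, \simSignature_6, \simSignature_9$. Once that is in place, the forgery is valid as argued above, and the running-time claim is immediate: $\Ad{B}$ only runs $\Ad{A}$ alongside the efficient simulators and constant-overhead list look-ups, so its running time is about that of $\Ad{A}$.
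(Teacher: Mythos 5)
Your proposal is correct and follows essentially the same route as the paper: the paper's adversary $\Ad{B}$ (\cref{fig:UFfaultCMA:Adversary569}) likewise answers all signing queries with the secret-key-free simulators $\simSignature_5, \simSignature_6, \simSignature_9$, realises the a-posteriori reprogramming by the same table look-up $\List{\ROChallenge'}$ as the reduction $\M$, and wins via the same observation that every reprogrammed position has its (possibly faulted) message component recorded in $\ListOfMessages$, so freshness of $m^*$ rules out reprogramming at the forgery point. The bookkeeping you flag as the main obstacle indeed checks out, since each $\simSignature_i$ adds exactly the message component $\hat{m}$ of its reprogramming position to $\ListOfMessages$.
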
	
	The proof is given in \cref{sec:HedgedFiatShamir:ProofAdversary569}.
}
Collecting the probabilities, we obtain
\begin{align*}
\succfun^{\UFfaultCMADifferentSet{\lbrace 5,6,9 \rbrace}}_{\FS[\IdScheme, \ROChallenge]}(\Ad{A})
\leq & \succfun^{\UFCMAZero}_{\FS[\IdScheme, \ROChallenge]}(\Ad{B}) \\
	& + q_{S} \cdot \left(
		\boundSpecialStatisticalHVZK
		+ \frac {3}{2}  \sqrt{(q_H + q_{S} + 1) \cdot 2\maxEntropyCommit}
	\right) \enspace .
\end{align*}

{\heading{Games $\gameSimulateSeven$ - $\gameSimulateFour$.}
	In games $\gameSimulateSeven$ to $\gameSimulateFour$, we sequentially start to simulate faulty signatures for fault indices 7 and 4.
	
	\begin{lemma}\label{lem:UFfaultCMA:7}%
		Suppose that $\IdScheme$ is subset-revealing. 
		Then there exists an algorithm $\simSignature_7$ such that
		for any adversary $\Ad{A}$ against the $\UFfaultCMA$ security of $\SignatureScheme$,
		issuing at most $q_{S,7}$ queries to $\oracleFaultSIGN$ on index 7,
		$q_{S}$ queries to $\oracleFaultSIGN$ in total,
		and at most $q_\ROChallenge$ queries to $\ROChallenge$,
		\begin{align}
			\gameDist{SimulateSeven}{A}	\label{eq:UFfaultCMA:7}
				\leq q_{S,7} \cdot \left( 
					\boundSpecialStatisticalHVZK
					+ \frac {3}{2}  \sqrt{ (q_H + q_{S} + 1) \cdot \maxEntropyCommit}
				\right)
		\enspace . 
		\end{align}
	\end{lemma}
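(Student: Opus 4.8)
The plan is to follow the template already used for fault indices~$9$ and~$6$ in \cref{lem:UFfaultCMA:9} and \cref{lem:UFfaultCMA:6}, inserting two conceptual sub-steps between $\gameSimulateSix$ and $\gameSimulateSeven$: one that renders $\ROChallenge$ a-posteriori consistent with freshly sampled challenges (paid for by the adaptive reprogramming toolbox), and one that replaces honestly generated transcripts by simulated ones (paid for by special statistical $\HVZK$). The only genuinely new ingredient compared to indices $9$ and $6$ is that the fault $f_7$ now hits the \emph{input} $(\sk, \challenge, \state)$ of $\Respond$, so that simulating the resulting signature without the secret key is only possible under the subset-revealing assumption (\cref{Def:SubsetRevealing}).

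First I would define $\simSignature_7$ and argue that, for a subset-revealing $\IdScheme$, the faulted response can be produced without $\sk$. Recall that here $\Respond(\sk, \challenge, \state) = (\state_j)_{j \in I}$ with $I = \mathsf{DeriveSet}(\challenge)$, that $\state$ is a public collection $(\state_1, \dots, \state_N)$, and that $\ChallengeSpace$ is polynomial. Since the bit targeted by $\phi$ and the public layout of $(\sk, \challenge, \state)$ together determine which field is faulted, $\simSignature_7$ can proceed by cases: if $\phi$ hits a bit of $\sk$, then by subset-revealing the response is unchanged and a single call to the special simulator $\Sim(\pk, \challenge)$ suffices; if $\phi$ hits a bit of $\state$, the targeted component is either revealed (then it appears in $\Sim(\pk, \challenge)$'s output and the fault is applied there) or not revealed (then the response is unaffected); and if $\phi$ hits a bit of $\challenge$, yielding $\challenge' := \phi(\challenge)$, one runs $\Sim(\pk, \challenge')$ to obtain the commitment and the opening $(\state_j)_{j \in \mathsf{DeriveSet}(\challenge')}$, while still reprogramming the oracle to the \emph{unfaulted} value $\challenge$. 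In every case $\sk$ is never used, and the output matches the honest faulted signature because $\Commit$ is challenge-independent, so that the pair (commitment, opening for $\challenge'$) has the same distribution whether the nominal challenge is $\challenge$ or $\challenge'$.

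To turn this into the claimed bound I would introduce an intermediate game in which, on each of the $q_{S,7}$ index-$7$ queries, the challenge is sampled uniformly and $\ROChallenge$ is reprogrammed at $(\commitment, m, \pk)$ accordingly, the commitment and the (subset-revealing) response still being computed honestly. Exactly as for the distinguisher in the proof of \cref{lem:UFfaultCMA:6}, this change is observable only through the adaptive reprogramming game with $R := q_{S,7}$ reprogramming instructions, $q := q_H + q_S + 1$ oracle queries, and reprogramming positions of likelihood at most $\maxEntropyCommit$; \cref{prop:advanced} then contributes $\tfrac{3 q_{S,7}}{2}\sqrt{(q_H + q_S + 1)\cdot \maxEntropyCommit}$. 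Replacing the honestly generated commitment/opening by $\simSignature_7$'s simulator output over a standard hybrid across the $q_{S,7}$ queries costs at most $q_{S,7}\cdot\boundSpecialStatisticalHVZK$ by \cref{Def:HVZK:Statistical}, and summing the two terms yields \cref{eq:UFfaultCMA:7}. The main obstacle is the challenge-fault case of the previous paragraph: one must verify that invoking $\Sim$ on the faulted challenge $\challenge'$ while programming the oracle to the unfaulted $\challenge$ reproduces the exact joint distribution of the real faulty oracle; the remaining accounting (running times of $\Ad{B}$ and $\AdversaryHVZK$, and that only the $q_{S,7}$ relevant queries trigger reprogramming) is routine and identical to the earlier hops.
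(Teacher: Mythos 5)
Your proposal is correct and follows essentially the same route as the paper: the same case split on whether $\phi$ hits $\sk$, $\challenge$, or $\state$, the same use of the subset-revealing property to drop $\sk$ and to apply state-faults directly to the revealed components $(\state_i)_{i\in I}$ of the simulated response, and the same reprogram-then-simulate accounting yielding $\tfrac{3q_{S,7}}{2}\sqrt{(q_H+q_S+1)\cdot\maxEntropyCommit} + q_{S,7}\cdot\boundSpecialStatisticalHVZK$. The only detail you leave implicit in the challenge-fault branch is that when $\phi(\challenge)\notin\ChallengeSpace$ the simulator must output $\response:=\bot$, which is precisely where validity awareness enters (exactly as in the index-6 case you invoke).
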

	The details on algorithm $\simSignature_7$ and the proof for \cref{eq:UFfaultCMA:7} are given in \cref{sec:HedgedFiatShamir:ProofFault7}.
	\begin{lemma}\label{lem:UFfaultCMA:4}%
		Suppose that $\IdScheme$ is subset-revealing. 
		There exists an algorithm $\simSignature_4$	such that
		for any adversary $\Ad{A}$ against the $\UFfaultCMA$ security of $\SignatureScheme$,
		issuing at most $q_{S,4}$ queries to $\oracleFaultSIGN$ on index 4,
		$q_{S}$ queries to $\oracleFaultSIGN$ in total,
		and at most $q_\ROChallenge$ queries to $\ROChallenge$,
		\begin{align}
		\gameDist{SimulateFour}{A}	\label{eq:UFfaultCMA:4}
			\leq q_{S,6} \cdot \left( 
				\boundSpecialStatisticalHVZK
				+ \frac {3}{2}  \sqrt{ (q_H + q_{S} + 1) \cdot 2 \maxEntropyCommit}
			\right)
		\enspace . 
		\end{align}
	\end{lemma}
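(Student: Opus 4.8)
The plan is to follow the template already used for Lemmas~\ref{lem:UFfaultCMA:9}--\ref{lem:UFfaultCMA:7}: I will exhibit an algorithm $\simSignature_4$ that produces an index-$4$ faulty signature \emph{without} using $\sk$, and then bound the distance between the preceding game $\gameSimulateSeven$ and $\gameSimulateFour$ by a hybrid argument over the (at most) $q_{S,4}$ queries that $\Ad{A}$ places on index $4$, paying one special statistical $\HVZK$ term $\boundSpecialStatisticalHVZK$ (\cref{Def:HVZK:Statistical}) and one adaptive-reprogramming term per such query. The reason the hypothesis that $\IdScheme$ is subset-revealing (\cref{Def:SubsetRevealing}) is indispensable here is that $\Respond$ then only reveals a subset $(\state_i)_{i\in I}$ of the state, with $I=\mathsf{DeriveSet}(\challenge)$ and no dependence on $\sk$; this is exactly what makes a fault \emph{on the state} simulatable from a single special $\HVZK$ transcript.

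First I would construct $\simSignature_4(m,\phi)$. For index $4$ the one-bit tamper $\phi$ is applied to the pair $(\commitment,\state)$, so precisely one of the two registers is affected, and which one is determined by the publicly known bit position of $\phi$. The simulator samples a challenge $\challenge$ (uniformly, matching the honest oracle, where the challenge is the fresh value of $\ROChallenge$ at the hashed commitment), runs $(\commitment,\challenge,\response)\leftarrow\Sim(\pk,\challenge)$, and branches. If the flipped bit lies in the commitment part, it sets $\hat\commitment:=\phi(\commitment)$, leaves $\response$ untouched, reprograms $\ROChallenge(\hat\commitment,m,\pk):=\challenge$, and outputs $(\hat\commitment,\response)$; this reproduces the honest game, where the response is computed from the \emph{unfaulted} state while the hashed and emitted commitment are both faulted. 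If the flipped bit lies in the state part, the commitment is untouched; here subset-revealing is used directly: the simulator computes $I=\mathsf{DeriveSet}(\challenge)$ and flips the corresponding bit inside $\response=(\state_i)_{i\in I}$ \emph{iff} the affected component lies in $I$ (otherwise $\response$ is unchanged), reprograms $\ROChallenge(\commitment,m,\pk):=\challenge$, and outputs $(\commitment,\response)$. In both branches the reprogrammed oracle is realised a-posteriori by a classical table look-up made quantum-accessible, exactly as for the reduction $\M$ on p.~\pageref{desc:M}.

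Next I would bound the game distance. Replacing honest transcripts by special $\HVZK$ transcripts costs at most $\boundSpecialStatisticalHVZK$ per index-$4$ query. For the reprogramming part I would build a distinguisher $\Ad{D}$ against the games $\gameRepro_b$ that simulates $\gameSimulateSeven$ when $b=0$ and $\gameSimulateFour$ when $b=1$, handing to $\Reprogram$ the position distribution obtained by composing $\Commit(\sk)$ with the fixed bit flip; since $\Ad{D}$ issues at most $q_H+q_S+1$ oracle queries, \cref{prop:advanced} contributes a term $\tfrac{3}{2}\sqrt{(q_H+q_S+1)\cdot p_{\max}}$ per query. As in \cref{lem:UFfaultCMA:5}, the factor $2$ in $2\maxEntropyCommit$ stems from the reprogramming analysis for a fault acting on the commitment register: the relevant reprogramming position is a bit-flip image of the commitment produced by $\Commit(\sk)$, and the bound on its maximal weight used to match the index-$5$ case is $2\maxEntropyCommit$ rather than the naive $\maxEntropyCommit$. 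Summing the $q_{S,4}$ contributions yields the claimed bound (with $q_{S,4}$ in place of the evidently mistyped $q_{S,6}$ in \cref{eq:UFfaultCMA:4}).

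The main obstacle is the state-fault branch: a one-bit fault on $\state$ must be faithfully reflected in the emitted response even though the $\HVZK$ simulator never returns the full state. The argument goes through precisely because subset-revealing forces $\response=(\state_i)_{i\in I}$ with $I$ fixed by the already-sampled $\challenge$, so observability of the fault is decided by the \emph{public} predicate ``affected component $\in I$''; establishing this distributional equivalence exactly, together with checking that in the commitment-fault branch the emitted triple $(\hat\commitment,\challenge,\response)$ has the honest distribution (including that $\ROChallenge$ at the unfaulted position stays fresh), is the delicate part. It is also where the factor-$2$ reprogramming bound must be argued consistently with \cref{lem:UFfaultCMA:5} rather than deduced from the bijectivity of the bit flip alone.
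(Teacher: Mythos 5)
Your proposal is correct and follows essentially the same route as the paper: the paper's own proof of this lemma simply reduces the commitment-fault branch to the index-5 simulation (whence the factor $2\maxEntropyCommit$ from the possibly 2-to-1 set-bit function) and the state-fault branch to the index-7 simulation via subset-revealing, exactly as you do, with the same per-query cost $\boundSpecialStatisticalHVZK + \tfrac{3}{2}\sqrt{(q_H+q_S+1)\cdot 2\maxEntropyCommit}$ summed over the index-4 queries. Your reading of $q_{S,6}$ as a typo for $q_{S,4}$ is also consistent with the paper's intent, and your emission of the faulted commitment $\hat\commitment$ in the commitment branch matches the honest index-4 oracle.
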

	The details on algorithm $\simSignature_4$ and the proof for \cref{eq:UFfaultCMA:4} are given in \cref{sec:HedgedFiatShamir:ProofFault4}.
	What we have shown by now is that
	\[ | \Pr [ \gameSimulateSix^{\Ad{A}} \Rightarrow 1] - \Pr [ \gameSimulateFour^{\Ad{A}} \Rightarrow 1]|
	\leq  q_{S,\lbrace 4, 7 \rbrace} \cdot \left(
		\boundSpecialStatisticalHVZK
		+ \frac {3}{\sqrt{2}} \sqrt{(q_H + q_{S} + 1) \cdot \maxEntropyCommit}
	\right) \enspace,\]
	where $q_{S,\lbrace 4, 7 \rbrace}$ denotes the maximal number of queries to $\oracleFaultSIGN$ on all indices $i \in \lbrace 4, 7 \rbrace$.
	We are now ready to give our second security statement.
	\begin{lemma}\label{lem:UFfaultCMA:AdversaryAllFaults}%
		For any adversary $\Ad{A}$ against the $\UFfaultCMADifferentSet{\lbrace 4, 5, 6, 7, 9 \rbrace}$ security of $\SignatureScheme$, 
		there exists an adversary $\Ad{B}$ such that
		\begin{align*}
		\Pr [\gameSimulateFour^{\Ad{A}} \Rightarrow 1]
		\leq \succfun^{\UFCMAZero}_{\FS[\IdScheme, \ROChallenge]}(\Ad{B}) \enspace ,
		\end{align*}
		and $\Ad{B}$ has the same running time as $\Ad{A}$.
		The proof is given in \cref{sec:HedgedFiatShamir:ProofAdversaryAllFaults}.
	\end{lemma}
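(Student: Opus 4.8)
The plan is to mirror the argument already used for \cref{lem:UFfaultCMA:Adversary569}, now starting from game $\gameSimulateFour$ (i.e.\ $G_5$) rather than from $G_3$. The crucial observation is that, by the time we reach $\gameSimulateFour$, the faulty signing oracle $\oracleFaultSIGN$ has been rewritten so that for \emph{every} index $i \in \lbrace 4, 5, 6, 7, 9 \rbrace$ the answer is produced by the simulator $\simSignature_i$ established in \cref{lem:UFfaultCMA:9,lem:UFfaultCMA:5,lem:UFfaultCMA:6,lem:UFfaultCMA:7,lem:UFfaultCMA:4}. Since $\IdScheme$ is assumed subset-revealing, none of these simulators needs the secret key: the (special) $\HVZK$ simulator generates the transcript components from $\pk$ alone, and the subset-revealing structure is exactly what lets $\simSignature_7$ and $\simSignature_4$ produce the faulted input to $\Respond$ and the faulted commitment/state without $\sk$.

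First I would construct a $\UFCMAZero$ adversary $\Ad{B}$ that, on input $\pk$, runs $\Ad{A}$ internally and simulates $\gameSimulateFour$ for it. For each classical query $(m, i, \phi)$ to $\oracleFaultSIGN$, $\Ad{B}$ invokes $\simSignature_i$; this yields a signature together with an a-posteriori reprogramming instruction for the challenge oracle at the (possibly faulted) position $(\hat\commitment, \hat m, \hat\pk)$, which $\Ad{B}$ records in a table $\List{\ROChallenge'}$, adding $\hat m$ to $\ListOfMessages$. $\Ad{B}$ answers $\Ad{A}$'s quantum queries to the challenge oracle by forwarding them to its own oracle $\ROChallenge$, patched through the classical table $\List{\ROChallenge'}$; as for reduction $\M$ (see the description on p.~\pageref{desc:M}), this patch is realisable as a quantum circuit, giving $\Ad{A}$ genuine quantum access to the reprogrammed oracle $\ROChallenge'$. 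Because $\Ad{B}$ invokes only the simulators, which require no secret key, $\Ad{B}$ has about the same running time as $\Ad{A}$.

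The key step is the forgery-transfer argument. When $\Ad{A}$ halts with $(m^*, \sigma^* = (\commitment^*, \response^*))$, $\Ad{B}$ outputs the same pair. If $\Ad{A}$ wins $\gameSimulateFour$, then $m^* \notin \ListOfMessages$, so $m^*$ differs from every faulted message $\hat m$ that triggered a reprogramming. Hence $\List{\ROChallenge'}$ contains no entry whose message component equals $m^*$, and therefore $\ROChallenge'(\commitment^*, m^*, \pk) = \ROChallenge(\commitment^*, m^*, \pk)$; the challenge $\challenge^*$ recomputed in the final verification thus coincides under the simulated and the real oracle. Consequently $\sigma^*$ is a valid forgery on the fresh message $m^*$ in $\Ad{B}$'s $\UFCMAZero$ game, which gives $\Pr[\gameSimulateFour^{\Ad{A}} \Rightarrow 1] \leq \succfun^{\UFCMAZero}_{\FS[\IdScheme, \ROChallenge]}(\Ad{B})$.

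I expect the main difficulty to be bookkeeping rather than conceptual. One must verify that all five simulators can be run simultaneously by $\Ad{B}$ without $\sk$, and in particular that the subset-revealing hypothesis is precisely what eliminates the secret key from $\simSignature_4$ and $\simSignature_7$. A secondary point deserving care is that the reprogramming positions recorded by the simulators use the \emph{faulted} message $\hat m$ — the very value that is added to $\ListOfMessages$ — rather than the queried message $m$. This alignment is what makes the implication ``$m^* \notin \ListOfMessages \Rightarrow \ROChallenge'$ is unreprogrammed at $(\commitment^*, m^*, \pk)$'' hold even though index $5$ may flip a bit of the message component, and it is the only place where the winning condition and the simulated reprogramming must be kept in lockstep.
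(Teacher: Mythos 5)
Your proposal is correct and follows essentially the same route as the paper: the paper likewise extends the $\UFCMAZero$ adversary $\Ad{B}$ from the $\lbrace 5,6,9\rbrace$ case to additionally run $\simSignature_4$ and $\simSignature_7$, simulates the reprogrammed oracle $\ROChallenge'$ via a classical look-up table patched over its own oracle, and transfers the forgery by noting that a winning $\Ad{A}$ cannot have queried $m^*$, so $\ROChallenge'$ agrees with $\ROChallenge$ at $(\commitment^*, m^*, \pk)$. Your closing remark about keeping the faulted message $\hat m$ in lockstep between $\ListOfMessages$ and the reprogramming positions is a correct and slightly more explicit account of the point the paper handles implicitly.
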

	Collecting the probabilities, we obtain
	\begin{align*}
		\succfun^{\UFfaultCMADifferentSet{\lbrace 4, 5, 6, 7, 9 \rbrace}}_{\FS[\IdScheme, \ROChallenge]}(\Ad{A})
		\leq & \succfun^{\UFCMAZero}_{\FS[\IdScheme, \ROChallenge]}(\Ad{B}) \\
		& + q_{S} \cdot \left(
			\boundSpecialStatisticalHVZK
			+ \frac {3}{\sqrt{2}}  \sqrt{(q_H + q_{S} + 1) \cdot \maxEntropyCommit}
		\right) \enspace ,
	\end{align*}
	given that $\IdScheme$ is subset-revealing.
}

\heading{Generalising the proof for computational $\specialHVZK$.}
To generalise the proof, we observe that every game-hop consists of two steps: Adaptive reprogramming and, subsequently, replacing honest transcripts with simulated ones.
To obtain the result for computational $\specialHVZK$, we have to reorder the games:
We will first reprogram the random oracle for \textit{all} fault indices \textit{at once}, 
with oracle $\oracleFaultSIGN$ reprogramming the random oracle for each fault index as specified in the sequential proof (see Sections \ref{sec:HedgedFiatShamir:ProofFault9} to \ref{sec:HedgedFiatShamir:ProofFault4}).
Combined reprogramming yields an upper bound of $\frac{3q_S}{\sqrt{2}} \sqrt{(q_H + q_{S} + 1) \cdot \maxEntropyCommit}$.
After these changes, the random oracle is a-posteriori reprogrammed such that it is consistent with the transcripts,
and hence, the transition to simulated transcripts can be reduced to distinguishing the
special computational multi-$\HVZK$ games (see \cref{Def:HVZK:Computational}).
In more detail, the $\HVZK$ reduction can simply use its own transcript oracle $\getTransChallenge$, and simulate the adaptive reprogramming like our $\UFCMAZero$ reductions,
see, e.g., the reduction given in \cref{sec:HedgedFiatShamir:ProofAdversary569}.

\subsection{Game $\gameSimulateNine$: Simulating $\oracleFaultSIGN$ for index 9 (Proof of \cref{lem:UFfaultCMA:9})}%
\label{sec:HedgedFiatShamir:ProofFault9}

As a warm-up, we will first consider simulations with respect to fault index 9.
Recall that index 9 denotes the fault type which allows $\Ad{A}$
to fault the resulting (honestly generated) signature
(see line~\ref{line:Simulate9:Fault9} in \cref{fig:UFfaultCMA:games:Simulate9}).
To prove \cref{lem:UFfaultCMA:9},
let $\Ad{A}$ be an adversary against the $\UFfaultCMA$ security of $\SignatureScheme$,
issuing at most $q_{S,9}$ queries to $\oracleFaultSIGN$ on index 9,
$q_{S}$ queries to $\oracleFaultSIGN$ in total,
and at most $q_\ROChallenge$ queries to $\ROChallenge$.
We define the signature simulation algorithm $\simSignature_9$ as in \cref{fig:UFfaultCMA:games:Simulate9}.

\begin{figure}[h] \begin{center} \fbox{ \small
	
	\nicoresetlinenr
	
	\begin{minipage}[t]{4.3cm}
		
		\underline{$\oracleFaultSIGN(m, i = 9, \phi)$}
		
		\begin{nicodemus}
			
			\item $(\commitment, \state) \leftarrow \Commit(\sk)$
			
			\item $\challenge := \ROChallenge(\commitment, m, \pk)$
			
			\item $\response \leftarrow \Respond(\sk, \challenge, \state)$
			
			\item $\ListOfMessages := \ListOfMessages \cup \lbrace m \rbrace$
			
			\item \pcreturn $\sigma := \phi(\commitment, \response)$ \label{line:Simulate9:Fault9}
			
		\end{nicodemus}
		
	\end{minipage}
	
	\;
	
	\begin{minipage}[t]{4.3cm}
		
		\underline{$\simSignature_9(m, \phi)$}
		
		\begin{nicodemus}
			
			\item $\challenge \uni \ChallengeSpace$
			
			\item $(\commitment, \response) \leftarrow \Sim(\pk, \challenge)$
			
			\item $\ROChallenge := \ROChallenge^{(\commitment, m, \pk) \mapsto \challenge}$
			
			\item $\ListOfMessages := \ListOfMessages \cup \lbrace m \rbrace$
			
			\item \pcreturn $\sigma := \phi(\commitment, \response)$
			
		\end{nicodemus}
		
	\end{minipage}	
			
}
\end{center}
	\caption{
		Original oracle $\oracleFaultSIGN$ for the case that $i=9$,
		and signature simulation algorithm $\simSignature_9$ for the proof of \cref{lem:UFfaultCMA:9}.
	}
	\label{fig:UFfaultCMA:games:Simulate9}
\end{figure}

To proceed from game $G_0$ to $\gameSimulateNine$, we use an argument similar to the one given in \cref{thm:FS_NoMA_to_CMA_tight}:
During execution of $\oracleFaultSIGN(m, 9, \phi)$, we first derandomise the challenges and reprogram $\ROChallenge$ such that it is rendered a-posteriori-consistent with the resulting transcripts,
resulting in an invocation of \cref{theorem:reprGameBased},
where $R = q_{S,9}$, $q = q_H + q_{S} + 1$, and $p_{\max} = \maxEntropyCommit$.
As the second step, we then make use of the fact that we assume $\IdScheme$ to be statistically $\specialHVZK$,
and hence, honestly generated transcripts can be replaced with simulated ones during execution of $\oracleFaultSIGN(m, 9, \phi)$.

After these changes, $\oracleFaultSIGN(m, 9, \phi) = \simSignature_9(m, \phi)$ and
\begin{align*}
	\gameDist{SimulateNine}{A}
	\leq	q_{S,9} \cdot \boundSpecialStatisticalHVZK
			+ \frac {3q_{S,9}}{2}  \sqrt{ (q_H + q_{S} + 1) \cdot \maxEntropyCommit} \enspace .
\end{align*} 
\subsection{Game $\gameSimulateFive$: Simulating $\oracleFaultSIGN$ for index 5 (Proof of \cref{lem:UFfaultCMA:5})}%
\label{sec:HedgedFiatShamir:ProofFault5}

Recall that index 5 denotes the fault type which allows $\Ad{A}$
to fault the triplet $(\commitment, m, \pk)$, 
when taken as input to random oracle $\ROChallenge$ to compute the challenge $\challenge$ 
(see line~\ref{line:Simulate5:Fault5} in \cref{fig:UFfaultCMA:games:Simulate5}).
To prove \cref{lem:UFfaultCMA:5},
let $\Ad{A}$ be an adversary against the $\UFfaultCMA$ security of $\SignatureScheme$,
issuing at most $q_{S,5}$ queries to $\oracleFaultSIGN$ on index 5,
$q_{S}$ queries to $\oracleFaultSIGN$ in total,
and at most $q_\ROChallenge$ queries to $\ROChallenge$.
We define the signature simulation algorithm $\simSignature_5$ as in \cref{fig:UFfaultCMA:games:Simulate5}.

\begin{figure}[h] \begin{center} \fbox{ \small
	
	\nicoresetlinenr
	
	\begin{minipage}[t]{4.3cm}
		
		\underline{$\oracleFaultSIGN(m, i = 5, \phi)$}
		
		\begin{nicodemus}
			
			\item $(\commitment, \state) \leftarrow \Commit(\sk)$
			
			\item $(\hat{\commitment}, \hat{m}, \hat{\pk}) := \phi(\commitment, m, \pk)$ 	
				\label{line:Simulate5:Fault5}
			
			\item $\challenge := \ROChallenge(\hat{\commitment},  \hat{m}, \hat{\pk}))$ 
			
			\item $\response \leftarrow \Respond(\sk, \challenge, \state)$
			
			\item $\ListOfMessages := \ListOfMessages \cup \lbrace \hat{m} \rbrace$
			
			\item \pcreturn $\sigma := (\commitment, \response)$
			
		\end{nicodemus}
		
	\end{minipage}
	
	\; 
			
	\begin{minipage}[t]{4.3cm}
		
		\underline{$\simSignature_5(m, \phi)$}
		
		\begin{nicodemus}
			
			\item $\challenge \uni \ChallengeSpace$
			
			\item $(\commitment, \response) \leftarrow \Sim(\pk, \challenge)$
			
			\item $(\hat{\commitment}, \hat{m}, \hat{pk}) := \phi(\commitment, m, pk)$ 			
				
			\item $\ROChallenge := \ROChallenge^{(\hat{\commitment},  \hat{m}, \hat{\pk}) \mapsto \challenge}$
				
			\item $\ListOfMessages := \ListOfMessages \cup \lbrace \hat{m} \rbrace$
			
			\item \pcreturn $\sigma := (\commitment, \response)$
			
		\end{nicodemus}
		
	\end{minipage}

}
\end{center}
	\caption{
		Original oracle $\oracleFaultSIGN$ for the case that $i=5$,
		and signature simulation algorithm $\simSignature_5$ for the proof of \cref{lem:UFfaultCMA:5}.
	}
	\label{fig:UFfaultCMA:games:Simulate5}
\end{figure}

To proceed from game $G_{\before{SimulateFive}}$ to $\gameSimulateFive$,
we adapt the argument of \cref{sec:HedgedFiatShamir:ProofFault9}:
During execution of $\oracleFaultSIGN(m, 5, \phi)$, we first derandomise the challenges and reprogram $\ROChallenge$ such that it is rendered a-posteriori-consistent with with the resulting transcripts,
resulting in an invocation of \cref{theorem:reprGameBased},
where $R = q_{S,5}$ and $q = q_H + q_{S} + 1$.
To make $p_{\max}$ explicit, let $\phi_{\commitment}$ ($\phi_{m}$, $\phi_{\pk}$) denote the share of $\phi$
acting on $\commitment$ (m, $\pk$).
We can now identify reprogramming positions $x$ with $(\phi_{m}(m), \phi_{\commitment}(\commitment), \phi_{\pk}(\pk))$.
The distribution $p$ consists hence of the constant distribution that always returns $\phi_{m}(m)$ and $\phi_{\pk}(\pk)$,
as these parts of the reprogramming position are already fixed,
together with the distribution $\phi_{\commitment}(\Commit(\sk))$.
Note that $\phi_{\commitment}$ is either the identity, a bit flip, or a function that fixes one bit of $\commitment$, hence $p_{\max} \leq 2 \maxEntropyCommit$.

As the second step, we can again make use of the fact that we assume $\IdScheme$ to be statistically $\specialHVZK$,
and honestly generated transcripts can be replaced with simulated ones during execution of $\oracleFaultSIGN(m, 5, \phi)$.

After these changes, $\oracleFaultSIGN(m, 5, \phi) = \simSignature_5(m, \phi)$ and
\begin{align*}
	\gameDist{SimulateFive}{A}
	\leq	q_{S,5} \cdot \boundSpecialStatisticalHVZK
	+ \frac {3q_{S,5}}{2}  \sqrt{ (q_H + q_{S} + 1) \cdot 2 \maxEntropyCommit} \enspace .
\end{align*} 
\subsection{Game $\gameSimulateSix$: Simulating $\oracleFaultSIGN$ for index 6 (Proof of \cref{lem:UFfaultCMA:6})}%
\label{sec:HedgedFiatShamir:ProofFault6}

Recall that index 6 denotes the fault type which allows $\Ad{A}$ to fault
the output $\challenge = \ROChallenge(\commitment, m, \pk)$ of the challenge hash function $\ROChallenge$
(see line~\ref{line:Simulate6:Fault6} in \cref{fig:UFfaultCMA:games:Simulate6}).
To prove \cref{lem:UFfaultCMA:6},
let $\Ad{A}$ be an adversary against the $\UFfaultCMA$ security of $\SignatureScheme$,
issuing at most $q_{S,6}$ queries to $\oracleFaultSIGN$ on index 6,
$q_{S}$ queries to $\oracleFaultSIGN$ in total,
and at most $q_\ROChallenge$ queries to $\ROChallenge$.
We define the signature simulation algorithm $\simSignature_6$ as in \cref{fig:UFfaultCMA:games:Simulate6}.

\begin{figure}[h] \begin{center} \fbox{ \small
		
		\nicoresetlinenr
		
		\begin{minipage}[t]{4.6cm}
			
			\underline{$\oracleFaultSIGN(m, i = 6, \phi)$}
			
			\begin{nicodemus}
				
				\item $(\commitment, \state) \leftarrow \Commit(\sk)$
				
				\item $\challenge := \ROChallenge(\commitment, m, \pk)$
				
				\item $\response \leftarrow \Respond(\sk, \phi(\challenge), \state)$ \label{line:Simulate6:Fault6}
				
				\item $\ListOfMessages := \ListOfMessages \cup \lbrace m \rbrace$
				
				\item \pcreturn $\sigma := (\commitment, \response)$ 
				
			\end{nicodemus}
			
		\end{minipage}
	
		\quad 
		
		\begin{minipage}[t]{4.3cm}
			
			\underline{$\simSignature_6(m, \phi)$}
			
			\begin{nicodemus}
				
				\item $\challenge \uni \ChallengeSpace$
				
				\item $(\commitment, \response) \leftarrow \Sim(\pk, \phi(\challenge))$
				
				\item \pcif $\phi(\challenge) \notin \ChallengeSpace$
				
					\item \quad $\response := \bot$
				
				\item $\ROChallenge := \ROChallenge^{(\commitment, m, \pk) \mapsto \challenge}$
				
				\item $\ListOfMessages := \ListOfMessages \cup \lbrace m \rbrace$
				
				\item \pcreturn $\sigma := (\commitment, \response)$
				
			\end{nicodemus}
			
		\end{minipage}
			
}
	\end{center}
	\caption{
		Original oracle $\oracleFaultSIGN$ for the case that $i=6$,
		and signature simulation algorithm $\simSignature_6$ for the proof of \cref{lem:UFfaultCMA:6}.
	}
	\label{fig:UFfaultCMA:games:Simulate6}
\end{figure}

To proceed from game $G_{\before{SimulateSix}}$ to $\gameSimulateSix$,
we again adapt the argument from \cref{sec:HedgedFiatShamir:ProofFault9}:
During execution of $\oracleFaultSIGN(m, 6, \phi)$,
we first derandomise the challenges and reprogram $\ROChallenge$ such that it is rendered a-posteriori-consistent with with the resulting transcripts,
resulting in an invocation of \cref{theorem:reprGameBased},
where $R = q_{S,6}$ and $q = q_H + q_{S} + 1$.
Like in \cref{sec:HedgedFiatShamir:ProofFault9}, $p_{\max} = \maxEntropyCommit$.

As the second step, we can again make use of the fact that we assume $\IdScheme$ to be statistically $\specialHVZK$,
and hence, honestly generated transcripts can be replaced with simulated ones during execution of $\oracleFaultSIGN(m, 6, \phi)$.
Note that as the challenges are faulty, however, we have to simulate rejection
whenever faulting the challenge results in an invalid challenge, i.e.,
whenever $\phi(c) \notin \ChallengeSpace$.

Since \ifeprint the scheme \fi  $\IdScheme$ is validity aware (see \cref{Def:ValidityAwareness}),
it holds that after these changes, $\oracleFaultSIGN(m, 6, \phi) = \simSignature_6(m, \phi)$ and
\begin{align*}
	\gameDist{SimulateSix}{A}
	\leq	q_{S,6} \cdot \boundSpecialStatisticalHVZK
	+ \frac {3q_{S,6}}{2}  \sqrt{ (q_H + q_{S} + 1) \cdot \maxEntropyCommit} \enspace .
\end{align*} 
\subsection{$\UFCMAZero$ adversary for game $\gameSimulateSix$, for $\SetFaultFunctions = \lbrace 5,6,9 \rbrace$ (Proof of \cref{lem:UFfaultCMA:Adversary569})}%
\label{sec:HedgedFiatShamir:ProofAdversary569}

Recall that in game $\gameSimulateSix$, faulty signatures are simulated
for all indices $i \in \lbrace 5,6,9 \rbrace$.
Since adversaries against the $\UFfaultCMADifferentSet{\lbrace 5,6,9 \rbrace}$ security of $\SignatureScheme$
only have access to $\oracleFaultSIGN(m, i, \phi)$ for $i \in \lbrace 5, 6, 9 \rbrace$,
the game derives all oracle answers by a call to one of the simulated oracles $\simSignature_i(m, \phi)$, where $i \in \lbrace 5, 6, 9 \rbrace$.
To prove \cref{lem:UFfaultCMA:Adversary569}, we construct an $\UFCMAZero$ adversary $\Ad{B}$ in \cref{fig:UFfaultCMA:Adversary569}.

\begin{figure}[h] \begin{center} \fbox{ \small
			
	\begin{minipage}[t]{4.8cm}
		
		\nicoresetlinenr
		
		\underline{\textbf{Adversary} $\Ad{B}^\qRO{{\ROChallenge}}(\pk)$}
		\begin{nicodemus}
			
			\item $(m^*, \sigma^*) \leftarrow \Ad{A}^{\oracleFaultSIGN, \qRO{\ROChallenge'}} (\pk)$
			
			\item \pcif $m^* \in \ListOfMessages$ ABORT
			
			\item \pcreturn $(m^*, \sigma^*)$
			
		\end{nicodemus}		
		
		\ \\
		
		\underline{$\oracleFaultSIGN(m, i \in \lbrace 5,6,9 \rbrace, \phi)$}
		
		\begin{nicodemus}
			
			\item $\sigma \leftarrow \simSignature_i(m, \phi)$
			
			\item \pcreturn $\sigma$
			
		\end{nicodemus}
	
		\ \\
		
		\underline{$\ROChallenge'(\commitment, m, \pk)$}
		
		\begin{nicodemus}
			
			\item \pcif $\exists \challenge$ s. th. $(\commitment, m, \pk, \challenge)
			\in \List{\ROChallenge'}$
			
			\item \quad \pcreturn $\challenge$
			
			\item \pcelse \pcreturn $\ROChallenge(\commitment, m, \pk)$
			
		\end{nicodemus}
	
		\ \\
		
		\underline{$\simSignature_9(m, \phi)$}
		
		\begin{nicodemus}
			
			\item $\challenge \uni \ChallengeSpace$
			
			\item $(\commitment, \response) \leftarrow \Sim(\pk, \challenge)$
			
			\item \pcif $\exists \challenge'$ s. th. $(\commitment, m, \pk \challenge') \in \List{\ROChallenge'}$
			
			\item \quad $\List{\ROChallenge'} := \List{\ROChallenge'} \setminus \lbrace (\commitment, m, \pk  \challenge')\rbrace$
			
			\item  $\List{\ROChallenge'} := \List{\ROChallenge'} \cup \lbrace (\commitment, m, \pk, \challenge) \rbrace$
			
			\item $\ListOfMessages := \ListOfMessages \cup \lbrace m \rbrace$
			
			\item \pcreturn $\sigma := \phi(\commitment, \response)$
			
		\end{nicodemus}
		
	\end{minipage}
	
	\quad
	
	\begin{minipage}[t]{4.9cm}
		
		\underline{$\simSignature_5(m, \phi)$}
		
		\begin{nicodemus}
			
			\item $\challenge \uni \ChallengeSpace$
			
			\item $(\commitment, \response) \leftarrow \Sim(\pk, \challenge)$
			
			\item $(\hat{\commitment}, \hat{m}, \hat{pk}) := \phi(\commitment, m, pk)$ 	
			
			\item \pcif $\exists \challenge'$ s. th. $(\hat{\commitment},  \hat{m}, \hat{\pk}, \challenge') \in \List{\ROChallenge'}$
			
			\item \quad $\List{\ROChallenge'} := \List{\ROChallenge'} \setminus \lbrace (\hat{\commitment},  \hat{m}, \hat{\pk}, \challenge')\rbrace$

			\item  $\List{\ROChallenge'} := \List{\ROChallenge'} \cup \lbrace (\hat{\commitment},  \hat{m}, \hat{\pk}, \challenge) \rbrace$
			
			\item $\ListOfMessages := \ListOfMessages \cup \lbrace \hat{m} \rbrace$
			
			\item \pcreturn $\sigma := (\hat{\commitment}, \response)$
			
		\end{nicodemus}
	
		\ \\
		
		\underline{$\simSignature_6(m, \phi)$}
		
		\begin{nicodemus}
			
			\item $\challenge \uni \ChallengeSpace$
			
			\item $(\commitment, \response) \leftarrow \Sim(\pk, \phi(\challenge))$
			
			\item \pcif $\phi(\challenge) \notin \ChallengeSpace$
			
			\item \quad $\response := \bot$
			
			\item \pcif $\exists \challenge'$ s. th. $(\commitment, m, \pk \challenge') \in \List{\ROChallenge'}$
			
			\item \quad $\List{\ROChallenge'} := \List{\ROChallenge'} \setminus \lbrace (\commitment, m, \pk,  \challenge')\rbrace$
			
			\item  $\List{\ROChallenge'} := \List{\ROChallenge'} \cup \lbrace (\commitment, m, \pk, \challenge) \rbrace$
			
			\item $\ListOfMessages := \ListOfMessages \cup \lbrace m \rbrace$
			
			\item \pcreturn $\sigma := (\commitment, \response)$
			
		\end{nicodemus}
		
	\end{minipage}

}
\end{center}
	\caption{$\UFCMAZero$ Adversary $\Ad{B}$ for the proof of \cref{lem:UFfaultCMA:Adversary569}.
	}
	\label{fig:UFfaultCMA:Adversary569}
\end{figure}

Since in game $\gameSimulateSix$, all signatures are defined relative to simulated transcripts,
and the random oracle is reprogrammed accordingly, $\Ad{B}$ perfectly simulates $\gameSimulateSix$
and has the same running time as $\Ad{A}$.

Furthermore, $\Ad{A}$ can not win if $m^*$ was a query to $\oracleFaultSIGN$.
Therefore, it is ensured that no reprogramming did occur on $m^*$ and $\Ad{A}$'s signature is also valid in $\Ad{B}$'s $\UFCMAZero$ game.

\begin{align*}
\Pr [ \gameSimulateSix^{\Ad{A}} \Rightarrow 1]
\leq	\succfun^{\UFCMAZero}_{\FS[\IdScheme, \ROChallenge]}(\Ad{B}) \enspace .
\end{align*}

\subsection{Faulting the response input (Proof of \cref{lem:UFfaultCMA:7})}%
\label{sec:HedgedFiatShamir:ProofFault7}

Recall that index 7 denotes the fault type which allows $\Ad{A}$ to fault
the input $(\sk, \challenge, \state)$ to the response function $\Respond$
(see line~\ref{line:Simulate7:Fault7} in \cref{fig:UFfaultCMA:games:Simulate7}) and that we assume that $\IdScheme$ is subset-revealing.
To prove \cref{lem:UFfaultCMA:7},
let $\Ad{A}$ be an adversary against the $\UFfaultCMA$ security of $\SignatureScheme$,
issuing at most $q_{S,7}$ queries to $\oracleFaultSIGN$ on index 7,
$q_{S}$ queries to $\oracleFaultSIGN$ in total,
and at most $q_\ROChallenge$ queries to $\ROChallenge$.
We define the signature simulation algorithm $\simSignature_7$ as in \cref{fig:UFfaultCMA:games:Simulate7}.

\begin{figure}[h] \begin{center} \fbox{ \small
			
	\nicoresetlinenr
	
	\begin{minipage}[t]{4.2cm}
		
		\underline{$\oracleFaultSIGN(m, i =7, \phi)$}
		
		\begin{nicodemus}
			
			\item $(\commitment, \state) \leftarrow \Commit(\sk)$
			
			\item $\challenge := \ROChallenge(\commitment, m, \pk)$
			
			\item $\response \leftarrow \Respond(\phi(\sk, \challenge, \state))$ \label{line:Simulate7:Fault7}
			
			\item $\ListOfMessages := \ListOfMessages \cup \lbrace \hat{m} \rbrace$
			
			\item \pcreturn $\sigma := (\commitment, \response)$
			
		\end{nicodemus}
		
	\end{minipage}

	\quad 
	
	\begin{minipage}[t]{6cm}
		
		\underline{$\simSignature_7(m, \phi)$}
		
		\begin{nicodemus}
			
			\item $\challenge \uni \ChallengeSpace$
			
			\item Parse $(\phi_{\sk}, \phi_{\challenge}, \phi_{\state}) := \phi$
			
			\item \pcif $\phi_{\challenge} \neq Id$ \gcom{$\phi$ targets $\challenge$}
			
				\item \quad $(\commitment, \response) \leftarrow \Sim(\pk, \phi(\challenge))$
				
				\item \quad \pcif $\phi(\challenge) \notin \ChallengeSpace$
				
					\item \quad \quad $\response := \bot$
					
			\item \pcelse 
			
				\item \quad $(\commitment, \response) \leftarrow \Sim(\pk, \challenge)$
			
				\item \quad \pcif $\phi_{\state} \neq Id$ \gcom{$\phi$ targets $\state$}
			
					\item \quad \quad $I \leftarrow \mathsf{DeriveSet}(\challenge)$ 
					
					\item \quad \quad Parse $(\state_i)_{i \in I} := \response$
					
					\item \quad \quad $z := (\phi_{\state, i}(\state_i))_{i \in I}$
					
			\item $\ROChallenge := \ROChallenge^{(\commitment, m, \pk) \mapsto \challenge}$
				
			\item $\ListOfMessages := \ListOfMessages \cup \lbrace m \rbrace$
			
			\item \pcreturn $\sigma := (\commitment, \response)$
			
		\end{nicodemus}
		
	\end{minipage}
}
	\end{center}
	\caption{
		Original oracle $\oracleFaultSIGN$ for the case that $i=7$,
		and signature simulation algorithm $\simSignature_7$ for the proof of \cref{lem:UFfaultCMA:7}.
	}
	\label{fig:UFfaultCMA:games:Simulate7}
\end{figure}

If fault function $\phi$ is targeted at $\challenge$, the situation is essentially the same as for fault index 6, and thus, the simulation strategy is identical to that of $\simSignature_6$ (see \cref{sec:HedgedFiatShamir:ProofFault6}).
If fault function $\phi$ is targeted at $\sk$,
$\phi$ has no effect whatsoever since we assume $\IdScheme$ to be subset-revealing,
meaning that the responses returned by $\Respond$ do not depend on $\sk$ (see \cref{Def:SubsetRevealing}).
The simulation strategy is hence identical to that of $\simSignature_9$.
The simulation algorithm covers both cases by 
dissecting $\phi$ into the shares $\phi_{\sk}$ ($\phi_{\challenge}, \phi_{\state}$) 
acting on $\sk$ ($\challenge$, $\state$) and treating the cases where $\phi_{\challenge} \neq Id$ ($\phi_{\sk} \neq Id$) similar to $\simSignature_6$ ($\simSignature_9$).

It remains to discuss the case  where $\phi$ is targeted at $\state$.
Since we assume $\IdScheme$ to be subset-revealing (see \cref{Def:SubsetRevealing}),
we observe that
$\Respond(\phi(\sk, \challenge, \state))
= \Respond(\sk, \challenge, \phi_{\state}(\state))
=  ((\phi_{\state}(\state))_i)_{i \in I}$,
where $I = \mathsf{DeriveSet}(\challenge)$.
Hence, computing $z \leftarrow \Respond(\phi(\sk, \challenge, \state))$ is equivalent
to deriving $I \leftarrow \mathsf{DeriveSet}(\challenge)$,
only considering the shares $\phi_{\state, i}$ of $\phi_{\state}$ that act on $\state_i$,
and returning $(\phi_{\state, i}(\state_i))_{i \in I}$.
With this alternative description of the original $\Respond$ algorithm,
it can easily be verified that even for the case where $\phi$ is targeted at the state,
honest transcripts can be replaced with simulated transcripts by letting $\phi$ act on the response $z$ as described above.

After these changes, $\oracleFaultSIGN(m, 7, \phi) = \simSignature_7(m, \phi)$ and
\begin{align*}
	\gameDist{SimulateSeven}{A}
	\leq	q_{S,7} \cdot \boundSpecialStatisticalHVZK
	+ \frac {3q_{S,7}}{2}  \sqrt{ (q_H + q_{S} + 1) \cdot \maxEntropyCommit} \enspace .
\end{align*}

\subsection{Faulting the commitment output (Proof of \cref{lem:UFfaultCMA:4})}%
\label{sec:HedgedFiatShamir:ProofFault4}

Recall that index 4 denotes the fault type which allows $\Ad{A}$ to fault
the output of $\Commit(\sk)$
(see line~\ref{line:Simulate4:Fault4} in \cref{fig:UFfaultCMA:games:Simulate4}).
To prove \cref{lem:UFfaultCMA:4},
let $\Ad{A}$ be an adversary against the $\UFfaultCMA$ security of $\SignatureScheme$,
issuing at most $q_{S,4}$ queries to $\oracleFaultSIGN$ on index 4,
$q_{S}$ queries to $\oracleFaultSIGN$ in total,
and at most $q_\ROChallenge$ queries to $\ROChallenge$.
We define the signature simulation algorithm $\simSignature_4$ as in \cref{fig:UFfaultCMA:games:Simulate4}.

\begin{figure}[h] \begin{center} \fbox{ \small
		
	\nicoresetlinenr
		
	\begin{minipage}[t]{4.2cm}
		
		\underline{$\oracleFaultSIGN(m, i =4, \phi)$}
		
		\begin{nicodemus}
			
			\item $(\commitment, \state) \leftarrow \Commit(\sk)$
			
			\item $(\commitment, \state) := \phi(\commitment, \state)$ \label{line:Simulate4:Fault4}
			
			\item $\challenge := \ROChallenge(\commitment, m, \pk)$
			
			\item $\response \leftarrow \Respond(\sk, c, \state)$ 
			
			\item $\ListOfMessages := \ListOfMessages \cup \lbrace \hat{m} \rbrace$
			
			\item \pcreturn $\sigma := (\commitment, \response)$
			
		\end{nicodemus}
			
	\end{minipage}
		
	\quad
		
	\begin{minipage}[t]{6cm}
		
		\underline{$\simSignature_4(m, \phi)$}
		
		\begin{nicodemus}
			
			\item $\challenge \uni \ChallengeSpace$
			
			\item $(\commitment, \response) \leftarrow \Sim(\pk, \challenge)$
			
			\item Parse $(\phi_{\commitment}, \phi_{\state}) := \phi$
			
			\item \pcif $\phi_{\commitment} \neq Id$ \gcom{$\phi$ targets $\commitment$}
			
				\item \quad $(\hat{\commitment}, \hat{m}, \hat{pk}) := \phi(\commitment, m, pk)$ 			
				
				\item \quad $\ROChallenge := \ROChallenge^{(\hat{\commitment},  \hat{m}, \hat{\pk}) \mapsto \challenge}$
			
				\item \quad $\ListOfMessages := \ListOfMessages \cup \lbrace \hat{m} \rbrace$
				
			\item \pcelse
			
				\item \quad \pcif $\phi_{\state} \neq Id$ \gcom{$\phi$ targets $\state$}
			
					\item \quad \quad $I \leftarrow \mathsf{DeriveSet}(\challenge)$ 
					
					\item \quad \quad Parse $(\state_i)_{i \in I} := \response$
					
					\item \quad \quad  $z := (\phi_{\state, i}(\state_i))_{i \in I}$
			
				\item \quad $\ROChallenge := \ROChallenge^{(\commitment, m, \pk) \mapsto \challenge}$
				
				\item \quad $\ListOfMessages := \ListOfMessages \cup \lbrace m \rbrace$

			\item \pcreturn $\sigma := (\commitment, \response)$
			
		\end{nicodemus}
		
	\end{minipage}
}
	\end{center}
	\caption{
		Original oracle $\oracleFaultSIGN$ for the case that $i=4$,
		and signature simulation algorithm $\simSignature_4$ for the proof of \cref{lem:UFfaultCMA:4}.
	}
	\label{fig:UFfaultCMA:games:Simulate4}
\end{figure}

If fault function $\phi$ is targeted at $\commitment$,
the situation is essentially the same as for fault index 5, and thus, the simulation strategy is identical to that of $\simSignature_5$ (see \cref{sec:HedgedFiatShamir:ProofFault5}).
If fault function $\phi$ is targeted at $\state$,
the situation is essentially the same as for fault index 7, and thus, the simulation strategy is identical to that of $\simSignature_7$ (see \cref{sec:HedgedFiatShamir:ProofFault7}).
Putting both cases together, we obtain
\begin{align*}
	\gameDist{SimulateFour}{A}
	\leq	q_{S,6} \cdot \boundSpecialStatisticalHVZK
	+ \frac {3q_{S,6}}{2}  \sqrt{ (q_H + q_{S} + 1) \cdot 2 \maxEntropyCommit}  \enspace .
\end{align*} 
\subsection{$\UFCMAZero$ adversary for game $\gameSimulateFour$, for $\SetFaultFunctions = \lbrace 4, 5, 6, 7, 9 \rbrace$ (Proof of \cref{lem:UFfaultCMA:AdversaryAllFaults})}%
\label{sec:HedgedFiatShamir:ProofAdversaryAllFaults}

Recall that in game $\gameSimulateFour$, faulty signatures are simulated
for all indices $i \in \lbrace 4, 5, 6, 7, 9 \rbrace$.
For adversaries against the $\UFfaultCMADifferentSet{\lbrace 4, 5, 6, 7, 9 \rbrace}$ security of $\SignatureScheme$,
the game derives all oracle answers by a call to one of the simulated oracles $\simSignature_i(m, \phi)$.
To prove \cref{lem:UFfaultCMA:AdversaryAllFaults}, observe that we can now extend
adversary $\Ad{B}$ defined in \cref{fig:UFfaultCMA:Adversary569} 
such that it is capable to perfectly simulate game $\gameSimulateFour$
by running the simulations, and simulating the random oracle to $\Ad{A}$, accordingly.
(I.e., $\Ad{B}$ runs $\Ad{A}$ with oracle access to $\ROChallenge'$ that is first set to $\ROChallenge$, and that gets reprogrammed, with $\Ad{B}$ keeping track of the classical queries to $\oracleFaultSIGN$.)

Again, $\Ad{A}$ can not win if $m^*$ was a query to $\oracleFaultSIGN$,
hence a valid signature is also valid in $\Ad{B}$'s $\UFCMAZero$ game and

\begin{align*}
	\Pr [ \gameSimulateFour^{\Ad{A}} \Rightarrow 1]
	\leq	\succfun^{\UFCMAZero}_{\FS[\IdScheme, \ROChallenge]}(\Ad{B}) \enspace .
\end{align*}

{%
}

\section{From $\UFfaultCMADifferentSet{}$ to $\UFnonceFaultCMADifferentSet{}$ (Proof of \cref{theorem:UFNonceFaultCMA})}%
\label{sec:HedgedFiatShamir:ProofNonceFCMA}

\newcounter{CounterFCMAtoNonceFCMA} %
\setcounter{CounterFCMAtoNonceFCMA}{1}

{
	\newcounter{RandomiseFCMAtoNonceFCMA}
	\setcounter{RandomiseFCMAtoNonceFCMA}{\theCounterFCMAtoNonceFCMA}
	\stepcounter{CounterNoMAtoCMA}
}

{
	\newcounter{ExtractFCMAtoNonceFCMA}
	\setcounter{ExtractFCMAtoNonceFCMA}{\theCounterFCMAtoNonceFCMA}
}

\newcounter{LastGameFCMAtoNonceFCMA}
\setcounter{LastGameFCMAtoNonceFCMA}{\theCounterFCMAtoNonceFCMA}
 
We now present a proof for \cref{theorem:UFNonceFaultCMA} which we repeat for convenience.
\thmfive*

In our proof %
we will use a quantum extraction argument from \cite{C:AmbHamUnr19}, which we now recall.

{\heading{One-way to Hiding as a query extraction argument.}
	In \cite[Theorem 3]{C:AmbHamUnr19}, Ambainis et al. generalised the query-extraction argument from \cite{EC:Unruh14}.
	In their generalisation, they considered a distinguisher $\Ad{D}$ that has quantum access
	to an oracle $\RO{O} \in \lbrace \RO{O_1}, \RO{O_2} \rbrace$
	such that oracles $\RO{O_1}$ and $\RO{O_2}$ coincide on all inputs except on some subset $S$.
	It was shown that the difference in behaviour of $\Ad{D}^\qRO{O_1}$ and $\Ad{D}^\qRO{O_2}$
	can be upper bounded in terms of the extractability of input elements $x \in S$.
	The following theorem is a simplified restatement of \cite[Theorem 3]{C:AmbHamUnr19}.

	\begin{theorem}\label{thm:O2HExtract}
		
		Let $X$ and $Y$ be sets, and let $S \subset X$ be random.
		Let $\RO{O_1}, \RO{O_2} \in Y^X$ be random functions
		such that $\RO{O_1}(x) = \RO{O_1}(x)$ for all $x \notin S$,
		and let $\inputVar$ be a random bitstring.
		$(S, \RO{O_1}, \RO{O_2}, \inputVar)$ may have an arbitrary joint distribution.
		Then, for all quantum algorithms $\Ad{D}$ issuing at most $q$ queries to $\RO{O}$,
		
		\begin{equation}
			| \Pr [1 \leftarrow \Ad{D}^{\qRO{O_1}}(\inputVar) ]
				- \Pr [1 \leftarrow \Ad{D}^{\qRO{O_2}}(\inputVar) ]|
			\leq 2 q \cdot \sqrt{p_{\FIND}} \nonumber \enspace ,
		\end{equation}
				
		where 
		\[ p_{\FIND} := Pr[x \in S : x \leftarrow \Ad{Ext}^{\qRO{O_1}} (\inputVar)] \enspace .\]
		
		The same result holds with $\Ad{Ext}^{\qRO{O_2}}$
		(instead of  $\Ad{Ext}^{\qRO{O_1}}$) in the definition of $p_{\FIND}$.

		\begin{figure}[h] \begin{center} \fbox{ \small
					
			\begin{minipage}[t]{5.5cm}
				
				\nicoresetlinenr
				\underline{\textbf{Extractor} $\Ad{Ext}^{\qRO{O}} (\inputVar)$}
				\begin{nicodemus}
					
					\item $j \uni \lbrace 1, \cdots, q_{\ROforHedging} \rbrace$
					
					\item Run $\Ad{D}^{\qRO{O}} (\inputVar)$
					until $j$th query to $\RO{O}$
					
					\item $x \leftarrow \Measure$ query input register
					
					\item \pcreturn $x$
					
				\end{nicodemus}	
				
			\end{minipage}
					
		}
		\end{center} \end{figure}
	
	\end{theorem}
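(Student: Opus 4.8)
The statement is a simplified restatement of \cite[Theorem 3]{C:AmbHamUnr19}, so the quickest route is to instantiate that theorem and discard the generalizations we do not need here. For completeness, though, the plan is to give a self-contained proof by the standard one-way-to-hiding hybrid argument. I would first fix a concrete choice of $(S, \RO{O_1}, \RO{O_2}, \inputVar)$ drawn from the joint distribution; since the randomness only enters at the very last averaging step, it suffices to establish the per-instance bound $|\Pr[1\leftarrow\Ad{D}^{\qRO{O_1}}(\inputVar)] - \Pr[1\leftarrow\Ad{D}^{\qRO{O_2}}(\inputVar)]| \le 2q\sqrt{p^{\mathrm{inst}}_{\FIND}}$ and then take expectations. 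To this end I would introduce hybrids $H_0,\dots,H_q$, where in $H_i$ the first $i$ queries of $\Ad{D}$ are answered by $\RO{O_1}$ and the remaining $q-i$ by $\RO{O_2}$, so that $H_q$ is the $\RO{O_1}$-experiment and $H_0$ the $\RO{O_2}$-experiment.

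The crucial observation is that the state $\ket{\phi_i}$ of $\Ad{D}$ (over its query, response, and workspace registers) immediately before its $(i+1)$-th query is identical in $H_i$ and $H_{i+1}$, because both answer the first $i$ queries with $\RO{O_1}$. The two hybrids differ only in how query $i+1$ is answered and then evolve under identical unitaries, so the final states satisfy $\|\ket{\mathrm{fin}(H_{i+1})} - \ket{\mathrm{fin}(H_i)}\| = \|(O - O')\ket{\phi_i}\|$, where $O,O'$ denote the query unitaries of $\RO{O_1},\RO{O_2}$. Since these unitaries act identically on basis states $\ket{x}$ with $x\notin S$, the difference annihilates that part of $\ket{\phi_i}$, giving $\|(O-O')\ket{\phi_i}\| \le 2\|P_S\ket{\phi_i}\|$ with $P_S := \sum_{x\in S}\ketbra{x}{x}$ the projector onto the $S$-part of the query-input register. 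Telescoping and the triangle inequality, together with the bound of the distinguishing advantage by the Euclidean distance of the final states, then yield $|\Pr[H_q]-\Pr[H_0]| \le 2\sum_{i=0}^{q-1}\|P_S\ket{\phi_i}\|$.

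Next I would convert this $\ell_1$-type sum into the finding probability. By Cauchy--Schwarz, $\sum_{i=0}^{q-1}\|P_S\ket{\phi_i}\| \le \sqrt{q}\,\bigl(\sum_{i=0}^{q-1}\|P_S\ket{\phi_i}\|^2\bigr)^{1/2}$. The key point is that $\ket{\phi_i}$ is exactly the state before the $(i+1)$-th query in an honest run of $\Ad{D}^{\qRO{O_1}}$, so $\|P_S\ket{\phi_i}\|^2$ is precisely the probability that measuring the query-input register there returns some $x\in S$. Because the extractor $\Ad{Ext}^{\qRO{O_1}}$ picks its query index $j\in\{1,\dots,q\}$ uniformly and measures, we get $p^{\mathrm{inst}}_{\FIND} = \frac1q\sum_{i=0}^{q-1}\|P_S\ket{\phi_i}\|^2$, so the sum of squares equals $q\,p^{\mathrm{inst}}_{\FIND}$ and hence $\sum_i\|P_S\ket{\phi_i}\| \le q\sqrt{p^{\mathrm{inst}}_{\FIND}}$. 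Combining with the previous paragraph gives the per-instance bound $2q\sqrt{p^{\mathrm{inst}}_{\FIND}}$.

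Finally I would average over the joint distribution of $(S,\RO{O_1},\RO{O_2},\inputVar)$: by the triangle inequality the left-hand side is at most $\mathbb{E}[2q\sqrt{p^{\mathrm{inst}}_{\FIND}}]$, and concavity of the square root (Jensen) upgrades this to $2q\sqrt{\mathbb{E}[p^{\mathrm{inst}}_{\FIND}]} = 2q\sqrt{p_{\FIND}}$, which is the claim; the variant with $\Ad{Ext}^{\qRO{O_2}}$ follows from the mirror-image hybrid that keeps $\RO{O_2}$ on the already-processed prefix. The main obstacle is not any single inequality but the bookkeeping that makes the correspondence between the hybrid intermediate states $\ket{\phi_i}$ and the extractor's measurement statistics \emph{exact}, so that Cauchy--Schwarz produces precisely the factor $\sqrt{q}\cdot\sqrt{q\,p_{\FIND}}$; getting the constant right, and ensuring that the per-instance-then-Jensen ordering is valid for an arbitrary joint distribution, is the part that requires the most care.
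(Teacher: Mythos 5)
Your proof is correct, but there is nothing in the paper to compare it against: the paper does not prove \cref{thm:O2HExtract} at all — it imports it, explicitly labelled ``a simplified restatement'' of \cite[Theorem 3]{C:AmbHamUnr19}, and uses it only as a black box in the proof of \cref{theorem:UFNonceFaultCMA}. What you have written is a self-contained reconstruction via the original one-way-to-hiding hybrid argument of \cite{EC:Unruh14}, and the quantitative steps all check out: the states before the $(i+1)$-th query coincide in adjacent hybrids; $(O-O')(\mathds 1-P_S)=0$ gives $\norm{(O-O')\ket{\phi_i}}\le 2\norm{P_S\ket{\phi_i}}$; the pure-state bound $|\Pr-\Pr|\le\norm{\ket{\psi}-\ket{\psi'}}$ (via $1-|c|^2\le 2(1-\mathrm{Re}\,c)$ for $c$ the overlap) makes the telescoping sum $2\sum_i\norm{P_S\ket{\phi_i}}$; and the exact identity $p^{\mathrm{inst}}_{\FIND}=\frac 1 q\sum_{i=0}^{q-1}\norm{P_S\ket{\phi_i}}^2$ — which holds because your $\ket{\phi_i}$ are precisely the intermediate states of an honest $\Ad{D}^{\qRO{O_1}}$ run — turns Cauchy--Schwarz into exactly $2q\sqrt{p^{\mathrm{inst}}_{\FIND}}$, after which Jensen legitimately handles the arbitrary joint distribution and the mirror hybrid yields the $\Ad{Ext}^{\qRO{O_2}}$ variant. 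In a full write-up you should make explicit two standard conventions you use implicitly: that $\Ad{D}$ may be assumed unitary between queries (purify intermediate measurements, so that the extractor's measurement statistics really equal $\norm{P_S\ket{\phi_i}}^2$), and that $\Ad{D}$ is padded to make exactly $q$ queries. For comparison, the cited source \cite{C:AmbHamUnr19} develops the more modular semi-classical (punctured-oracle) machinery, which additionally yields refinements in terms of query depth and the semi-classical finding event; your direct argument is more elementary and recovers precisely the stated total-query-count bound $2q\sqrt{p_{\FIND}}$, constant included. (Incidentally, the $q_{\ROforHedging}$ in the extractor figure and the condition ``$\RO{O_1}(x)=\RO{O_1}(x)$'' are instantiation artifacts and a typo in the paper — the latter should read $\RO{O_1}(x)=\RO{O_2}(x)$ — not issues with your argument.)
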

}

{} 
We now proceed to the proof of \cref{theorem:UFNonceFaultCMA}.
Let $\Ad{A}$ be an adversary against the $\UFnonceFaultCMADifferentSet{\SetFaultFunctions'}$ security
of $\SigSchemeHedged = \TrafoHedging [\SignatureScheme, \ROforHedging]$
for $\SetFaultFunctions' := \SetFaultFunctions \cup \lbrace 1 \rbrace$,
issuing at most $q_{S}$ queries to $\oracleNonceFaultSIGN$,
at most $q_\ROChallenge$ queries to $\ROChallenge$,
and at most $q_\ROforHedging$ queries to $\ROforHedging$.
In the random oracle model, the proof would work as follows: Either $\ROforHedging$ is never queried on any faulted version of $\sk$, or it is.
In the case that such query does not exist, the $\UFnonceFaultCMADifferentSet{\SetFaultFunctions'}$ experiment is completely simulatable by a reduction against the $\UFfaultCMA$ security of the underlying scheme $\SignatureScheme$,
as the signing randomness looks uniformly random to the adversary.
(Note that we made the assumption that $\Ad{A}$ issues no query $(m, \nonce)$ to $\oracleNonceFaultSIGN$ more than once.)
In the case that such a query $\phi(\sk)$ exists, it can be used to break $\UFfaultCMA$ security by going over all possible secret key candidates,
i.e., by going over all bit-flip functions, and checking whether any of those candidates can be used to generate a valid signature.

In principle, our QROM proof does the same. Consider the sequence of games given in \cref{fig:Games:UFNonceFaultCMA}:
We decouple the signing randomness from the secret key in game $G_1$.
Again, game $G_1$ can be simulated by a reduction $\Ad{B_1}$ against the $\UFfaultCMA$ security of the underlying scheme $\SignatureScheme$.
To upper bound the distance between games $G_0$ and $G_1$, we will use \cref{thm:O2HExtract}.
(In order to give a more detailed description of how \cref{thm:O2HExtract} can be used, we ``zoom in'' and give two intermediate helper games $G_{\nicefrac{1}{3}}$ and $G_{\nicefrac{2}{3}}$.)
Applying \cref{thm:O2HExtract}, we can upper bound the distance between games $G_0$ and $G_1$ in terms of the probability that measuring a random query to $\ROforHedging$ yields $\phi(\sk)$.
We then give a reduction $\Ad{B_2}$ that wins whenever the latter happens, with the same strategy as in the ROM sketch.

\begin{figure}[h] \begin{center} \fbox{ \small
			
	\begin{minipage}[t]{6cm}
		
		\nicoresetlinenr
		\underline{\textbf{Games} $G_0$ - $G_{1}$}
		\begin{nicodemus}
			
			\item $(\pk, \sk) \leftarrow \IG(\param)$
			
			\item $(m^*, \sigma^*) \leftarrow \Ad{A}^{\oracleNonceFaultSIGN, \qRO{\ROChallenge}, \qRO{\ROforHedging}} (\pk)$
			
			\item \pcif $m^* \in \ListOfMessages$ \pcreturn 0
			
			\item Parse $(\commitment^*, \response^*) := \sigma^*$
			
			\item $\challenge^* := \ROChallenge(\commitment^*, m^*)$
			
			\item \pcreturn $\VerifyId(\pk, \commitment^*, \challenge^*, \response^*)$
			
		\end{nicodemus}
		
{%
}		
	\end{minipage}
			
	\;
	
	\begin{minipage}[t]{5.5cm}
		
		\underline{$\oracleNonceFaultSIGN(m, \nonce, i \in\SetFaultFunctions', \phi)$}
		
		\begin{nicodemus}
			
			\item \pcif $i=1$
				
				\item \quad $f_1 := \phi$ 
				
				\item \quad $r := \ROforHedging(f_1(\sk), m, \nonce)$ \gcom{$G_0$}
				
				\item \quad $r \uni \RSpaceSign$ \label{line:UFNonceFaultCMA:Rerandomise1}\gcom{$G_1$}
				
				\item \quad $\sigma \leftarrow \getSignature(m, r, 2, \id)$
			
			\item \pcelse
			
				\item \quad $r := \ROforHedging(\sk, m, \nonce)$ \gcom{$G_0$}
				
				\item \quad $r \uni \RSpaceSign$ \label{line:UFNonceFaultCMA:Rerandomise2}\gcom{$G_1$}
				
				\item \quad $\sigma \leftarrow \getSignature(m, r, i, \phi)$
			
			\item \pcreturn $\sigma$
			
		\end{nicodemus}
	
		\ \\
		
		\underline{$\getSignature(m, r, i, \phi)$}
		
		\begin{nicodemus}
			
			\item $f_i := \phi$ and $f_j := \id \forall j \neq i$
			
			\item $(\commitment, \state) \leftarrow \Commit(\sk; r)$
			
			\item $(\commitment, \state) := f_4(\commitment, \state)$
			
			\item $(\hat{\commitment}, \hat{m}, \hat{pk}) := f_5(\commitment, m, pk)$
			
			\item $\challenge := f_6(\ROChallenge(\hat{\commitment},  \hat{m}, \hat{pk}))$
			
			\item $\response \leftarrow \Respond(f_7(\sk, c, \state))$
			
			 \item $\ListOfMessages := \ListOfMessages \cup \lbrace \hat{m} \rbrace$
			
			\item \pcreturn $\sigma := f_9(\commitment, \response)$
			
		\end{nicodemus}
		
	\end{minipage}

}
	\end{center}
	\caption{
		Games $G_0$ - $G_{1}$ for the proof of \cref{theorem:UFNonceFaultCMA}.
		Helper method $\getSignature$ is internal and cannot be accessed directly by $\Ad{A}$.
	}
	\label{fig:Games:UFNonceFaultCMA}
\end{figure}

\heading{Game $G_0$.}
{
	The (purely conceptual) difference between game $G_0$ and the original $\UFnonceFaultCMADifferentSet{}$ game is that after computing the signing randomness according to $\SigSchemeHedged$, 
	we outsource the rest of the signature computation to helper method $\getSignature$.
	In the case that $i=1$, $\getSignature$ is executed with index 2 and $\id$, as the rest of the signature generation is unfaulted.
	
	\[ \succfun^{\UFnonceFaultCMADifferentSet{\SetFaultFunctions'}}_{\SigSchemeHedged}(\Ad{A})
		= \Pr [ G_0^{\Ad{A}} \Rightarrow 1]\enspace.\]
}

\heading{Game $G_{1}$.}
{
	In game $G_{1}$,
	we re-randomise the $\Commit$ algorithm by letting $r \uni \RSpaceSign$
	instead of $r := \ROforHedging(f_1(\sk), m, \nonce)$,
	see lines~\ref{line:UFNonceFaultCMA:Rerandomise1} and \ref{line:UFNonceFaultCMA:Rerandomise2}.
	To upper bound $\Pr [ G_1^{\Ad{A}} \Rightarrow 1]$, consider $\UFfaultCMA$ Adversary $\Ad{B_1}$
	given in \cref{fig:UFNonceFaultCMA:AdversaryFaultCMA:1}.
	Adversary $\Ad{B_1}$ has access to the faulty signing oracle $\oracleFaultSIGN$ that is provided by game
	$\UFfaultCMA$, and that covers all faults except the ones that would have occurred with respect to index 1, i.e., the ones that fault the secret key as input to $\ROforHedging$.
	Due to our change described above, however, randomness $r$ is drawn independently of $\sk$ in game $G_1$,
	hence the $\Commit$ algorithm is randomised.
	The output of $\oracleFaultSIGN$ therefore allows $\Ad{B_1}$ to perfectly simulate game $G_1$ to $\Ad{A}$.
	Furthermore, any valid forgery game $G_1$ is also a valid forgery in $\Ad{B_1}$'s $\UFfaultCMA$ game.
	Hence,
	\[ \Pr [ G_1^{\Ad{A}} \Rightarrow 1]
		\leq \succfun^{\UFfaultCMA}_{\SignatureScheme}(\Ad{B_1}) 
	\enspace.\]
	
	\begin{figure}[h] \begin{center} \fbox{ \small
				
		\begin{minipage}[t]{6cm}
			
			\nicoresetlinenr
			\underline{\textbf{Adversary} $\Ad{B_1}^\qRO{{\ROChallenge}}(\pk)$}
			\begin{nicodemus}
				
				\item $(m^*, \sigma^*) \leftarrow \Ad{A}^{\oracleNonceFaultSIGN, \qRO{\ROChallenge}, \qRO{\ROforHedging}} (\pk)$
				
				\item \pcreturn $(m^*, \sigma^*)$
				
			\end{nicodemus}		
						
		\end{minipage}
		
		\;
				
		\begin{minipage}[t]{4.9cm}
			
			\underline{$\oracleNonceFaultSIGN(m, \nonce, i \in\SetFaultFunctions', \phi)$}
			\begin{nicodemus}
				
				\item \pcif $i = 1$
				
					\item \quad $\sigma \leftarrow \oracleFaultSIGN(m, 2, \id)$
				
				\item \pcelse $\sigma \leftarrow \oracleFaultSIGN(m, i, \phi)$

				\item \pcreturn $\sigma$
				
			\end{nicodemus}
			
		\end{minipage}
				
	}
		\end{center}
		\caption{$\UFfaultCMA$ Adversary $\Ad{B_1}$,
			with access to its own faulty signing oracle $\oracleFaultSIGN$,
			for the proof of \cref{theorem:UFNonceFaultCMA}.
		}
		\label{fig:UFNonceFaultCMA:AdversaryFaultCMA:1}
		
	\end{figure}
	
	It remains to upper bound $\gameDist{LastGameFCMAtoNonceFCMA}{A}$.
	To this end, we will make use of the query extraction variant of one-way to hiding (see \cref{thm:O2HExtract}).
	In order to keep our proof as accessible as possible, we introduce intermediate helper games $G_{\nicefrac{1}{3}}$ and $G_{\nicefrac{2}{3}}$
	in \cref{fig:Games:UFNonceFaultCMA:IntermediateForO2H}.

	\begin{figure}[t] \begin{center} \fbox{ \small
				
		\begin{minipage}[t]{5.4cm}
			
			\nicoresetlinenr
			\underline{\textbf{Games} $G_{\nicefrac{1}{3}}$ - $G_{1}$}
			\begin{nicodemus}
				
				\item $(\pk, \sk) \leftarrow \IG(\param)$
				
				\item $\RO{O} := \ROforHedging'$\gcom{$G_{\nicefrac{1}{3}}$} 	
					\label{line:UFNonceFaultCMA:ReplaceOracle} 
				
				\item $\RO{O} := \ROforHedging$ \gcom{$G_{\nicefrac{2}{3}}$-$G_{1}$}
				
				\item $(m^*, \sigma^*) \leftarrow \Ad{A}^{\oracleNonceFaultSIGN, \qRO{\ROChallenge}, \qRO{\RO{O}}} (\pk)$ 
					\label{line:UFNonceFaultCMA:ReplaceOracleAccess} 
				
				\item \pcif $m^* \in \ListOfMessages$ \pcreturn 0
				
				\item Parse $(\commitment^*, \response^*) := \sigma^*$
				
				\item $\challenge^* := \ROChallenge(\commitment^*, m^*)$
				
				\item \pcreturn $\VerifyId(\pk, \commitment^*, \challenge^*, \response^*)$
				
			\end{nicodemus}		
			
			\ \\
			
			\underline{\textbf{Extractor} $\Ad{E}^{\qRO{O}, \qRO{H}} (\pk, \sk, \List{\ROforHedging'})$}
			\begin{nicodemus}
				
				\item $j \uni \lbrace 1, \cdots, q_{\ROforHedging} \rbrace$
				
				\item Run $\Ad{A}^{\oracleNonceFaultSIGN, \qRO{\ROChallenge}, \qRO{O}} (\pk)$\\
					\text{\quad }
					until $j$th query to $\RO{O}$
				
				\item $(\sk', m, \nonce) \leftarrow \Measure$ query\\
					\text{\quad }input reg.
				
				\item \pcreturn $\sk'$
				
			\end{nicodemus}	
		
		\end{minipage}

		\;
		
		\begin{minipage}[t]{6cm}
			
			\underline{$\oracleNonceFaultSIGN(m, \nonce, i \in\SetFaultFunctions', \phi)$}
			
			\begin{nicodemus}
				
				\item \pcif $i=1$
				
					\item \quad $f_1 := \phi$ 
					
					\item \quad $r := \ROforHedging'(f_1(\sk), m, \nonce)$ \label{line:UFNonceFaultCMA:ReplaceOracleUsage1}
					\gcom{$G_{\nicefrac{1}{3}}$, $G_{\nicefrac{2}{3}}$, $\Ad{E}$}
					
					\item \quad $r \uni \RSpaceSign$ \gcom{$G_1$}
					
					\item \quad $\sigma \leftarrow \getSignature(m, r, 2, \id)$
				
				\item \pcelse
				
					\item \quad $r := \ROforHedging'(\sk, m, \nonce)$ \label{line:UFNonceFaultCMA:ReplaceOracleUsage2}
					\gcom{$G_{\nicefrac{1}{3}}$, $G_{\nicefrac{2}{3}}$, $\Ad{E}$}
					
					\item \quad $r \uni \RSpaceSign$ \gcom{$G_1$}
					
					\item \quad $\sigma \leftarrow \getSignature(m, r, i, \phi)$
				
				\item \pcreturn $\sigma$
				
			\end{nicodemus}
	
	\end{minipage}
	
	}
	\end{center}
		\caption{Intermediate helper games $G_{\nicefrac{1}{3}}$ and $G_{\nicefrac{2}{3}}$,
			justifying the game-hop from game $G_0$ to $G_1$, and query extractor $\Ad{E}$.
			Alternative oracle $\ROforHedging'$
			(see lines~\ref{line:UFNonceFaultCMA:ReplaceOracle}, \ref{line:UFNonceFaultCMA:ReplaceOracleUsage1} and~\ref{line:UFNonceFaultCMA:ReplaceOracleUsage2})
			is constructed by letting $\ROforHedging'(\sk', m, n) := \ROforHedging(\sk', m, n)$ for all input $(\sk',m,n)$ such that $\sk'$ cannot result from faulting $\sk$, and completing $\ROforHedging'$ randomly.
			Helper method $\getSignature$ %
			remains as in \cref{fig:Games:UFNonceFaultCMA}.
		}
		\label{fig:Games:UFNonceFaultCMA:IntermediateForO2H}
	\end{figure}
	
	As a preparation, we first consider intermediate game $G_{\nicefrac{1}{3}}$,
	in which we completely replace random oracle $\ROforHedging$
	with another random oracle $\ROforHedging'$ (see lines~\ref{line:UFNonceFaultCMA:ReplaceOracle}, \ref{line:UFNonceFaultCMA:ReplaceOracleUsage1} and~\ref{line:UFNonceFaultCMA:ReplaceOracleUsage2}),
	where $\ROforHedging'$ is defined as follows:
	Let $\List{\sk}$ denote the set of secret keys that could occur by faulting the secret key with a one-bit fault injection.
	We let $\ROforHedging'$ concur with $\ROforHedging$ for all inputs such that the input secret key is not in $\List{\sk}$, i.e., for all $\sk' \notin \List{\sk}$ and all $(m, \nonce) \in \MSpace \times \NonceSpace$,
	we let $\ROforHedging'(\sk', m, \nonce) := \ROforHedging(\sk', m, \nonce)$.
	We can then complete it to a random oracle on $\SKSpace \times \MSpace \times \NonceSpace$
	by picking another random oracle $\ROforHedging'' : \List{\sk}\times \MSpace \times \NonceSpace$,
	and letting $\ROforHedging'(\sk', m, \nonce) := \ROforHedging''(\sk', m, \nonce)$ for all $\sk' \in \List{\sk}$ and all $(m, \nonce) \in \MSpace \times \NonceSpace$.
	Since $\ROforHedging'$ still is a random oracle, and since we also use $\ROforHedging'$ to derive the signing randomness, this change is purely conceptual and
	
	\[\Pr [G_0^{\Ad{A}} \Rightarrow 1]  = \Pr [G_{\nicefrac{1}{3}}^{\Ad{A}} \Rightarrow 1] \enspace.\]
	
	In game $G_{\nicefrac{2}{3}}$, we prepare to rid the randomness generation of the secret key:
	We switch back to providing $\Ad{A}$ with oracle access to the original random oracle $\ROforHedging$,
	but we keep using $\ROforHedging'$ to derive the signing randomness.
	After this change, oracle $\ROforHedging'$ is not directly accessible by $\Ad{A}$ anymore,
	but only indirectly via the signing queries.
	Since we assume that $\Ad{A}$ issues no query $(m, \nonce)$ to $\oracleNonceFaultSIGN$ more than once,
	we can also replace these values with freshly sampled randomness as in game $G_1$, i.e.,
	
	\[\Pr [G_{\nicefrac{2}{3}}^{\Ad{A}} \Rightarrow 1]  = \Pr [G_1^{\Ad{A}} \Rightarrow 1] \enspace.\]

	So far, we have shown that
	\[ \succfun^{\UFnonceFaultCMADifferentSet{\SetFaultFunctions'}}_{\SigSchemeHedged}(\Ad{A})
	 	\leq \succfun^{\UFfaultCMA}_{\SignatureScheme}(\Ad{B_1})
	 	+ |\Pr [G_{\nicefrac{1}{3}}^{\Ad{A}} \Rightarrow 1]-\Pr[G_{\nicefrac{2}{3}}^{\Ad{A}} \Rightarrow 1]|
	 	\enspace.\]
	
	In order to upper bound
	$|\Pr [G_{\nicefrac{1}{3}}^{\Ad{A}} \Rightarrow 1]-\Pr[G_{\nicefrac{2}{3}}^{\Ad{A}} \Rightarrow 1]|$,
	we invoke \cref{thm:O2HExtract}:
	Distinguishing between the two games can be reduced to extracting one of the faulted secret keys from the queries to $\RO{G}$.
	To make this claim more formal, consider the query extractor $\Ad{E}$ from  \cref{thm:O2HExtract}, whose explicit description we give in \cref{fig:Games:UFNonceFaultCMA:IntermediateForO2H}.
	Extractor $\Ad{E}$ is run with access to oracle $\RO{O} \in \lbrace \ROforHedging, \ROforHedging' \rbrace$,
	which it will forward to $\Ad{A}$.
	It runs $\Ad{A}$ until $\Ad{A}$'s $i$th oracle query to $\RO{O}$,
	measures the query input register,
	and thereby obtains a triplet $(\sk', m, \nonce)$ of classical input values.
	Since we are only interested in points where $\ROforHedging$ and $\ROforHedging'$ differ,
	it is sufficient to let $\Ad{E}$ output the secret key candidate $\sk'$.
	Note that $\Ad{E}$ is able to simulate the signing oracle regardless of which oracle $\RO{O}$ it has access to:
	Recall that \cref{thm:O2HExtract} makes no assumption on the runtime of the query extractor, nor on the size of its input.
	Hence, the alternative oracle $\ROforHedging'$ can simply be encoded as part of the extractor's input,
	which we denote by adding $\List{\ROforHedging'}$ to $\Ad{E}$'s input.
	Since $\Ad{E}$ perfectly simulates game $G_{\nicefrac{1}{3}}$ if $\RO{O} = \ROforHedging'$,
	and game $G_{\nicefrac{1}{3}}$ if $\RO{O} = \ROforHedging$,
	\cref{thm:O2HExtract} yields
	
	\[|\Pr [G_{\nicefrac{1}{3}}^{\Ad{A}} \Rightarrow 1]-\Pr[G_{\nicefrac{2}{3}}^{\Ad{A}} \Rightarrow 1]|
		\leq 2q_{\ROforHedging}
				\cdot \sqrt{\Pr[\sk' \in \List{\sk} :
						\sk' \leftarrow \Ad{E}^{\qRO{\ROforHedging}, \qRO{H}} (\pk, \sk, \ROforHedging')]}
		\enspace .
	\]
	
	It remains to bound the success probability of the extractor $\Ad{E}$. At this point, the signing randomness is independent of $\ROforHedging$.
	We can hence also replace $\Ad{E}$ with an extractor $\Ad{E'}$ that uses freshly sampled randomness to sign,
	without any change in the extraction probability.
	(Again, we require that $\Ad{A}$ issues no query $(m, \nonce)$ to $\oracleNonceFaultSIGN$ more than once.)

	To bound the success probability of $\Ad{E'}$, consider $\UFfaultCMA$ Adversary $\Ad{B_2}$, which is given in \cref{fig:UFNonceFaultCMA:AdversaryFaultCMA:2}.
	Like $\Ad{B_1}$, adversary $\Ad{B_2}$ has access to the faulty signing oracle $\oracleFaultSIGN$ provided by game $\UFfaultCMA$, and it uses $\oracleFaultSIGN$ to answer signing queries.
	$\Ad{B_2}$ perfectly simulates the view of $\Ad{A}$ when $\Ad{A}$ is run by extractor $\Ad{E'}$,
	and the probability that $\Ad{E'}$ returns some $\sk' \in \List{\sk}$ is hence exactly the probability that 
	$\Ad{B_2}$ obtains some $\sk' \in \List{\sk}$ by measuring in line~\ref{line:UFNonceFaultCMA:B2Measures}.
	After running $\Ad{A}$ until the $j$th query to $\ROforHedging$,
	and extracting a secret key candidate $\sk'$ from this query,
	$\Ad{B_2}$ computes the list $\List{\sk'}$ of candidate secret keys that could occur
	by faulting $\sk'$ with a one-bit fault injection (including the identity function).
	Since bit flips are involutory, and set-bit functions can be reversed by set-bit functions, 
	$\sk' \in \List{\sk}$ iff $\sk \in \List{\sk'}$.
	Hence, if $\Ad{B_2}$ obtains some $\sk' \in \List{\sk}$ by measuring, then $\Ad{B_2}$ will encounter $\sk$ during execution of its loop and therefore generate a valid signature.
	\[ \Pr[\sk' \in \List{\sk} :
	\sk' \leftarrow \Ad{E'}^{\qRO{\ROforHedging}, \qRO{H}} (\pk, \sk, \ROforHedging')] \leq \succfun^{\UFfaultCMA}_{\SignatureScheme}(\Ad{B_2}) \enspace.\]
	
	\begin{figure}[t] \begin{center} \fbox{ \small
				
		\begin{minipage}[t]{6cm}
			
			\nicoresetlinenr
			
			\underline{\textbf{Adversary} $\Ad{B_2}^\qRO{{\ROChallenge}}(\pk)$}
			\begin{nicodemus}
				
				\item $j \uni \lbrace 1, \cdots, q_{\ROforHedging} \rbrace$
				
				\item Run $\Ad{A}^{\oracleNonceFaultSIGN, \qRO{\ROChallenge}, \qRO{\ROforHedging}} (\pk)$\\
				\text{\quad }
				until $j$th query to $\ROforHedging$
				
				\item $\sk' \leftarrow \Measure$ query input register \label{line:UFNonceFaultCMA:B2Measures}
				
				\item $m^* \uni \MSpace \setminus \ListOfMessages'$
				
				\item \pcfor $sk'' \in \List{\sk'}$\label{line:UFNonceFaultCMA:CandidatesSK}
				
				\item \quad $\sigma \leftarrow \Sign(\sk'', m)$
				
				\item \quad \pcif $\VerifySig(m, \sigma) = 1$
				
				\item \quad \quad \pcreturn $(m, \sigma)$
				
				\item $\pcreturn \bot$
				
			\end{nicodemus}	
			
		\end{minipage}
		
		\;
				
		\begin{minipage}[t]{4.9cm}
			
			\underline{$\oracleNonceFaultSIGN(m, \nonce, i \in\SetFaultFunctions', \phi)$}
			\begin{nicodemus}
				
				\item \pcif $i = 1$
				
					\item \quad $\sigma \leftarrow \oracleFaultSIGN(m, 2, \id)$
				
				\item \pcelse $\sigma \leftarrow \oracleFaultSIGN(m, i, \phi)$
				
				\item \pcif $i = 5$ \pcand $\phi$ affects $m$
				
					\item \quad $\ListOfMessages' := \ListOfMessages' \cup \lbrace \phi_m(m) \rbrace$
				
				\item \pcelse $\ListOfMessages' := \ListOfMessages' \cup \lbrace m \rbrace$
				
				\item \pcreturn $\sigma$
				
			\end{nicodemus}
			
		\end{minipage}
				
	}
		\end{center}
		\caption{$\UFfaultCMA$ Adversary $\Ad{B_2}$,
			with access to its own faulty signing oracle $\oracleFaultSIGN$,
			for the proof of \cref{theorem:UFNonceFaultCMA}.
			List $\List{\sk'}$ (see line~\ref{line:UFNonceFaultCMA:CandidatesSK}) denotes the list of secret keys that could occur by faulting $\sk'$ with a one-bit fault injection.
		}
		\label{fig:UFNonceFaultCMA:AdversaryFaultCMA:2}
		
	\end{figure}

}

\section{A matching attack - details}
\label{ap:attack}
We now provide the proof of \Cref{lem:basic-attack} which refers to \cref{fig:attack}. We first repeat both for convenience.
\lemmatwo*

\begin{reusefigure}{fig:attack}
\begin{center} \makebox[\textwidth][c]{\fbox{
			
	\nicoresetlinenr	
	
	\begin{minipage}[t]{7cm}
		
		\underline{Before potential reprogramming:}
		\begin{nicodemus}
		    \item Prepare registers $XY$ in $\frac{1}{\sqrt{2^n}}\sum_{x \in  [2^n]}\ket{x, 0}_{XY}$
		    \item Query $\RO{O}$ using registers $XY$
		    \item \pcfor $i = 0,...,q-1$:
			    \item  \quad Apply $\sigma$ on register $X$
			    \item  \quad Query $\RO{O}$ using registers $XY$
		\end{nicodemus}
		
	\end{minipage}
	\begin{minipage}[t]{6cm}
		
		\underline{After potential reprogramming:}
		\begin{nicodemus}
		    \item Query ${\RO{O}'}$ using using registers $XY$
		    \item \pcfor $i = q-2,...,0$:
			    \item \quad Apply $\sigma^{-1}$ on register $X$
			    \item \quad Query ${\RO{O}'}$ using registers $XY$
		    \item Measure $X$ according to $\{\Pi_0, \Pi_1\}$
		    \item Output $b$ if the state projects onto $\Pi_b$.
		\end{nicodemus}
		
	\end{minipage}
}}\end{center}
	\ifnum\tightOnSpace=1 \vspace{-0.4cm} \fi
	\caption{Distinguisher for a single reprogrammed point.}
	\label{fig:attack:appendix}
\end{reusefigure}
Recall that our distinguisher makes $q$ queries to $\RO{O}$, the oracle before the potential reprogramming, and $q$ queries to $\RO{O}'$, the oracle after the potential reprogramming. In our attack, we fix an arbitrary cyclic permutation $\sigma$ on $[2^n]$, and for the fixed reprogrammed point $x^*$, we define \ifeprint the set \fi $S = \{x^*, \sigma^{-1}(x^*),...,\sigma^{-q+1}(x^*)\}$, $\overline{S} = \{0,1\}^n \setminus S$,
$\Pi_0 = \frac{1}{2}\left( \ket{S} + \ket{\overline{S}}\right)\left( \bra{S} + \bra{\overline{S}}\right)$ and $\Pi_1 = I - \Pi_0$.

\begin{proof}[Proof of \Cref{lem:basic-attack}]

{%
}

First, notice that before measuring the final state, the value of the register $XY$ is 
\begin{align}\label{eq:state-distinguiser}
	\ket{\Psi}_{XY} := 
		\frac{1}{\sqrt{2^n}}
		\sum_{x}\ket{x}\ket{\bigoplus_{j = 0}^{q} \left(\RO{O}\left(\sigma^{j}(x)\right) \oplus \RO{O}'\left(\sigma^{j}(x)\right) \right)}
	\enspace . 
\end{align}
Let us consider the case where $\RO{O}(x^*)= \RO{O}'(x^*)$, i.e., when the random oracle was not reprogrammed, or reprogrammed to the same value.
We then have that 
\begin{align*}
	\ket{\Psi}_{XY}
    = \frac{1}{\sqrt{2^n}}\sum_{x}\ket{x}\ket{0} \enspace ,
\end{align*}
and in this case, the probability that the distinguisher outputs $0$ is
\begin{align}
  &
  \left|\left(\frac{1}{\sqrt{2}}\left( \bra{S} + \bra{\overline{S}}\right)\right)\left(
  \frac{1}{\sqrt{2^n}}\sum_{x}\ket{x} \right)\right|^2 \nonumber 
  \\
  &= \left(\frac{q}{\sqrt{2^{n+1}q}} + \frac{2^{n}-q}{\sqrt{2^{2n+1} - 2^{n+1}q}}\right)^2 \nonumber \\
  & =
  \frac{q}{2^{n+1}} + \frac{2^{2n} - 2^{n+1}q + q^2}{2^{2n+1} - 2^{n+1}q}
  + \frac{q(2^{n}-q)}{\sqrt{2^{n}q}\sqrt{2^{2n} - 2^{n}q}} \nonumber \\
  &\geq
  \frac{2^{2n}}{2^{2n+1} - 2^{n+1}q} +  \frac{\sqrt{q}}{\sqrt{2^n - q}}\left(1 - \frac{q}{2^n} - \frac{\sqrt{q}}{\sqrt{2^n - q}}\right) \nonumber \\
  &\geq \frac{1}{2}  + 
  \frac{\sqrt{q}}{2\sqrt{2^n}} 
  \enspace , \label{eq:prob1-attack}
\end{align}
where we removed some positive terms in the first inequality, and in the second inequality we used the fact that $q < 2^{n-3}$.

On the other hand, if $\RO{O}(x^*) \ne \RO{O}'(x^*)$, then the final state is
\[
	\ket{\Psi}_{XY} = \sum_{x \in S} \ket{x}\ket{\RO{O}(x^*) \oplus \RO{O}'(x^*)} + \sum_{x \in \overline{S}} \ket{x}\ket{0} \enspace ,
\]
and when we trace out the output qubit we have the mixed state
\[
  \frac{q}{2^n} \kb{S} + \frac{2^n -q}{2^n} \kb{\overline{S}} \enspace .
\]

Since
\[
  \left(\frac{1}{\sqrt{2}}\left( \bra{S} + \bra{\overline{S}}\right)\right)
  \ket{S} = 
  \left(\frac{1}{\sqrt{2}}\left( \bra{S} + \bra{\overline{S}}\right)\right)
  \ket{\overline{S}} =
  \frac{1}{\sqrt{2}} \enspace,
\]
we have that the distinguisher then outputs $0$ with probability 
\begin{align}
\frac{1}{2} .\label{eq:prob2-attack}
\end{align}

The advantage of the distiguisher is therefore
\begin{align}
	&\Pr[\Ad{D}(\cdot) =  0 | \text{no reprogramming}] - \Pr[\Ad{D}(\cdot) =  0  | \text{reprogramming}] \nonumber
	\\
	&= \Pr[\Ad{D}(\cdot) =  0 |  \RO{O}(x^*)=\RO{O}'(x^*)] - \frac{1}{2^m} \Pr[\Ad{D}(\cdot) =  0  | \RO{O}(x^*)=\RO{O}'(x^*)] \nonumber \\
	& \quad - (1 - \frac{1}{2^m}) \Pr[\Ad{D}(\cdot) =  0 | \RO{O}(x^*)\ne \RO{O}'(x^*)] \nonumber \\
	&\geq  (1 - \frac{1}{2^m}) \cdot\frac{\sqrt{q}}{2\sqrt{2^n}}  \geq \frac{\sqrt{q}}{4\sqrt{2^n}} \enspace , \nonumber
\end{align}
where the terms after the first equality correspond to the probability that the distinguisher outputs $0$ when the function was not reprogrammed,
when the function was reprogrammed on $x^*$ for $\RO{O}'(x^*) \ne \RO{O}(x)$,
and when the function was reprogrammed on $x^*$, but $\RO{O}(x^*) = \RO{O}'(x^*)$, respectively.
The following inequality holds due to \Cref{eq:prob1-attack,eq:prob2-attack},
and the last inequality holds since $m \geq 1$.

It remains to show that the attack can be performed efficiently.
The only step of the distinguisher whose efficiency is not straightforward is the measurement on the basis $(\Pi_0,\Pi_1)$.
(As an example for an efficiently implementable cyclic permutation, consider the operation of adding $1\mod 2^n$).

In order to prove the efficiency of the attack, we first show how to construct the state $\ket{S}$ and $\eps$-approximate the state $\ket{\overline{S}}$  in time $O(q\log{\frac{1}\eps})$.

To create the state $\ket{S}$, we first create the superposition $\frac{1}{\sqrt{q}} \sum_{i = 1}^q \ket{i}\ket{\sigma^i(x^*)}$ and then erase the first register.
Notice that this can be done with $O(q)$ queries to $\sigma$ and the result is \ifeprint$$\else$\fi\frac{1}{\sqrt{q}} \sum_{i = 1}^q \ket{0}\ket{\sigma^i(x^*)} = \ket{0}\ket{S}.\ifeprint$$\else$\fi

This procedure not only gives us a circuit to construct $\ket{S}$, but also a circuit that perfectly uncomputes it (by just running the circuit backwards with appropriate number of auxiliary qubits).

In order to $\eps$-approximate $\ket{\overline{S}}$, we can use the following lemma from \cite{EC:AlaMajRus20}:
\begin{lemma}[Restatement of Lemma $3$ of \cite{EC:AlaMajRus20}]
	Let $S \subseteq \{0,1\}^n$ and $U_S$ a circuit that uncomputes $\ket{S}$ and $\eps > 0$.  There exists a quantum algorithm $\mathcal{P}^{U_S}$ that runs in time $\poly(|S|,\log\frac{1}{\eps})$ that satisfies
	\[
	\norm{\mathcal{P}^{U_S}\ket{0}\ket{0} - \ket{\overline{S}}\ket{0}}^2 \leq  \eps.
	\]
\end{lemma}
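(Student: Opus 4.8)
The plan is to realize $\ket{\overline S}$ as an exactly known linear combination of two efficiently preparable states and then purify that combination by amplitude amplification. Write $N = 2^n$ and $s = |S|$, and let $\ket{u} = H^{\otimes n}\ket{0^n} = N^{-1/2}\sum_{x}\ket{x}$ be the uniform superposition. Since $\ket{u}$ lies in the plane $\mathrm{span}\{\ket{S},\ket{\overline S}\}$, we may write $\ket{u} = \sin\theta\,\ket{S} + \cos\theta\,\ket{\overline S}$ with $\sin\theta = \sqrt{s/N}$ and $\cos\theta = \sqrt{(N-s)/N}$. Because $s$ is part of the input, $\theta$ is known exactly, and the identity $\ket{u} - \sin\theta\,\ket{S} = \cos\theta\,\ket{\overline S}$ exhibits the (subnormalized) target as a fixed combination of $\ket{u}$ and $\ket{S}$. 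The state $\ket{u}$ is produced by $H^{\otimes n}$, and $\ket{S}$ by $U_S^\dagger$ (running the given uncomputation circuit, which sends $\ket{S}\mapsto\ket{0^n}$, backwards); both cost $\poly(s,n)$ gates.

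Next I would build the standard linear-combination-of-unitaries gadget for $M := A_0 - \sin\theta\,A_1$, where $A_0 = H^{\otimes n}$ and $A_1 = U_S^\dagger$ (so $A_0\ket{0^n}=\ket{u}$ and $A_1\ket{0^n}=\ket{S}$). Introduce one ancilla qubit, a $\mathsf{PREPARE}$ rotation $B\ket{0} = (1+\sin\theta)^{-1/2}\bigl(\ket{0} + \sqrt{\sin\theta}\,\ket{1}\bigr)$, and $\mathsf{SELECT} = \proj{0}\otimes A_0 - \proj{1}\otimes A_1$ (unitary, being block-diagonal with unitary blocks $A_0$ and $-A_1$). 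Then $W := (B^\dagger\otimes I)\,\mathsf{SELECT}\,(B\otimes I)$ satisfies $W\ket{0}_a\ket{0^n} = \tfrac{\cos\theta}{1+\sin\theta}\,\ket{0}_a\ket{\overline S} + \ket{\perp}$, where $\ket{\perp}$ is supported entirely on the ancilla reading $\ket{1}$. Thus, conditioned on the flag ``ancilla $=\ket{0}$'', the main register holds exactly $\ket{\overline S}$, with flagged amplitude $a_0 = \cos\theta/(1+\sin\theta)$.

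Finally I would apply fixed-point amplitude amplification to this flag, using as reflections the trivial ancilla reflection $I - 2\proj{0}_a$ and the reflection about $W\ket{0}_a\ket{0^n}$ assembled from $W$, $W^\dagger$ and $I - 2\proj{0}$. The amplification depth is $O(a_0^{-1}\log(1/\eps))$, and a short case check bounds $a_0^{-1}$ by $\poly(s)$ in every regime: if $s \le N/2$ then $\cos\theta \ge 1/\sqrt2$, so $a_0 \ge (\sqrt2+1)^{-1}$ is an absolute constant and the depth is $O(\log(1/\eps))$ independently of $N$; if $s > N/2$ then $N < 2s$, whence $a_0^{-1} \le (\cos\theta)^{-1} = \sqrt{N/(N-s)} \le \sqrt{N} \le \sqrt{2s}$ is polynomial in $|S|$. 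Either way the output has its ancilla $\eps$-close to $\ket{0}$ and its main register $\eps$-close to $\ket{\overline S}$, giving $\norm{\mathcal{P}^{U_S}\ket{0}\ket{0} - \ket{\overline S}\ket{0}}^2 \le \eps$ after renaming the disentangled ancillas as the trailing $\ket{0}$. Each iteration calls $U_S$, $U_S^\dagger$ and $H^{\otimes n}$ a constant number of times, so the total cost is $\poly(|S|,n,\log\tfrac1\eps)$, i.e.\ $\poly(|S|,\log\tfrac1\eps)$ as claimed.

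The main obstacle — and the reason a naive Grover search fails here — is securing a cost that is \emph{independent of the ambient dimension} $N$ in the relevant regime $s \ll N$. A Grover iterate built from reflections about $\ket{u}$ and $\ket{S}$ rotates by $2\theta$ and would require $\Theta(1/\theta) = \Theta(\sqrt{N/s})$ steps while never landing exactly on $\ket{\overline S}$. The linear-combination construction circumvents this by exploiting the \emph{exactly known} coefficient $\sin\theta$ to cancel the $\ket{S}$ component in a single block, reducing the task to cleaning up a constant-amplitude postselection. The one point that must be verified with care is precisely that this flagged amplitude $a_0$ is bounded below well enough (a universal constant when $s\le N/2$) so that the amplification depth is $O(\log(1/\eps))$ rather than dimension-dependent, together with the standard fact that fixed-point amplification returns a genuine pure-state $\eps$-approximation with the ancilla disentangled, rather than a merely probabilistic preparation.
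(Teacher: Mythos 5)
The paper itself contains no proof of this statement: it is imported verbatim from \cite{EC:AlaMajRus20} and used as a black box (in \cref{ap:attack}, to implement the measurement $\{\Pi_0,\Pi_1\}$ efficiently). So the relevant comparison is with the cited source. Your argument is correct and self-contained, and it takes a recognizably different route. The proof in \cite{EC:AlaMajRus20} works directly in the two-dimensional span of $\ket{S}$ and $\ket{\overline S}$: starting from $H^{\otimes n}\ket{0^n}$, which sits at the \emph{exactly known} angle $\theta$ (with $\sin\theta=\sqrt{|S|/2^n}$) from the target, it uses the reflections generated by $U_S$ and $H^{\otimes n}$, correcting the half-angle mismatch in the style of exact amplitude amplification, with the $\eps$-dependence arising from synthesizing the $\theta$-dependent rotations. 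You instead cancel the $\ket{S}$ component exactly with a one-ancilla linear-combination-of-unitaries block $W$ --- your computation of the flagged amplitude $a_0=\cos\theta/(1+\sin\theta)$ is correct --- and then run fixed-point amplification on the flag. Your observation that the good subspace is one-dimensional (spanned by $\ket{0}_a\ket{\overline S}$) is exactly what makes the fixed-point guarantee yield a pure-state approximation rather than a probabilistic preparation. What your version buys: the geometry and the error analysis are cleanly decoupled (the flagged block output is \emph{exactly} proportional to $\ket{\overline S}$, so all $\eps$-dependence sits in the amplification depth $O(a_0^{-1}\log\frac1\eps)$), and only a lower bound on $a_0$ is needed. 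What the exact-rotation route buys: an error-free preparation up to single-qubit gate synthesis, and $O(1)$ reflection rounds instead of $O(\log\frac1\eps)$. Both proofs hinge on the same nontrivial ingredient, exact knowledge of $|S|$.

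Three points to tighten. First, in the regime $|S|>2^{n-1}$ your inequality $a_0^{-1}\le(\cos\theta)^{-1}$ points the wrong way: since $1+\sin\theta\ge 1$ one has $a_0^{-1}\ge(\cos\theta)^{-1}$; the correct and sufficient bound is $a_0^{-1}\le 2(\cos\theta)^{-1}=2\sqrt{2^n/(2^n-|S|)}\le 2\sqrt{2^n}\le 2\sqrt{2|S|}$, which also requires the harmless exclusion $S\ne\{0,1\}^n$ so that $\ket{\overline S}$ exists. Second, you assume $|S|$ is given exactly so that $\theta$, hence $B$ and the amplification schedule, are computable; this matches the paper's application (there $|S|=q$) and is implicit in \cite{EC:AlaMajRus20}, but it should be stated, and strictly speaking implementing $B$ and the fixed-point phases over a finite gate set costs an extra factor $\poly(\log\frac1\eps)$ via Solovay--Kitaev, which your claimed bound absorbs. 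Third, the motivational claim that a naive Grover iterate ``would require $\Theta(\sqrt{N/s})$ steps'' is off: starting at angle $\theta$ from $\ket{\overline S}$, a $2\theta$-rotation overshoots after a single step; the genuine obstruction is the one you then fix, namely that integer multiples of $2\theta$ never land exactly on the target. None of these affects the validity of the construction.
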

  
Finally, notice that in this case there is an efficient circuit $C$ that efficiently computes the state $\frac{1}{\sqrt{2}}\left(\ket{S} + \ket{\overline{S}}\right)$, by first creating the state $\ket{+}\ket{0}$, conditioned on the first qubit being $0$, create the state $\ket{\overline{S}}$, conditioned on the first qubit being $1$, create the state $\ket{S}$, and then, with $q$ extra queries to $\sigma$, flips the first qubit from $\ket{1}$ to $\ket{0}$ conditioned on the contents of the second register being in $S$.
Using circuit $C$, we can $\eps$-approximate the projection of some state $\rho$ onto $\{\Pi_0,\Pi_1\}$ in time $\poly(q,\log\frac{1}{\eps})$: append the necessary auxiliary qubits to $\rho$, perform $C^{\dagger}$ and then measure all qubits in the computational basis. Then, the output is $0$ iff  all qubits are $0$.
\end{proof}

\subsection{Generalization to multiple reprogrammings}
In this section, we discuss the generalization of the proposed attack to the case with multiple reprogrammed points. We provide an intuition why this extension works and we leave  its formal analysis for future work.

To warm up, let us consider the simpler case where both the number of potentially reprogrammed points and the number of queries is polynomial in $n$. Notice that in this case, since all reprogrammed values are chosen uniformly at random, the probability that there exists some $i,j \in [q]$ and potentially reprogrammed points  $x, x' \in \{0,1\}^n$, such that $f_k^i(x) = f_k^j(x')$ is exponentially small. Conditioning on the fact that this event does not happen, the previous analysis follows directly.

For the most general case (but considering the number of queries being $o(2^n)$, otherwise one could just use the classical attack), the same analysis holds by bounding the number of collisions (i.e., the number of $i,j \in [q]$  and $x, x' \in \{0,1\}^n$, such that $f_k^i(x) = f_k^j(x')$) and that these collisions cancel out (i.e., $\RO{O}(f_k^i(x)) \oplus \RO{O}(f_k^j(x')) \oplus \RO{O}'(f_k^i(x)) = \RO{O}'(f_k^j(x')) = 0$), which can be proven by using tail bounds. Then a more careful analysis following the outline of \Cref{lem:basic-attack} also works.

\end{document}